\theoremstyle{plain}
\newcommand{\tuple}[1]{\langle {#1}\rangle}
\newcommand{\lang}[1]{{\mathcal{L}(#1)}} %
\newcommand{\preceqq}{\preccurlyeq}
\newcommand{\cA}{\mathcal{A}}
\newcommand{\cD}{\mathcal{D}}
\newcommand{\cF}[1]{\mathsf{Min}^{#1}}
\newcommand{\cG}[1]{\mathsf{Det}^{#1}}
\newcommand{\cH}{\mathsf{H}}
\newcommand{\cI}{\mathcal{I}}
\newcommand{\cN}{\mathcal{N}}
\newcommand{\cQ}{\mathcal{Q}}
\newcommand{\rl}{\sim^{\ell}}
\newcommand{\rr}{\sim^{r}}
\newcommand{\rlN}{\rl_{\cN}}
\newcommand{\rrN}{\rr_{\cN}}
\newcommand{\rlL}{\rl_{L}}
\newcommand{\rrL}{\rr_{L}}
\newcommand{\rc}{\curlywedge}
\newcommand{\ud}{\stackrel{\rm\scriptscriptstyle def}{=}}
\newcommand{\Lra}{\Leftrightarrow}
\newcommand{\Ra}{\Rightarrow}
\newcommand{\La}{\Leftarrow}
\newcommand{\ra}{\rightarrow}
\DeclareMathOperator{\pre}{pre}
\DeclareMathOperator{\post}{post}
\DeclareMathOperator{\gfp}{gfp}
\newcommand{\RemoveAlgoNumber}{\renewcommand{\fnum@algocf}{\AlCapSty{\AlCapFnt\algorithmcfname}}}
\title{A Congruence-based Perspective on Automata Minimization Algorithms}
\titlerunning{A Congruence-based Perspective on Automata Minimization Algorithms} %
\author{Pierre Ganty}{IMDEA Software Institute, Madrid, Spain}{pierre.ganty@imdea.org}{0000-0002-3625-6003}
{Supported by the Spanish Ministry of Economy and Competitiveness project No. 
PGC2018-102210-B-I00, BOSCO -	Foundations for the development, analysis and understanding of BlOck chains and Smart COntracts, by the Madrid Regional Government project No. S2018/TCS-4339, BLOQUES -	Contratos inteligentes y Blockchains Escalables y Seguros mediante Verificación y Análisis, and by a Ramón y Cajal fellowship RYC-2016-20281.}
\author{Elena Gutiérrez}{IMDEA Software Institute, Madrid, Spain\\ Universidad Politécnica de Madrid, Spain}{elena.gutierrez@imdea.org}{0000-0001-5999-7608}{Supported by BES-2016-077136 grant from the Spanish Ministry of Economy, Industry and Competitiveness.}
\author{Pedro Valero}{IMDEA Software Institute, Madrid, Spain\\ Universidad Politécnica de Madrid, Spain}{pedro.valero@imdea.org}{0000-0001-7531-6374}{}
\authorrunning{P. Ganty and E. Gutiérrez and P.Valero} %
\keywords{Double-Reversal Method, Minimization, Automata, Congruences, Regular Languages}%
\begin{document}

\maketitle
\begin{abstract}
In this work we use a framework of finite-state automata constructions based on equivalences over words to provide new insights on the relation between well-known methods for computing the minimal deterministic automaton of a language.
\end{abstract}

\section{Introduction}
In this paper we consider the problem of building the minimal deterministic finite-state automaton generating a given regular language.
This is a classical issue that arises in many different areas of computer science such as verification, regular expression searching and natural language processing, to name a few.

There exists a number of methods, such as Hopcroft's~\cite{hopcroft1971} and Moore's algorithms~\cite{moore1956}, that receive as input a deterministic finite-state automaton (DFA for short) generating a language and build the minimal DFA for that language.
In general, these methods rely on computing a partition of the set of states of the input DFA which is then used as the set of states of the minimal DFA.

On the other hand, Brzozowski~\cite{brzozowski1962canonical} proposed the \emph{double-reversal method} for building the minimal DFA for the language generated by an input non-deterministic automaton (NFA for short).
This algorithm alternates a reverse operation and a determinization operation twice, relying on the fact that, for any given NFA \(\cN\), if the reverse automaton of \(\cN\) is deterministic then the determinization operation yields the minimal DFA for the language of \(\cN\).\linebreak
This method has been recently generalized by Brzozowski and Tamm~\cite{brzozowski2014theory}.
They showed the following \emph{necessary and sufficient condition}: the determinization operation yields the minimal DFA for the language of \(\cN\) if and only if the reverse automaton of \(\cN\) is \emph{atomic}.

It is well-known that all these approaches to the DFA minimization problem aim to compute Nerode's equivalence relation for the considered language.
However, the double-reversal method and its later generalization appear to be quite isolated from other methods such as Hopcroft's and Moore's algorithms.
This has led to different attempts to better explain Brzozowski's method~\cite{bonchi_algebra-coalgebra_2014} and its connection with other minimization algorithms~\cite{Adamek2012,Champarnaud2002,Garcia2013}.
We use a framework of automata constructions based on equivalence classes over \emph{words} to give new insights on the relation between these algorithms.

In this paper we consider equivalence relations over words on an alphabet \(\Sigma\) that induce finite partitions over \(\Sigma^*\).
Furthermore, we require that these partitions are well-behaved with respect to concatenation, namely, \emph{congruences}.
Given a regular language \(L\) and an equivalence relation satisfying these conditions, we use well-known automata constructions that yield automata generating the language \(L\)~\cite{Buchi89,Khoussainov2001}.
In this work, we consider two types of equivalence relations over words verifying the required conditions.

First, we define a \emph{language-based equivalence}, relative to a regular language, that behaves well with respect to \emph{right} concatenation, also known as the right Nerode's equivalence relation for the language.
When applying the automata construction to the right Nerode's equivalence, we obtain the minimal DFA for the given language~\cite{Buchi89,Khoussainov2001}.
In addition, we define an \emph{automata-based equivalence}, relative to an NFA.
When applying the automata construction to the automata-based equivalence we obtain a determinized version of the input NFA.

On the other hand, we also obtain counterpart automata constructions for relations that are well-behaved with respect to \emph{left }concatenation.
In this case, language-based and automata-based equivalences yield, respectively, the minimal co-deterministic automaton and a co-deterministic NFA for the language.

The relation between the automata constructions resulting from the language-based and the automata-based congruences, together with the the duality between right and left congruences, allows us to relate determinization and minimization operations.
As a result, we formulate a sufficient and necessary condition that guarantees that determinizing an automaton yields the minimal DFA.
This formulation evidences the relation between the double-reversal and the state partition refinement minimization methods.

We start by giving a simple proof of Brzozowski's double-reversal method~\cite{brzozowski1962canonical}, to later address the generalization of Brzozowski and Tamm~\cite{brzozowski2014theory}.
Furthermore, we relate the iterations of Moore's partition refinement algorithm, which works on the states of the input DFA, to the iterations of the greatest fixpoint algorithm that builds the right Nerode's partition on words.
We conclude by relating the automata constructions introduced by Brzozowski and Tamm~\cite{brzozowski2014theory}, named the \emph{átomaton} and \emph{the partial átomaton}, to the automata constructions described in this work.

\subparagraph{Structure of the paper.} %
\label{subp:structure}
After preliminaries in Section~\ref{sec:preliminaries}, we introduce in Section~\ref{sec:automataConstructions} the automata constructions based on congruences on words and establish the duality between these constructions when using right and left congruences.
Then, in Section~\ref{sec:congruences}, we define language-based and automata-based congruences and analyze the relations between the resulting automata constructions.
In Section~\ref{sec:Novel}, we study a collection of well-known constructions for the minimal DFA.
Finally, we give further details on related work in Section~\ref{sec:relatedWork}. %
For space reasons, missing proofs are deferred to the Appendix.

\section{Preliminaries}
\label{sec:preliminaries}

\subparagraph{Languages.}
Let \(\Sigma\) be a finite nonempty \emph{alphabet} of symbols.
Given a word \(w \in \Sigma^*\), \(w^R\) denotes the \emph{reverse} of \(w\).
Given a language \(L \subseteq \Sigma^*\), \(L^R ≝ \{w^R \mid w \in L\}\) denotes the \emph{reverse language} of \(L\).
We denote by \(L^c\) the \emph{complement} of the language \(L\).
The \emph{left (resp. right) quotient} of L by a word \(u\) is defined as the language \(u^{-1}L ≝ \{x \in \Sigma^* \mid ux \in L\}\) (resp. \(Lu^{-1} ≝ \{x \in \Sigma^* \mid xu \in L\}\)).

\subparagraph{Automata.}
A  \emph{(nondeterministic) finite-state automaton} (NFA for short), or simply \emph{automaton}, is a 5-tuple \(\cN = (Q, \Sigma, \delta, I, F)\), where \(Q\) is a finite set of \emph{states}, \(\Sigma\) is an alphabet, \({I\subseteq Q}\) are the \emph{initial} states, \(F \subseteq Q\) are the \emph{final} states, and \(\delta: Q \times \Sigma \ra \wp(Q)\) is the \emph{transition} function.
We denote the \emph{extended transition function} from \(\Sigma\) to \(\Sigma^*\) by \(\hat{\delta}\).
Given \(S,T \subseteq Q\), \(W^{\cN}_{S,T} \ud \{w \in \Sigma^* \mid \exists q \in S, q' \in T: q' \in \hat{\delta}(q,w)\}\).
In particular, when \(S = \{q\}\) and \(T = F\), we define \emph{the right language} of state \(q\) as \(W^{\cN}_{q,F}\).
Likewise, when \(S = I\) and \(T = \{q\}\), we define the \emph{left language} of state \(q\) as \(W^{\cN}_{I,q}\).
We define \(\post_w^{\cN}(S) \ud \{q \in Q \mid w \in W^{\cN}_{S,q}\}\) and \(\pre_w^{\cN}(S) \ud \{q \in Q \mid w \in W^{\cN}_{q,S}\}\).
In general, we omit the automaton \(\cN\) from the superscript when it is clear from the context.
We say that a state \(q\) is \emph{unreachable} if{}f \(W^{\cN}_{I,q} = \emptyset\) and we say that \(q\) is \emph{empty} if{}f \(W^{\cN}_{q,F} = \emptyset\).
Finally, note that \(\lang{\cN} = \bigcup_{q \in I} W_{q,F}^{\cN} = \bigcup_{q \in F} W_{I,q}^{\cN} = W_{I,F}^{\cN}\).

Given an NFA \(\cN = (Q, \Sigma, \delta, I, F)\), the \emph{reverse NFA} for \(\cN\), denoted by \(\cN^R\), is defined as \(\cN^R = (Q, \Sigma, \delta_r, F, I)\) where \(q \in \delta_r (q',a)\) if{}f \(q' \in \delta(q,a)\).
Clearly, \(\lang{\cN}^R = \lang{\cN^R}\).

A \emph{deterministic finite-state automaton} (DFA for short) is an NFA such that, \(I = \{q_0\}\), and, for every state \(q \in Q\) and every symbol \(a \in \Sigma\), there exists exactly one \(q' \in Q\) such that \(\delta(q,a) = q'\).
According to this definition, DFAs are always \emph{complete}, i.e., they define a transition for each state and input symbol.
In general, we denote NFAs by \(\cN\), using \(\cD\) for DFAs when the distinction is important.
A \emph{co-deterministic finite-state automata} (co-DFA for short) is an NFA \(\cN\) such that \(\cN^R\) is deterministic.
In this case, co-DFAs are always \emph{co-complete}, i.e., for each target state \(q'\) and each input symbol, there exists a source state \(q\) such that \(\delta(q,a) = q'\).
Recall that, given an NFA \(\cN = (Q, \Sigma, \delta, I, F)\), the well-known \emph{subset construction} builds a DFA \(\cD = (\wp(Q), \Sigma, \delta_{d}, \{I\}, F_{d})\) where \(F_d = \{S \in \wp(Q) \mid S \cap F \neq \emptyset\}\) and \(\delta_d(S,a) = \{q' \mid \exists q \in S, q' \in \delta(q,a)\}\) for every \(a \in \Sigma\), that accepts the same language as \(\cN\)~\cite{Ullman2003}.
Given an NFA \(\cN = (Q, \Sigma, \delta, I, F)\), we denote by \(\cN^D\) the DFA that results from applying the subset construction to \(\cN\) where only subsets (including the empty subset) that are reachable from the initial subset of \(\cN^D\) are used.
Then, \(\cN^D\) possibly contains empty states but no state is unreachable.
A DFA for the language \(\lang{\cN}\) is \emph{minimal}, denoted by \(\cN^{DM}\), if it has no unreachable states and no two states have the same right language.
The minimal DFA for a regular language is unique modulo isomorphism.

\subparagraph{Equivalence Relations and Partitions.}
Recall that an \emph{equivalence relation} on a set \(X\) is a binary relation \(\sim\) that is reflexive, symmetric and transitive.
Every equivalence relation \(\sim\) on \(X\) induces a \emph{partition} \(P_{\sim}\) of \(X\), i.e., a family \(P_{\sim} = \{B_i\}_{i\in \cI}\subseteq \wp(X)\) of subsets of \(X\), with \(\cI \subseteq \mathbb{N}\), such that: 
\begin{romanenumerate}%
\item \(B_i \neq \emptyset\) for all \(i \in \cI\);
\item \(B_i \cap B_j = \emptyset\), for all \(i,j \in \cI\) with \(i \neq j\); and
\item \(X = \bigcup_{i \in \cI} B_i\). 
\end{romanenumerate}

We say that a partition is \emph{finite} when \(\cI\) is finite.
Each \(B_i\) is called a \emph{block} of the partition.
Given \(u\in X\), then \(P_{\sim}(u)\) denotes the unique block that contains \(u\) and corresponds to the \emph{equivalence class} \(u\) w.r.t. \(\sim\), \( P_{\sim}(u) ≝ \{v \in X \mid u \sim v \}\).
This definition can be extended in a natural way to a set \(S \subseteq X\) as \(P_{\sim}(S) ≝ \bigcup_{u \in S} P_{\sim}(u)\).
We say that the partition \(P_{\sim}\) \emph{represents precisely} \(S\) if{}f \(\ P_{\sim}(S) = S\).
An equivalence relation \(\sim\) is \emph{of finite index} if{}f \(\sim\) defines a finite number of equivalence classes, i.e., the induced partition \(P_{\sim}\) is finite.
In the following, we will always consider equivalence relations of finite index, i.e., finite partitions.

Finally, denote \(Part(X)\) the set of partitions of \(X\).
We use the standard refinement ordering \(\preceq\) between partitions: let \(P_1, P_2 \in Part(X)\), then \(P_1 \preceq P_2\) if{}f for every \(B \in P_1\) there exists \(B' \in P_2\) such that \(B \subseteq B'\).
Then, we say that \(P_1\) is \emph{finer than} \(P_2\) (or equivalently, \(P_2\) is \emph{coarser than} \(P_1\)).
Given \(P_1, P_2 \in Part(X)\), define the \emph{coarsest common refinement}, denoted by \(P_1 \curlywedge P_2\), as the coarsest partition \(P \in Part(X)\) that is finer than both \(P_1\) and \(P_2\).
Likewise, define the \emph{finest common coarsening}, denoted by \(P_1 \curlyvee P_2\), as the finest partition \(P\) that is coarser than both \(P_1\) and \(P_2\).
Recall that \((Part(X), \preceq, \curlyvee, \curlywedge)\) is a complete lattice where the top (coarsest) element is \(\{X\}\) and the bottom (finest) element is \(\{\{x\}\mid x \in X\}\).

\section{Automata Constructions from Congruences}\label{sec:automataConstructions}
We will consider equivalence relations on \(\Sigma^*\) (and their corresponding partitions) with good properties w.r.t. concatenation.
An equivalence relation \(\sim\) is a \emph{right (resp. left) congruence} if{}f for all \(u,v \in Σ^*\), we have that \(u \sim v \Ra ua\sim va\), for all \(a \in \Sigma\) (resp. \(u \sim v \Ra au \sim av\)). 
We will denote right congruences (resp. left congruences) by \(\rr\) (resp. \(\rl\)).
The following lemma gives a characterization of right and left congruences.

\begin{restatable}{lemmaR}{CongruencePbwComplete}\label{lemma:CongruencePbwComplete}
The following properties hold:
\begin{enumerate}
\item \(\rr\) is a right congruence if{}f \(P_{\rr}(v)u \subseteq P_{\rr}(vu)\), for all \(u,v \in \Sigma^*\).
\item\(\rl\) is a left congruence if{}f \(uP_{\rl}(v) \subseteq P_{\rl}(uv)\), for all \(u,v \in \Sigma^*\).
\end{enumerate}
\end{restatable}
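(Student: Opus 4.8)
The plan is to prove each biconditional in two directions, and to observe that the second claim is the exact left/right mirror image of the first: I would carry out the argument in full only for right congruences (item~1) and then note that item~2 follows by the symmetric argument, replacing right concatenation by left concatenation throughout. The whole content of the characterization is that the single-symbol closure property appearing in the definition of a congruence is equivalent to its arbitrary-word analogue phrased at the level of blocks, so the heart of the matter is a short induction on word length.

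For the easy direction (\(\Leftarrow\)), I would assume the inclusion \(P_{\rr}(v)u \subseteq P_{\rr}(vu)\) holds for all \(u,v \in \Sigma^*\) and simply specialize \(u\) to a single letter \(a \in \Sigma\). Then, given \(x \rr y\), i.e.\ \(x \in P_{\rr}(y)\), I immediately obtain \(xa \in P_{\rr}(y)a \subseteq P_{\rr}(ya)\), that is \(xa \rr ya\), which is precisely the defining property of a right congruence.

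For the forward direction (\(\Rightarrow\)), I would assume \(\rr\) is a right congruence and prove \(P_{\rr}(v)u \subseteq P_{\rr}(vu)\) for all \(v\) by induction on \(|u|\). The base case \(u = \varepsilon\) is trivial, since \(P_{\rr}(v)\varepsilon = P_{\rr}(v) = P_{\rr}(v\varepsilon)\). For the inductive step I would write \(u = u'a\) with \(a \in \Sigma\) and take an arbitrary \(w \in P_{\rr}(v)\); the induction hypothesis applied to \(u'\) gives \(wu' \in P_{\rr}(vu')\), i.e.\ \(wu' \rr vu'\), and then a single application of the congruence property yields \((wu')a \rr (vu')a\), that is \(wu \in P_{\rr}(vu)\), as required.

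I do not expect any genuine obstacle here: the only point requiring care is the bookkeeping in the inductive step, namely appending the fresh letter on the correct side so that the single-symbol congruence property applies directly (on the right for item~1, on the left for item~2). Once the induction is set up with the decomposition \(u = u'a\) for right congruences and dually \(u = au'\) for left congruences, both halves of the lemma close immediately.
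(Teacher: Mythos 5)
Your proof is correct and takes essentially the same route as the paper's: the backward direction specializes the block inclusion \(P_{\rr}(v)u \subseteq P_{\rr}(vu)\) to a single letter, and the forward direction amounts to extending the one-letter congruence property to arbitrary words \(u\). The only difference is that the paper asserts \(v \rr \tilde{v} \Ra vu \rr \tilde{v}u\) in one step, leaving the induction on \(|u|\) implicit, whereas you carry that induction out explicitly --- a minor gain in rigor, not a different approach.
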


\vspace{4pt}

Given a right congruence \(\rr\) and a regular language \(L \subseteq \Sigma^*\) such that \(P_{\rr}\) represents precisely \(L\), i.e., \(P_{\rr}(L) = L\), the following automata construction recognizes exactly the language \(L\)~\cite{Khoussainov2001}.

\begin{definition}[Automata construction \(\cH^{r}(\rr, L)\)]
\label{def:right-const}
Let \(\rr\) be a right congruence and let \(P_{\rr}\) be the partition induced by \(\rr\).
Let \(L \subseteq \Sigma^*\) be a language.
Define the automaton \(\cH^{r}(\rr, L)=\linebreak (Q, \Sigma, \delta, I, F)\) where \(Q = \{P_{\rr}(u) \mid u \in \Sigma^*\}\), \(I = \{P_{\rr}(\varepsilon) \}\), \(F = \{P_{\rr}(u) \mid u \in L\}\), and \(\delta(P_{\rr}(u), a) = P_{\rr}(v)\) if{}f \(P_{\rr}(u)a \subseteq P_{\rr}(v)\), for all \(u,v \in \Sigma^*\) and \(a \in \Sigma\).
\end{definition}

\begin{remark} 
\label{remark:deterministic}
Note that \(\cH^{r}(\rr, L)\) is \emph{finite} since we assume \(\rr\) is of finite index.
Note also that \(\cH^{r}(\rr, L)\) is a complete \emph{deterministic} finite-state automaton since, for each \(u \in \Sigma^*\) and \(a \in \Sigma\), there exists \emph{exactly one} block \(P_{\rr}(v)\) such that \(P_{\rr}(u)a \subseteq P_{\rr}(v)\), which is \(P_{\rr}(ua)\).
Finally, observe that \(\cH^{r}(\rr, L)\) possibly contains empty states but no state is unreachable.
\end{remark}

\begin{restatable}{lemmaR}{HrGeneratesL}
\label{lemma: HrGeneratesL}
Let \(\rr\) be a right congruence and let \(L\subseteq \Sigma^*\) be a language such that \(P_{\rr}(L) = L\).
Then \(\lang{\cH^{r}(\rr, L)} = L\).
\end{restatable}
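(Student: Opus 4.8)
The plan is to characterize acceptance by first computing how the extended transition function acts on the initial state, and then to read off membership directly from $L$.

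First I would prove, by induction on $|w|$, the identity $\hat{\delta}(P_{\rr}(\varepsilon), w) = P_{\rr}(w)$ for every $w \in \Sigma^*$. The base case $w = \varepsilon$ is immediate. For the inductive step, writing $w = xa$ with $a \in \Sigma$, I would combine the inductive hypothesis $\hat{\delta}(P_{\rr}(\varepsilon), x) = P_{\rr}(x)$ with the definition of $\delta$ and, crucially, the determinism recorded in Remark~\ref{remark:deterministic}, which guarantees that $\delta(P_{\rr}(x), a)$ is the \emph{unique} block $P_{\rr}(xa)$. This gives $\hat{\delta}(P_{\rr}(\varepsilon), xa) = \delta(P_{\rr}(x), a) = P_{\rr}(xa)$, as required.

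With this identity in hand, a word $w$ is accepted by $\cH^{r}(\rr, L)$ if{}f $\hat{\delta}(P_{\rr}(\varepsilon), w) = P_{\rr}(w) \in F$, so it remains to show $P_{\rr}(w) \in F$ if{}f $w \in L$. For the backward direction, if $w \in L$ then $P_{\rr}(w) \in F$ directly from the definition $F = \{P_{\rr}(u) \mid u \in L\}$. For the forward direction, if $P_{\rr}(w) \in F$ then $P_{\rr}(w) = P_{\rr}(u)$ for some $u \in L$, so $w \in P_{\rr}(u) \subseteq P_{\rr}(L)$; here the hypothesis $P_{\rr}(L) = L$ closes the argument and yields $w \in L$.

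The only delicate point is the inductive step, and it is delicate only insofar as it relies on the construction being genuinely deterministic, so that $\hat{\delta}$ is well defined and single-valued. That fact is already established in Remark~\ref{remark:deterministic} as a consequence of $\rr$ being a right congruence of finite index, so once it is invoked the remaining steps are routine. The hypothesis $P_{\rr}(L) = L$ is used exactly once, in the forward direction above, where it is precisely what prevents $F$ from accepting spurious words whose $\rr$-class happens to meet $L$ without lying inside it.
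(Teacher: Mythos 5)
Your proof is correct and takes essentially the same route as the paper's: the paper's central identity \(W_{I,P_{\rr}(u)}^{\cH} = P_{\rr}(u)\), proved there by two separate inductions (one per inclusion), is exactly your single inductive claim \(\hat{\delta}(P_{\rr}(\varepsilon),w) = P_{\rr}(w)\) restated in terms of left languages, and both arguments conclude identically via the definition of \(F\) together with the hypothesis \(P_{\rr}(L)=L\). The only difference is presentational: by working with the single-valued \(\hat{\delta}\) of the complete DFA you fold the paper's two inclusion proofs into one induction, which is a harmless streamlining.
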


Due to the left-right duality between \(\rl\) and \(\rr\), we can give a similar automata construction such that, given a left congruence \(\rl\) and a language \(L \subseteq \Sigma^*\) with \(P_{\rl}(L) = L\), recognizes exactly the language \(L\).

\begin{definition}[Automata construction \(\cH^{\ell}(\rl, L)\)]
\label{def:left-const}
Let \(\rl\) be a left congruence and let \(P_{\rl}\) be the partition induced by \(\rl\).
Let \(L \subseteq \Sigma^*\) be a language.
Define the automaton \(\cH^{\ell}(\rl, L)=\linebreak (Q, \Sigma, \delta, I, F)\) where \(Q = \{P_{\rl}(u) \mid u \in \Sigma^*\}\), \(I = \{ P_{\rl}(u) \mid u \in L \}\), \(F = \{P_{\rl}(\varepsilon)\}\), and \( P_{\rl}(v) \in \delta(P_{\rl}(u), a)\)  if{}f \(aP_{\rl}(v) \subseteq P_{\rl}(u)\), for all \(u,v \in \Sigma^*\) and \(a \in \Sigma\).
\end{definition}

\begin{remark} 
\label{remark:co-deterministic}
In this case, \(\cH^{\ell}(\rl, L)\) is a co-complete \emph{co-deterministic} finite-state automaton since, for each \(v \in \Sigma^*\) and \(a \in \Sigma\), there exists \emph{exactly one} block \(P_{\rl}(u)\) such that \linebreak\(aP_{\rl}(v) \subseteq P_{\rl}(u)\), which is \(P_{\rl}(av)\).
Finally, observe that \(\cH^{\ell}(\rl, L)\) possibly contains unreachable states but no state is empty.
\end{remark}

\begin{restatable}{lemmaR}{HlgeneratesL}
\label{lemma:HlgeneratesL}
Let \(\rl\) be a left congruence and let \(L\subseteq \Sigma^*\) be a language such that \(P_{\rl}(L) = L\).
Then \(\lang{\cH^{\ell}(\rl, L)} = L\).
\end{restatable}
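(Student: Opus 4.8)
The plan is to prove the statement directly, as the exact left--right mirror of Lemma~\ref{lemma: HrGeneratesL}; alternatively one could derive it from that lemma by a reversal argument (exhibiting \(\cH^{\ell}(\rl, L)\) as the reverse of a suitable \(\cH^{r}\) on \(L^R\) and invoking \(\lang{\cN}^R = \lang{\cN^R}\)), but the direct argument is short enough to carry out on its own. The key observation I would establish first is that, for every word \(w = a_1 a_2 \cdots a_n \in \Sigma^*\), the automaton \(\cH^{\ell}(\rl, L)\) admits \emph{exactly one} run that reads \(w\) and ends in the unique final state \(P_{\rl}(\varepsilon)\), and that this run is forced to start in the state \(P_{\rl}(w)\). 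Granting this, acceptance of \(w\) reduces to the membership test \(P_{\rl}(w) \in I\); since \(I = \{P_{\rl}(u) \mid u \in L\}\), this holds if{}f \(w \in P_{\rl}(L)\), and the hypothesis \(P_{\rl}(L) = L\) then yields \(w \in \lang{\cH^{\ell}(\rl, L)} \Lra w \in L\), which is the claim.

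To prove the key observation I would argue backwards from the final state, exploiting that \(\cH^{\ell}(\rl, L)\) is co-deterministic (Remark~\ref{remark:co-deterministic}). Consider any run \(q_0 \xrightarrow{a_1} q_1 \to \cdots \xrightarrow{a_n} q_n\) with \(q_n = P_{\rl}(\varepsilon)\) and read it from right to left. Writing \(q_i = P_{\rl}(v_i)\), the transition \(q_{i-1} \xrightarrow{a_i} q_i\) unfolds, by definition of \(\delta\), to \(a_i\, P_{\rl}(v_i) \subseteq q_{i-1}\); co-determinism then forces \(q_{i-1}\) to be the unique block containing \(a_i\, P_{\rl}(v_i)\), namely \(P_{\rl}(a_i v_i)\). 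Starting from \(v_n = \varepsilon\), a straightforward backward induction gives \(q_{n-k} = P_{\rl}(a_{n-k+1} \cdots a_n)\) for all \(k\), and in particular \(q_0 = P_{\rl}(a_1 \cdots a_n) = P_{\rl}(w)\). Hence the state sequence of any such run is uniquely determined.

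It remains to check that this forced sequence of states is an \emph{actual} run, i.e.\ that every required transition genuinely exists. For the step \(q_{i-1} \xrightarrow{a_i} q_i\) I must verify the defining containment \(a_i\, P_{\rl}(a_{i+1}\cdots a_n) \subseteq P_{\rl}(a_i a_{i+1} \cdots a_n)\); this is precisely an instance of the left-congruence characterization \(u P_{\rl}(v) \subseteq P_{\rl}(uv)\) from Lemma~\ref{lemma:CongruencePbwComplete}(2), taking \(u = a_i\) and \(v = a_{i+1}\cdots a_n\). Thus the candidate run exists and, by the previous paragraph, is unique. The main obstacle, and the place to be careful, is exactly this backward bookkeeping: keeping straight the direction of the transition relation (recall that a transition labelled \(a\) strips the leading symbol, sending \(P_{\rl}(av)\) back to \(P_{\rl}(v)\)) and separating the two obligations — that the state sequence is \emph{forced} by co-determinism, and that each transition is genuinely \emph{present} by Lemma~\ref{lemma:CongruencePbwComplete}(2). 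Once these are disentangled, the reduction to \(P_{\rl}(L) = L\) is routine.
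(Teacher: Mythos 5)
Your proof is correct and is essentially the paper's own argument in a different packaging: your forced-run uniqueness (from co-determinism, i.e.\ disjointness of blocks) is exactly the paper's inclusion \(W^{\cH}_{P_{\rl}(u),F} \subseteq P_{\rl}(u)\), and your existence of the canonical run is the reverse inclusion \(P_{\rl}(u) \subseteq W^{\cH}_{P_{\rl}(u),F}\), both resting on Lemma~\ref{lemma:CongruencePbwComplete}(2) and the same induction on word length. The concluding step — acceptance if{}f \(P_{\rl}(w)\) is initial if{}f \(w \in P_{\rl}(L) = L\) — likewise matches the paper's final computation of \(\lang{\cH^{\ell}(\rl,L)}\) as \(\bigcup_{w \in L} P_{\rl}(w)\).
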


Lemma~\ref{lemma:AutomataPartitionleftRightReverse} shows that \(\cH^\ell\) and \(\cH^r\) inherit the left-right duality between \(\rl\) and \(\rr\).

\begin{restatable}{lemmaR}{AutomataPartitionleftRightReverse}
\label{lemma:AutomataPartitionleftRightReverse}
Let \(\rr\) and \(\rl\) be a right and left congruence respectively, and let \(L \subseteq \Sigma^*\) be a language.
If the following property holds
\begin{equation}\label{eq:AutomataPartitionleftRightReverse}
u \rr v \Lra u^R \rl v^R
\end{equation}
then \(\cH^{r}(\rr, L) \) is isomorphic to \( \left(\cH^{\ell}(\rl, L^R)\right)^R\).
\end{restatable}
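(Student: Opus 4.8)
The plan is to exhibit an explicit isomorphism $\varphi$ from $\cH^{r}(\rr, L)$ to $\left(\cH^{\ell}(\rl, L^R)\right)^R$, defined on states by $\varphi(P_{\rr}(u)) = P_{\rl}(u^R)$, and then to verify that it respects the initial states, the final states, and the transition relation.

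First I would check that $\varphi$ is a well-defined bijection, which is where the duality hypothesis~\eqref{eq:AutomataPartitionleftRightReverse} enters. Since $P_{\rr}(u) = P_{\rr}(v)$ iff $u \rr v$ iff $u^R \rl v^R$ iff $P_{\rl}(u^R) = P_{\rl}(v^R)$, the map is simultaneously well-defined and injective; surjectivity is immediate, as every block $P_{\rl}(w)$ is the image of $P_{\rr}(w^R)$.

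Next I would make the components of $\left(\cH^{\ell}(\rl, L^R)\right)^R$ explicit. By Definition~\ref{def:left-const}, the automaton $\cH^{\ell}(\rl, L^R)$ has final state set $\{P_{\rl}(\varepsilon)\}$ and initial state set $\{P_{\rl}(w) \mid w \in L^R\}$; reversing swaps these, so $\left(\cH^{\ell}(\rl, L^R)\right)^R$ has initial states $\{P_{\rl}(\varepsilon)\}$ and final states $\{P_{\rl}(w) \mid w \in L^R\}$. Comparing with $I = \{P_{\rr}(\varepsilon)\}$ and $F = \{P_{\rr}(u) \mid u \in L\}$ of $\cH^{r}(\rr, L)$, and using $\varepsilon^R = \varepsilon$ together with the fact that $u$ ranges over $L$ exactly when $u^R$ ranges over $L^R$, one sees that $\varphi$ carries initial states to initial states and final states to final states.

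The heart of the argument is transition preservation. By Remark~\ref{remark:deterministic}, $\cH^{r}(\rr, L)$ is deterministic with $\delta(P_{\rr}(u), a) = P_{\rr}(ua)$. For the reversed automaton I would first determine the transitions of $\cH^{\ell}(\rl, L^R)$: by Remark~\ref{remark:co-deterministic} the unique $a$-predecessor relation is $P_{\rl}(w) \in \delta(P_{\rl}(aw), a)$, so reversing produces a \emph{deterministic} transition $\delta_r(P_{\rl}(w), a) = P_{\rl}(aw)$. It then remains to match the two sides through $\varphi$: applying $\varphi$ to $\delta(P_{\rr}(u), a) = P_{\rr}(ua)$ gives $P_{\rl}((ua)^R)$, while the reversed transition out of $\varphi(P_{\rr}(u)) = P_{\rl}(u^R)$ on $a$ gives $P_{\rl}(a\,u^R)$; these coincide because $(ua)^R = a\,u^R$. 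I expect this last step to be the main obstacle: correctly computing the reverse transition function of $\cH^{\ell}$ and confirming its determinacy (via Remark~\ref{remark:co-deterministic}) is precisely what reduces the isomorphism check to the word identity $(ua)^R = a\,u^R$ rather than to an inclusion between blocks.
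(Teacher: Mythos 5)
Your proposal is correct and takes essentially the same route as the paper's own proof: both exhibit the state bijection \(P_{\rr}(u) \mapsto P_{\rl}(u^R)\), match initial and final states via \(\varepsilon^R = \varepsilon\) and \(u \in L \Lra u^R \in L^R\), and reduce transition preservation to the fact that the \(a\)-transition of \(\cH^{\ell}(\rl, L^R)\) runs from \(P_{\rl}(aw)\) to \(P_{\rl}(w)\), so that reversal yields the deterministic map \(P_{\rl}(w) \mapsto P_{\rl}(aw)\). If anything, you are more explicit than the paper in invoking hypothesis~\eqref{eq:AutomataPartitionleftRightReverse} to establish well-definedness and injectivity of \(\varphi\) (the paper uses it only implicitly), and your single-direction transition check, justified by determinism of both automata, cleanly replaces the paper's separate forward and backward arguments.
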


\section{Language-based Congruences and their Approximation using NFAs}\label{sec:congruences}
\label{sec: Languages}

Given a language \(L \subseteq \Sigma^*\), we recall the following equivalence relations on \(\Sigma^*\), which are often denoted as \emph{Nerode's equivalence relations} (e.g., see \cite{Khoussainov2001}).

\begin{definition}[Language-based Equivalences]\label{def:language}
Let \(u,v \in \Sigma^*\) and let \(L \subseteq \Sigma^*\) be a language. 
Define:
\begin{align}
u \sim^{r}_{L} v  & \Lra u^{-1}L = v^{-1}L & \quad \text{\emph{Right-}language-based Equivalence}\label{eq:Rlanguage} \\
u \sim^{\ell}_{L} v  & \Lra Lu^{-1} = Lv^{-1}  & \quad \text{\emph{Left-}language-based Equivalence} \label{eq:Llanguage}
\end{align}
\end{definition}

Note that the right and left language-based equivalences defined above are, respectively, right and left \emph{congruences} (for a proof, see Lemma~\ref{lemma:lang-cong} in the Appendix).
Furthermore, when \(L\) is a regular language, \(\rrL\) and \(\rlL\) are of \emph{finite index} \cite{Buchi89,Khoussainov2001}.
Since we are interested in congruences of finite index (or equivalently, finite partitions), we will always assume that \(L\) is a regular language over \(\Sigma\).

The following result states that, given a language \(L\), the right Nerode's equivalence induces the coarsest partition of \(\Sigma^*\) which is a right congruence and precisely represents \(L\).
\begin{lemma}[de Luca and Varricchio~\cite{deluca2011}]
\label{lemma:Varricchio}
Let \(L\subseteq \Sigma^*\) be a regular language. Then,
\[P_{\rrL} = \bigcurlyvee \{P_{\rr} \mid \; \rr \text{is a right congruence and } P_{\rr}(L) = L\} \enspace .\]
\end{lemma}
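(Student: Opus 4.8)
The plan is to show that $P_{\rrL}$ is not merely \emph{one} of the partitions appearing in the family on the right-hand side, but its \emph{coarsest} member; once this is established, the identification with the join $\bigcurlyvee$ follows by elementary lattice reasoning. Throughout, regularity of $L$ is used only to stay within the finite-index congruences that are the standing assumption of the paper.

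First I would verify that $P_{\rrL}$ belongs to the family, i.e.\ that $\rrL$ is a right congruence with $P_{\rrL}(L)=L$. That $\rrL$ is a right congruence is recorded just after Definition~\ref{def:language}, so only saturation remains. Since $u\in L$ holds exactly when $\varepsilon\in u^{-1}L$, any two words with the same right quotient are either both in $L$ or both outside it; hence every block of $P_{\rrL}$ lies entirely inside or entirely outside $L$, which is precisely $P_{\rrL}(L)=L$.

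The core step is to prove that every right congruence $\rr$ with $P_{\rr}(L)=L$ refines $\rrL$, that is $P_{\rr}\preceq P_{\rrL}$. Fix such a $\rr$ and assume $u\rr v$. Applying the defining property of a right congruence one symbol at a time, formally by induction on $|x|$ (and equivalently via the block characterisation of Lemma~\ref{lemma:CongruencePbwComplete}), yields $ux\rr vx$ for every $x\in\Sigma^*$. Because $P_{\rr}(L)=L$, the words $ux$ and $vx$ lie in the same block of $P_{\rr}$ and are therefore either both in $L$ or both not; hence $ux\in L\Lra vx\in L$ for all $x$, i.e.\ $u^{-1}L=v^{-1}L$, which is $u\rrL v$. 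Thus $P_{\rr}\preceq P_{\rrL}$.

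Finally I would close the argument using the characterisation of $\bigcurlyvee$ as the least upper bound in $(\Part(\Sigma^*),\preceq)$. The previous step shows that $P_{\rrL}$ is an upper bound of the family, so $\bigcurlyvee\{P_{\rr}\mid \rr\text{ a right congruence},\ P_{\rr}(L)=L\}\preceq P_{\rrL}$; conversely, $P_{\rrL}$ is itself a member of the family and hence lies below the join, giving the reverse inequality. The two together yield the claimed equality. The only delicate points are the induction establishing $ux\rr vx$ and the careful use of the saturation hypothesis $P_{\rr}(L)=L$; once $P_{\rrL}$ is recognised as the maximum of the family, everything else is routine.
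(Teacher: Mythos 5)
Your proof is correct. There is, however, nothing in the paper to compare it against: the paper states this lemma as an imported result of de Luca and Varricchio and gives no proof of it, in the body or in the appendix. Your argument is the standard self-contained one, and its three steps are exactly what is needed: (i) membership of \(P_{\rrL}\) in the family (right congruence plus saturation \(P_{\rrL}(L)=L\); both facts are also recorded in the paper as Lemma~\ref{lemma:lang-cong}, so you could cite that instead of reproving them); (ii) maximality, via the pointwise implication \(u \rr v \Ra u \rrL v\), obtained by iterating the one-symbol congruence property to get \(ux \rr vx\) and then using \(P_{\rr}(L)=L\) to conclude \(ux \in L \Lra vx \in L\) for all \(x\), i.e.\ \(u^{-1}L = v^{-1}L\); and (iii) the lattice observation that a family containing one of its own upper bounds has that element as its join. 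Note that what you prove is slightly stronger than the bare equality with \(\bigcurlyvee\): the join is attained as a \emph{maximum} of the family, and this stronger reading is precisely what the paper later relies on (in the proof of Theorem~\ref{theorem:minimalifreverseatomic} it deduces \(P_{\rrN} \preceqq P_{\rrL}\) from this lemma). So your proposal would serve as a valid replacement for the external citation.
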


In a similar way, one can prove that the same property holds for the left Nerode's equivalence.
Therefore, as we shall see, applying the construction \(\cH\) to these equivalences yields minimal automata.
However, computing them becomes unpractical since languages are possibly infinite, even if they are regular.
Thus, we will consider congruences based on the states of the NFA-representation of the language which induce finer partitions of \(\Sigma^*\) than Nerode's equivalences.
In this sense, we say that the automata-based equivalences \emph{approximate} Nerode's equivalences.

\begin{definition}[Automata-based Equivalences] \label{def:automataEquiv}
Let \(u,v \in\Sigma^*\) and let \(\cN = (Q, \Sigma, \delta, I, F)\) be an NFA.
Define:
\begin{align}
u \rrN v & \Lra \post^{\cN}_u(I) = \post^{\cN}_v(I) & \quad \text{\emph{Right-}automata-based Equivalence}\label{eq:RState} \\
u \rlN v & \Lra \pre^{\cN}_u(F) = \pre^{\cN}_v(F) & \quad \text{\emph{Left-}automata-based Equivalence} \label{eq:LState} 
\end{align}
\end{definition}

Note that the right and left automata-based equivalences defined above are, respectively, right and left \emph{congruences} (for a proof, see Lemma~\ref{lemma:automata-cong} in the Appendix).
Furthermore, they are of \emph{finite index} since each equivalence class is represented by a subset of states of \(\cN\).

The following result gives a sufficient and necessary condition for the language-based (Definition~\ref{def:language}) and the automata-based equivalences (Definition~\ref{def:automataEquiv}) to coincide.

\begin{restatable}{lemmaR}{languageEqualPost}
\label{lemma:languageEqualPost}
Let \(\cN = (Q,\Sigma,\delta,I,F)\) be an automaton with \(L = \lang{\cN}\).
Then,
\begin{equation}\label{eq:RightSetsDisjoint}
\mathord{\rrL} = \mathord{\rrN} \enspace \text{ if{}f } \enspace \forall u,v \in \Sigma^*, \; W_{\post_{u}^{\cN}(I),F}^{\cN} = W_{\post_{v}^{\cN}(I),F}^{\cN} \Lra \post_{u}^{\cN}(I) = \post_{v}^{\cN}(I)\enspace  .
\end{equation}

\end{restatable}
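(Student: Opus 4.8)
The plan is to collapse the entire biconditional onto a single identity that reinterprets the right language of a determinized state as a left quotient of $L$, after which both directions become definitional bookkeeping. Concretely, I would first prove the key identity
\[
W^{\cN}_{\post^{\cN}_u(I),\,F} = u^{-1}L \qquad \text{for every } u \in \Sigma^* .
\]
The argument is a path decomposition. A word $w$ lies in $u^{-1}L$ exactly when $uw \in L = W^{\cN}_{I,F}$, i.e.\ when some run of $\cN$ reads $uw$ from an initial state to a final state. Splitting such a run at the state $q$ reached after the prefix $u$, this holds iff there is a state $q$ with $q \in \hat{\delta}(q_0,u)$ for some $q_0 \in I$ — equivalently $q \in \post^{\cN}_u(I)$ — and $q_f \in \hat{\delta}(q,w)$ for some $q_f \in F$ — equivalently $w \in W^{\cN}_{q,F}$. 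Quantifying over $q \in \post^{\cN}_u(I)$ is precisely the statement $w \in W^{\cN}_{\post^{\cN}_u(I),F}$, which yields the identity. This step uses only the definitions of $\post$ and $W$ together with the set-level composition property of the extended transition function $\hat\delta$.

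With the identity in hand, the right-hand side condition rewrites as
\[
\forall u,v \in \Sigma^* : \quad u^{-1}L = v^{-1}L \;\Lra\; \post^{\cN}_u(I) = \post^{\cN}_v(I) ,
\]
which is by definition $u \rrL v \Lra u \rrN v$ for all $u,v$, that is, $\rrL = \rrN$. I would then prove the two implications directly. For the forward direction, assuming $\rrL = \rrN$, any $u,v$ with $W_{\post_u(I),F} = W_{\post_v(I),F}$ satisfy $u^{-1}L = v^{-1}L$ by the identity, hence $u \rrL v$, hence $u \rrN v$, hence $\post_u(I) = \post_v(I)$; the converse implication inside the inner biconditional is immediate, since equal state-sets have equal right languages.

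For the backward direction I would first record the unconditional inclusion $\rrN \subseteq \rrL$: if $\post_u(I) = \post_v(I)$ then trivially $W_{\post_u(I),F} = W_{\post_v(I),F}$, so $u^{-1}L = v^{-1}L$ by the identity, i.e.\ $u \rrL v$. Thus only $\rrL \subseteq \rrN$ requires the hypothesis, and this is exactly what the assumed condition supplies: if $u \rrL v$, i.e.\ $u^{-1}L = v^{-1}L$, the identity gives $W_{\post_u(I),F} = W_{\post_v(I),F}$, whence $\post_u(I) = \post_v(I)$ by the condition, i.e.\ $u \rrN v$. Combining the two inclusions gives $\rrL = \rrN$.

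I expect the only genuine content to be the identity $W^{\cN}_{\post_u(I),F} = u^{-1}L$; everything downstream is a definitional unfolding. The one point worth stating explicitly is that the $\La$ half of the inner biconditional (equivalently $\rrN \subseteq \rrL$, reflecting that the automata-based equivalence always refines the language-based one) holds for free, so the substantive hypothesis is really the single implication $W_{\post_u(I),F} = W_{\post_v(I),F} \Ra \post_u(I) = \post_v(I)$. The only place demanding care is the set-level composition of $\hat\delta$ at the split point, where one must match the existential quantifier over the intermediate state against the definition of $\post^{\cN}_u(I)$ exactly.
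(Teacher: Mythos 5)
Your proof is correct and takes essentially the same approach as the paper: the paper's entire proof is a three-line chain of equivalences whose key step is precisely your identity \(W^{\cN}_{\post^{\cN}_u(I),F} = u^{-1}L\) (justified there merely as ``Definition of quotient of \(L\)''), followed by the same definitional unfolding of \(\rrL\) and \(\rrN\). You simply make explicit the run-splitting argument for that identity and the two inclusions (including the free direction \(\mathord{\rrN} \subseteq \mathord{\rrL}\)) that the paper compresses into a single equivalence chain.
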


\subsection{Automata Constructions}

In what follows, we will use \(\cF{}\) and \(\cG{}\) to denote the construction \(\cH\) when applied, respectively, to the language-based congruences induced by a regular language and the automata-based congruences induced by an NFA.

\begin{definition}\label{def:FG}
Let \(\cN\) be an NFA generating the language \(L = \lang{\cN}\).
Define:
\begin{align*}
\cF{r}(L) & ≝  \cH^{r}(\rrL, L) & \cG{r}(\cN) & ≝ \cH^{r}(\rrN, L) \\
\cF{\ell}(L) & ≝  \cH^{\ell}(\rlL, L) & \cG{\ell}(\cN) & ≝  \cH^{\ell}(\rlN, L) \enspace .
\end{align*}
\end{definition}

Given an NFA \(\cN\) generating the language \(L=\lang{\cN}\), all constructions in the above definition yield automata generating \(L\).
However, while the constructions using the right congruences result in DFAs, the constructions relying on left congruences result in co-DFAs.
Furthermore, since the pairs of relations~\eqref{eq:Rlanguage}-\eqref{eq:Llanguage} and~\eqref{eq:RState}-\eqref{eq:LState}, from Definition~\ref{def:language} and~\ref{def:automataEquiv} respectively, are dual, i.e., they satisfy the hypothesis of Lemma~\ref{lemma:AutomataPartitionleftRightReverse}, it follows that \(\cF{\ell}(L)\) is isomorphic to \((\cF{r}(L^R))^R\) and \(\cG{\ell}(\cN)\) is isomorphic to \((\cG{r}(\cN^R))^R\).

On the other hand, since \(\cF{r}\) relies on the language-based congruences, the resulting DFA is minimal, which is not guaranteed to occur with \(\cG{r}\).
This easily follows from the fact that the states of the automata constructions are the equivalence classes of the given congruences and there is no right congruence (representing \(L\) precisely)  that is coarser than the right Nerode's equivalence (see Lemma \ref{lemma:Varricchio}).

Finally, since every co-deterministic automaton satisfies the right-hand side of Equation~\eqref{eq:RightSetsDisjoint}, it follows that determinizing (\(\cG{r}\)) a co-deterministic automaton (\(\cG{\ell}(\cN)\)) results in the minimal DFA (\(\cF{r}(\lang{\cN})\)), as already proved by Sakarovitch~\cite[Proposition 3.13]{sakarovitch2009elements}.

We formalize all these notions in Theorem~\ref{theoremF}.
Finally, Figure~\ref{Figure:diagramAutomata} summarizes all these well-known connections between the automata constructions given in Definition \ref{def:FG}.

\begin{restatable}{theoremR}{theoremF}\label{theoremF}
Let \(\cN\) be an NFA generating language \(L = \lang{\cN}\).
Then the following properties hold:
\begin{alphaenumerate}%
\item \(\lang{\cF{r}(L)} =\lang{\cF{\ell}(L)} = L = \lang{\cG{r}(\cN)} = \lang{\cG{\ell}(\cN)}\). \label{lemma:language-F}
\item \(\cF{r}(L)\) is isomorphic to the minimal deterministic automaton for \(L\).\label{theorem:minimalDFAlanguage}
\item \(\cG{r}(\cN)\) is isomorphic to \(\cN^{D}\).\label{lemma:rightNDet}
\item \(\cF{\ell}(L)\) is isomorphic to \((\cF{r}(L^R))^R\). \label{lemma:FlisomorphicRfrR}
\item \(\cG{\ell}(\cN)\) is isomorphic to \((\cG{r}(\cN^R))^R\).\label{lemma:AlRequalArNR}
\item \(\cG{r}(\cG{\ell}(\cN))\) is isomorphic to \(\cF{r}(L)\).\label{lemma:LS+RS=RN}
\end{alphaenumerate}
\end{restatable}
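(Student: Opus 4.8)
The plan is to prove the slightly more general fact that applying \(\cG{r}\), i.e.\ determinization, to \emph{any} co-deterministic automaton recognizing \(L\) already produces \(\cF{r}(L)\), and then to instantiate this with the concrete automaton \(\mathcal{M} \ud \cG{\ell}(\cN)\). By part~\ref{lemma:language-F} we have \(\lang{\mathcal{M}} = L\), and by Remark~\ref{remark:co-deterministic} the automaton \(\mathcal{M}\) is co-deterministic and has no empty states. Since \(\cG{r}(\mathcal{M}) = \cH^r(\rr_{\mathcal{M}}, L)\) and \(\cF{r}(L) = \cH^r(\rrL, L)\) by Definition~\ref{def:FG}, and the construction \(\cH^r\) is completely determined by its congruence and its language, it suffices to establish the single equality of congruences \(\rr_{\mathcal{M}} = \rrL\).

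To obtain this equality I would invoke Lemma~\ref{lemma:languageEqualPost} applied to \(\mathcal{M}\), so the real work is to verify that \(\mathcal{M}\) satisfies the right-hand side of Equation~\eqref{eq:RightSetsDisjoint}. The crucial structural fact is that in a co-deterministic automaton the right languages of distinct states are pairwise \emph{disjoint}: by definition \(\mathcal{M}^R\) is deterministic, hence the left languages of the states of \(\mathcal{M}^R\) are pairwise disjoint; since \(W_{q,F}^{\mathcal{M}}\) is exactly the reverse of the left language of \(q\) in \(\mathcal{M}^R\) and reversal is a bijection on \(\Sigma^*\), disjointness transfers back to the right languages of \(\mathcal{M}\). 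Because \(\mathcal{M}\) has no empty states, each \(W_{q,F}^{\mathcal{M}}\) is moreover non-empty.

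With disjointness and non-emptiness in hand, the map \(S \mapsto W_{S,F}^{\mathcal{M}} = \bigcup_{q \in S} W_{q,F}^{\mathcal{M}}\) sending a set of states to a language is injective, since from such a union of pairwise disjoint non-empty sets one recovers \(S\) uniformly (including the case \(S = \emptyset\)) via \(S = \{q \mid W_{q,F}^{\mathcal{M}} \subseteq W_{S,F}^{\mathcal{M}}\}\). Taking \(S = \post_u^{\mathcal{M}}(I)\) and \(S = \post_v^{\mathcal{M}}(I)\), equality of the corresponding right languages forces equality of the two post-sets, which is exactly the right-hand side of Equation~\eqref{eq:RightSetsDisjoint}. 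Lemma~\ref{lemma:languageEqualPost} then yields \(\rr_{\mathcal{M}} = \rrL\), whence \(\cG{r}(\cG{\ell}(\cN)) = \cH^r(\rr_{\mathcal{M}}, L) = \cH^r(\rrL, L) = \cF{r}(L)\). Equality of these two \(\cH^r\)-constructions is in particular an isomorphism, giving the statement.

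I expect the disjointness step to be the main obstacle, since it is the one genuinely structural ingredient; the careful point there is that it is really a property of the \emph{reverse} automaton, so one must pass through \(\mathcal{M}^R\) and the reversal bijection rather than arguing directly in \(\mathcal{M}\). The subsequent recovery of \(S\) from \(W_{S,F}^{\mathcal{M}}\) also requires the non-emptiness of every right language, so that the empty post-set is not confused with any non-empty one; after that, the conclusion is a direct application of Lemma~\ref{lemma:languageEqualPost} together with the definitional rigidity of \(\cH^r\).
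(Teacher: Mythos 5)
There is a genuine gap, but it is one of coverage rather than of reasoning. The statement is a six-part theorem, and your proposal establishes only part~(\ref{lemma:LS+RS=RN}); moreover, it does so by invoking part~(\ref{lemma:language-F}) (to get \(\lang{\cG{\ell}(\cN)} = L\)) and Remark~\ref{remark:co-deterministic}, neither of which you prove. The paper's proof has to do real work for each of the remaining items: part~(\ref{lemma:language-F}) follows from Lemmas~\ref{lemma: HrGeneratesL} and~\ref{lemma:HlgeneratesL} together with the facts (proved separately) that the four relations are congruences whose partitions represent \(L\) precisely; parts~(\ref{theorem:minimalDFAlanguage}) and~(\ref{lemma:rightNDet}) are proved by constructing explicit isomorphisms with the quotient DFA and with the subset construction \(\cN^D\), respectively; and parts~(\ref{lemma:FlisomorphicRfrR}) and~(\ref{lemma:AlRequalArNR}) are proved by verifying the duality hypothesis \(u \rr v \Lra u^R \rl v^R\) of Lemma~\ref{lemma:AutomataPartitionleftRightReverse} for the language-based and automata-based pairs and then applying that lemma. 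None of this appears in your proposal, so as a proof of the full theorem it is incomplete.

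For part~(\ref{lemma:LS+RS=RN}) itself, your argument is correct and follows the same route as the paper: reduce to the equality of congruences \(\rr_{\cG{\ell}(\cN)} = \rrL\) via Lemma~\ref{lemma:languageEqualPost}, i.e., verify the right-hand side of Equation~\eqref{eq:RightSetsDisjoint}. In fact you are more careful than the paper at the one delicate point: the paper simply asserts that co-deterministic automata satisfy that condition, whereas you prove it, using pairwise disjointness of right languages (pulled back through the reversal bijection from determinism of the reverse automaton) together with non-emptiness of every right language, which makes \(S \mapsto W^{\cG{\ell}(\cN)}_{S,F}\) injective via your recovery formula. Your insistence on non-emptiness is not pedantry: a co-deterministic automaton containing a state with empty right language can violate the right-hand side of Equation~\eqref{eq:RightSetsDisjoint}, since two post-sets differing only in such a state have equal right languages; so the ``more general fact'' as you announce it (determinizing \emph{any} co-deterministic automaton recognizing \(L\) yields \(\cF{r}(L)\)) is false without the no-empty-states hypothesis that you do, in fact, use — it is exactly Remark~\ref{remark:co-deterministic} that supplies it for \(\cG{\ell}(\cN)\). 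In short: part (f) is done well, indeed with a detail the paper glosses over, but parts (a)--(e) are missing.
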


\begin{figure}[!ht]
\begin{minipage}[l]{0.42\textwidth}
\begin{tikzcd}[column sep=normal, row sep=large]
\cN \ar[d, description, "R"',leftrightarrow] \ar[r, "\cG{\ell}"] \ar[rr, bend left=30, "\cF{r}"] & \cG{\ell}(\cN) \ar[d, description, "R"',leftrightarrow] \ar[r, "\cG{r}"] & \cG{r}(\cG{\ell}(\cN)) \ar[d, description, "R"',leftrightarrow]\\
\cN^R \ar[r, "\cG{r}"] \ar[rr, bend right=30, "\cF{\ell}"] & \cG{r}(\cN^R) \ar[r, "\cG{\ell}"] & \cG{\ell}(\cG{r}(\cN^R))
\end{tikzcd}
\end{minipage}%
\hspace{1.2cm}
\begin{minipage}[r]{0.48\textwidth}
The upper part of the diagram follows from Theorem~\ref{theoremF}~(\ref{lemma:LS+RS=RN}).
Both squares of the diagram fo\-llow from Theorem~\ref{theoremF}~(\ref{lemma:AlRequalArNR}), which states that \(\cG{\ell}(\cN)\) is isomorphic to \((\cG{r}(\cN^R))^R\).
Finally, the bottom curved arc follows from 
Theorem~\ref{theoremF}~(\ref{lemma:FlisomorphicRfrR}).
Incidentally, the diagram shows a new relation which follows from the left-right dualities between \(\rlL\) and \(\rrL\), and \(\rlN\) and \(\rrN\): \(\cF{\ell}(\lang{\cN^R})\) is isomorphic to \(\cG{\ell}(\cG{r}(\cN^R))\).
\end{minipage}

\caption{Relations between the constructions \(\cG{\ell},\cG{r},\cF{\ell}\) and \(\cF{r}\).
Note that constructions \(\cF{r}\) and \(\cF{\ell}\) are applied  to the language generated by the automaton in the origin of the labeled arrow, while constructions \(\cG{r}\) and \(\cG{\ell}\) are applied directly to the automaton.}
\label{Figure:diagramAutomata}
\end{figure}

\section{A Congruence-based Perspective on Known Algorithms}
\label{sec:Novel}
We can find in the literature several well-known independent techniques for the construction of minimal DFAs.
Some of these techniques are based on refining a state partition of an input DFA, such as Moore's algorithm~\cite{moore1956}, while others directly manipulate an input NFA, such as the double-reversal method~\cite{brzozowski1962canonical}.
Now, we establish a connection between these algorithms through Theorem~\ref{theorem:minimalifreverseatomic}, which gives a necessary and sufficient condition on an NFA so that determinizing it yields the minimal DFA.

\begin{lemma}\label{lemma:minimalifreverseatomic}
Let \(\cN = (Q,\Sigma,\delta,I,F)\) be an NFA with \(L=\lang{\cN}\) and \(\rrL=\rrN\).
Then \(\forall q \in Q,\;  P_{\rrL}(W_{I,q}^{\cN}) = W_{I,q}^{\cN}\).
\end{lemma}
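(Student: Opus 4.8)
The plan is to unfold the definitions and reduce the statement to a saturation property that is immediate from the hypothesis. First I would record the key observation that, directly from the definitions of \(W_{I,q}^{\cN}\) and \(\post\), a word \(w\) belongs to the left language \(W_{I,q}^{\cN}\) if{}f \(q \in \post_w^{\cN}(I)\); indeed \(\post_w^{\cN}(I) = \{q \mid w \in W_{I,q}^{\cN}\}\), so membership of \(w\) in \(W_{I,q}^{\cN}\) is exactly read off coordinate \(q\) of \(\post_w^{\cN}(I)\).

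Next, since \(u \in P_{\rrL}(u)\) for every \(u\), we always have \(W_{I,q}^{\cN} \subseteq P_{\rrL}(W_{I,q}^{\cN})\), so it suffices to prove the reverse inclusion, i.e., that \(W_{I,q}^{\cN}\) is a union of \(\rrL\)-classes (it is \emph{saturated} by \(\rrL\)): whenever \(u \in W_{I,q}^{\cN}\) and \(u \rrL v\), then \(v \in W_{I,q}^{\cN}\). Here I would invoke the hypothesis \(\rrL = \rrN\) to replace \(\rrL\) by \(\rrN\), and conclude with the \(\post\) characterization: if \(u \in W_{I,q}^{\cN}\) then \(q \in \post_u^{\cN}(I)\); and \(u \rrN v\) means \(\post_u^{\cN}(I) = \post_v^{\cN}(I)\), whence \(q \in \post_v^{\cN}(I)\), i.e., \(v \in W_{I,q}^{\cN}\). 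This gives \(P_{\rrN}(W_{I,q}^{\cN}) = W_{I,q}^{\cN}\), and since \(\rrL = \rrN\) implies \(P_{\rrL} = P_{\rrN}\), the claim follows. Note that the degenerate case of an unreachable \(q\), where \(W_{I,q}^{\cN} = \emptyset\), is handled automatically, as the saturation condition is then vacuous.

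I do not expect any genuine obstacle here: the entire content sits in the remark that the automata-based equivalence \(\rrN\) refines every left language by construction (two words with the same \(\post^{\cN}(I)\)-image agree on membership in each \(W_{I,q}^{\cN}\)), and the hypothesis \(\rrL = \rrN\) then transports this saturation to the coarser Nerode partition \(P_{\rrL}\). The only points requiring a little care are using the correct reading of \(W_{I,q}^{\cN}\) in terms of \(\post\), and observing that equal relations induce equal partitions, so that \(P_{\rrL} = P_{\rrN}\).
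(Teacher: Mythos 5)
Your proof is correct and follows essentially the same route as the paper's: both arguments reduce the claim to showing that \(W_{I,q}^{\cN}\) is saturated by \(\rrL\), using the identification \(u \in W_{I,q}^{\cN} \iff q \in \post_u^{\cN}(I)\) together with the hypothesis \(\mathord{\rrL} = \mathord{\rrN}\) for the nontrivial inclusion, and reflexivity for the trivial one. The only difference is presentational: the paper writes the saturation step as a chain of set equalities and one inclusion, while you phrase it element-wise.
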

\begin{proof}
\begin{align*}
P_{\rrL}(W_{I,q}^{\cN}) & = \quad \text{[By definition of \(P_{\rrL}\)]}\\
\{w \in \Sigma^* \mid \exists u \in W_{I,q}^{\cN}, \; w^{-1}L = u^{-1}L\} & = \quad \text{[Since \(\rrL = \rrN\)]} \\
\{w \in \Sigma^* \mid \exists u \in W_{I,q}^{\cN}, \; \post_w^{\cN}(I) = \post_u^{\cN}(I)\} & \subseteq  \quad \text{[\( u \in W_{I,q}^{\cN} \iff q \in \post_u^{\cN}(I)\)]}\\
\{w \in \Sigma^* \mid q \in \post_w^{\cN}(I)\} & = \quad \text{[By definition of \(W_{I,q}^{\cN}\)]}\\
W_{I,q}^{\cN} \enspace .
\end{align*}
By reflexivity of \(\rrL, \) we conclude that \(P_{\rrL}(W_{I,q}^{\cN}) = W_{I,q}^{\cN} \).
\end{proof}

\begin{theorem}\label{theorem:minimalifreverseatomic}
Let \(\cN = (Q,\Sigma,\delta,I,F)\) be an NFA with \(L=\lang{\cN}\).
Then \(\cG{r}(\cN)\) is the minimal DFA for \(L\) if{}f \(\forall q \in Q,\;  P_{\rrL}(W_{I,q}^{\cN}) = W_{I,q}^{\cN}\).
\end{theorem}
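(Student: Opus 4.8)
The plan is to prove both directions of the biconditional, using the structural results about the constructions $\cG{r}$ and $\cF{r}$ already established in Theorem~\ref{theoremF}. The key link is part~(\ref{lemma:rightNDet}) of that theorem, which identifies $\cG{r}(\cN)$ with $\cN^D$, together with part~(\ref{theorem:minimalDFAlanguage}), which identifies $\cF{r}(L)$ with the minimal DFA. Since both $\cG{r}(\cN)$ and $\cF{r}(L)$ are DFAs accepting $L$, and the minimal DFA is unique up to isomorphism, the statement ``$\cG{r}(\cN)$ is the minimal DFA'' is equivalent to ``$\cG{r}(\cN)$ is isomorphic to $\cF{r}(L)$.'' Recalling the definitions, $\cG{r}(\cN) = \cH^r(\rrN, L)$ and $\cF{r}(L) = \cH^r(\rrL, L)$, so the heart of the matter is to relate the isomorphism of these two $\cH^r$-constructions to the coincidence of the two right congruences $\rrN$ and $\rrL$.

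For the easy direction ($\Leftarrow$), suppose $\rrL = \rrN$ held outright; then trivially $\cG{r}(\cN)$ and $\cF{r}(L)$ are the same construction and hence isomorphic, so $\cG{r}(\cN)$ is minimal. But the hypothesis we are actually given on the right is weaker: it asserts only the fixpoint condition $P_{\rrL}(W_{I,q}^{\cN}) = W_{I,q}^{\cN}$ for every state $q$. So the first thing I would do is show that this fixpoint condition forces $\rrL = \rrN$. First I would note that $\rrN$ is always finer than $\rrL$ (since $\post_u^{\cN}(I) = \post_v^{\cN}(I)$ implies equal right languages from those state-sets, hence $u^{-1}L = v^{-1}L$); this is the generic approximation fact mentioned in the text. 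For the reverse refinement, I would use the characterization in Lemma~\ref{lemma:languageEqualPost}: it suffices to show that equal right languages $W_{\post_u(I),F} = W_{\post_v(I),F}$ force $\post_u(I) = \post_v(I)$. Here the fixpoint hypothesis should do the work: each left language $W_{I,q}^{\cN}$ being $\rrL$-closed means that membership of a word in $W_{I,q}$ depends only on its $\rrL$-class, and since $\post_u(I) = \{q \mid u \in W_{I,q}\}$, words in the same $\rrL$-class land in exactly the same set of left languages, i.e. have the same $\post(I)$. That is precisely $\rrL \subseteq \rrN$, giving equality.

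For the harder direction ($\Rightarrow$), I would assume $\cG{r}(\cN)$ is the minimal DFA and derive the fixpoint condition. From minimality and uniqueness, $\cG{r}(\cN) \cong \cF{r}(L)$; since the states of these constructions are exactly the $\rrN$- and $\rrL$-classes respectively, an isomorphism between two reachable DFAs built from right congruences should force the congruences themselves to coincide, yielding $\rrL = \rrN$. Once $\rrL = \rrN$ is in hand, the fixpoint condition $P_{\rrL}(W_{I,q}^{\cN}) = W_{I,q}^{\cN}$ is exactly the content of Lemma~\ref{lemma:minimalifreverseatomic}, which is stated and proved immediately before this theorem precisely to supply this implication. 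So this direction reduces to extracting $\rrL = \rrN$ from the isomorphism and then invoking the preceding lemma.

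The main obstacle I anticipate is the step that converts an isomorphism of the two automata back into equality of the underlying congruences on $\Sigma^*$. The constructions $\cG{r}$ and $\cF{r}$ have no unreachable states (Remark~\ref{remark:deterministic}), and each word $u$ is routed by the determinism of $\cH^r$ to the state $P_{\rr}(u)$ reached by reading $u$ from the initial block $P_{\rr}(\varepsilon)$. I would argue that any isomorphism must send initial state to initial state and respect transitions and finality, so that the state reached on input $u$ in $\cG{r}(\cN)$ corresponds under the isomorphism to the state reached on $u$ in $\cF{r}(L)$; two words then collapse to the same state in one automaton exactly when they do in the other, which says precisely $\rrN = \rrL$. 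Making this routing-correspondence argument fully rigorous—rather than relying on the intuitive picture—is where the real care is needed; the two implications around the fixpoint condition are then comparatively routine given Lemmas~\ref{lemma:languageEqualPost} and~\ref{lemma:minimalifreverseatomic}.
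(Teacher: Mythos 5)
Your proof is correct, and its skeleton matches the paper's: both directions are reduced to establishing $\mathord{\rrL} = \mathord{\rrN}$, with Lemma~\ref{lemma:minimalifreverseatomic} supplying the forward conclusion and the identification $\cG{r}(\cN) = \cH^{r}(\rrN,L) = \cH^{r}(\rrL,L) = \cF{r}(L)$, plus Theorem~\ref{theoremF}~(\ref{theorem:minimalDFAlanguage}), supplying the backward one. The routes to that equality differ, however. For ($\Leftarrow$), the paper rewrites $P_{\rrN}(u)$ as $\bigcap_{q \in \post_u^{\cN}(I)} W_{I,q}^{\cN} \cap \bigcap_{q \notin \post_u^{\cN}(I)} (W_{I,q}^{\cN})^c$, uses the hypothesis to conclude this set is a union of $\rrL$-blocks, and then pins it down to the single block $P_{\rrL}(u)$ by invoking the coarsest-congruence property of Lemma~\ref{lemma:Varricchio}; your argument is more elementary and bypasses both the decomposition and that lemma: closure of each $W_{I,q}^{\cN}$ under $\rrL$ immediately gives that $\rrL$-equivalent words have the same post-set (so $P_{\rrL} \preceq P_{\rrN}$), and the converse refinement is the trivial implication $\post_u^{\cN}(I) = \post_v^{\cN}(I) \Rightarrow u^{-1}L = v^{-1}L$ --- note that what you prove is literally the right-hand side of Equation~\eqref{eq:RightSetsDisjoint}, so your detour through Lemma~\ref{lemma:languageEqualPost} is legitimate but dispensable. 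For ($\Rightarrow$), the paper asserts without justification that minimality forces $P_{\rrN}(u) = P_{\rrL}(u)$; your isomorphism-routing argument supplies exactly the missing justification, and it is sound because in $\cH^{r}(\rr,L)$ the state reached on input $u$ is $P_{\rr}(u)$ (this is Equation~\eqref{eq:right-langs} in the proof of Lemma~\ref{lemma: HrGeneratesL}), so any isomorphism, matching runs, matches $P_{\rrN}(u)$ with $P_{\rrL}(u)$; a slightly shorter alternative is to note that $P_{\rrN} \preceq P_{\rrL}$ always holds and minimality equalizes the (finite) numbers of blocks, forcing the partitions to coincide. What the paper's heavier backward argument buys is the visible parallel with the atom decomposition (Lemma~\ref{eq:atoms}) exploited later in the Brzozowski--Tamm generalization; what yours buys is brevity and independence from Lemma~\ref{lemma:Varricchio}.
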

\begin{proof}
Assume \(\cG{r}(\cN)\) is minimal. 
Then \(P_{\rrN}(u) = P_{\rrL}(u)\) for all \(u \in \Sigma^*\), i.e. \(\mathord{\rrL} = \mathord{\rrN}\).
It follows from Lemma~\ref{lemma:minimalifreverseatomic} that \(P_{\rrL}(W_{I,q}^{\cN}) = W_{I,q}^{\cN} \).

Now, assume that \(P_{\rrL}(W_{I,q}^{\cN}) = W_{I,q}^{\cN}\), for each \(q \in Q\).
Then, for every \(u \in \Sigma^*\),
\[P_{\rrN}(u) = \hspace{-5pt}\bigcap\limits_{\substack{q \in \post^{\cN}_u(I)}} W_{I,q}^{\cN} \; \cap \hspace{-5pt}\bigcap\limits_{\substack{q \notin \post^{\cN}_u(I)}} (W_{I,q}^{\cN})^c = \hspace{-5pt}\bigcap\limits_{\substack{q \in \post^{\cN}_u(I)}} P_{\rrL}(W_{I,q}^{\cN}) \;\; \cap \hspace{-5pt}\bigcap\limits_{\substack{q \notin \post^{\cN}_u(I)}} (P_{\rrL}(W_{I,q}^{\cN}))^c\]
where the first equality follows by rewriting \(P_{\rrN}(u) = \{v \in \Sigma^* \mid \post^{\cN}_u(I) = \post^{\cN}_v(I)\}\) with universal quantifiers, hence intersections, and the last equality follows from the initial assumption \(P_{\rrL}(W_{I,q}^{\cN}) = W_{I,q}^{\cN} \).

It follows that \(P_{\rrN}(u)\) is a \emph{union} of blocks of \(P_{\rrL}\).
Recall that \(\rrL\) induces the coarsest right congruence such that \(P_{\rrL}(L)=L\) (Lemma \ref{lemma:Varricchio}).
Since \(\rrN\) is a right congruence satisfying \(P_{\rrN}(L)=L\) then \(P_{\rrN} \preceqq P_{\rrL}\).
Therefore, \(P_{\rrN}(u)\) necessarily corresponds to one single block of \(P_{\rrL}\), namely, \(P_{\rrL}(u)\).
Since \(P_{\rrN}(u) = P_{\rrL}(u)\) for each \(u \in \Sigma^*\), we conclude that \(\cG{r}(\cN) = \cF{r}(L)\).
\end{proof}

\subsection{Double-reversal Method}
In this section we give a simple proof of the well-known double-reversal minimization algorithm of Brzozowski~\cite{brzozowski1962canonical} using Theorem~\ref{theorem:minimalifreverseatomic}.
Note that, since \(\cG{r}(\cN)\) is isomorphic to \(\cN^D\) by Theorem~\ref{theoremF}~(\ref{lemma:rightNDet}), the following result coincides with that of Brzozowski.

\begin{theorem}[\cite{brzozowski1962canonical}]\label{theorem:DoubleReversal}
Let \(\cN\) be an NFA.
Then \(\cG{r}((\cG{r}(\cN^R))^R)\) is isomorphic to the minimal DFA for \(\lang{\cN}\).
\end{theorem}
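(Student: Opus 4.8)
The plan is to apply Theorem~\ref{theorem:minimalifreverseatomic} to the NFA \(\cM ≝ (\cG{r}(\cN^R))^R\), reducing the claim to verifying the condition \(P_{\rrL}(W_{I,q}^{\cM}) = W_{I,q}^{\cM}\) for every state \(q\) of \(\cM\), where \(L = \lang{\cN}\). Since \(\cG{r}(\cN^R)\) is, by Theorem~\ref{theoremF}~(\ref{lemma:rightNDet}), isomorphic to the determinization \((\cN^R)^D\), and reversing a deterministic automaton yields a co-deterministic one, I expect \(\cM\) to be a co-deterministic automaton. The whole strategy hinges on exploiting this co-determinism.

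First I would identify \(\lang{\cM}\). Since \(\cG{r}(\cN^R)\) generates \(\lang{\cN^R} = L^R\) by Theorem~\ref{theoremF}~(\ref{lemma:language-F}), reversing it gives \(\lang{\cM} = (L^R)^R = L\); so \(\cM\) is indeed an NFA for \(L\), and Theorem~\ref{theorem:minimalifreverseatomic} is applicable with this \(L\). Next I would observe that \(\cM\) is co-deterministic: \(\cG{r}(\cN^R)\) is a DFA, and the reverse of a DFA is by definition a co-DFA. This is the key structural fact.

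The heart of the argument is then to show that co-determinism forces the condition \(P_{\rrL}(W_{I,q}^{\cM}) = W_{I,q}^{\cM}\). The paper has already remarked, just before Theorem~\ref{theoremF}, that \emph{every co-deterministic automaton satisfies the right-hand side of Equation~\eqref{eq:RightSetsDisjoint}}; by Lemma~\ref{lemma:languageEqualPost} this means \(\mathord{\rrL} = \mathord{\rrM}\) for \(\cM\). Invoking Lemma~\ref{lemma:minimalifreverseatomic} with \(\cN\) replaced by \(\cM\), the equality \(\mathord{\rrL} = \mathord{\rrM}\) immediately yields \(P_{\rrL}(W_{I,q}^{\cM}) = W_{I,q}^{\cM}\) for every state \(q\) of \(\cM\). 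Feeding this into Theorem~\ref{theorem:minimalifreverseatomic} gives that \(\cG{r}(\cM) = \cG{r}((\cG{r}(\cN^R))^R)\) is the minimal DFA for \(L = \lang{\cN}\), which is exactly the claim.

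The main obstacle I anticipate is pinning down precisely why a co-deterministic automaton satisfies the right-hand side of Equation~\eqref{eq:RightSetsDisjoint}, since this is stated as a passing remark rather than an isolated lemma. The point is that in a co-deterministic automaton, for each word \(u\) the set \(\post_u^{\cM}(I)\) is recoverable from the right language \(W_{\post_u^{\cM}(I),F}^{\cM}\): co-determinism ensures that distinct states have disjoint left languages, so the map sending a state-set to its collective right language is injective on the reachable sets \(\post_u^{\cM}(I)\). Making this injectivity argument rigorous — and checking it applies to exactly the sets arising as \(\post_u^{\cM}(I)\) — is the one step that requires genuine care; everything else is a matter of correctly substituting \(\cM\) into the earlier results and tracking the reversal of languages.
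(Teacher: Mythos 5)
Your proof follows the paper's own argument step for step: both identify \(\cN' = (\cG{r}(\cN^R))^R\) as a co-DFA recognizing \(L\), apply Lemma~\ref{lemma:languageEqualPost} to conclude \(\mathord{\rrL} = \mathord{\rr_{\cN'}}\), and then combine Lemma~\ref{lemma:minimalifreverseatomic} with Theorem~\ref{theorem:minimalifreverseatomic}. Up to that point the two proofs coincide exactly.

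However, the step you yourself single out as the one requiring genuine care --- why a co-deterministic automaton satisfies the right-hand side of Equation~\eqref{eq:RightSetsDisjoint} --- is justified incorrectly, in two ways. First, co-determinism does \emph{not} ensure that distinct states have disjoint \emph{left} languages; that is what determinism gives. What co-determinism of \(\cN'\) gives is pairwise disjoint \emph{right} languages: the right language of a state \(q\) of \(\cN'\) is the reverse of the left language of \(q\) in the DFA \(\cG{r}(\cN^R)\), and left languages of distinct states of a DFA are disjoint. Second, disjointness alone does not make \(S \mapsto W^{\cN'}_{S,F}\) injective on the sets \(\post_u^{\cN'}(I)\): you also need every state to have a \emph{nonempty} right language, since a state \(q\) with \(W^{\cN'}_{q,F} = \emptyset\) can belong to one such set and not another without changing the collective right languages. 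This caveat is essential, not pedantic: the reverse of a complete DFA containing a suitably chosen unreachable state is a co-complete co-DFA for which the right-hand side of Equation~\eqref{eq:RightSetsDisjoint} fails (and whose determinization is not minimal), so the blanket claim --- which the paper itself states only as a passing remark --- is literally false without a no-empty-states hypothesis. In the present situation that hypothesis holds because \(\cG{r}(\cN^R)\) has no unreachable states (Remark~\ref{remark:deterministic}), hence every state of \(\cN'\) has a nonempty right language; then \(W^{\cN'}_{S,F} = W^{\cN'}_{T,F}\) forces \(S = T\), since each \(q \in S\) has nonempty \(W^{\cN'}_{q,F}\), which must meet \(W^{\cN'}_{q',F}\) for some \(q' \in T\), whence \(q = q'\) by disjointness, and symmetrically \(T \subseteq S\).
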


\begin{proof}
Let \(L = \lang{\cN}\).
By definition, \(\cN' = (\cG{r}(\cN^R))^R\) is a co-DFA and, therefore, satisfies the condition on the right-hand side of Equation~\eqref{eq:RightSetsDisjoint}.
It follows from Lemma~\ref{lemma:languageEqualPost} that \(\rrL=\rr_{\cN'}\) which, by Lemma~\ref{lemma:minimalifreverseatomic} and Theorem~\ref{theorem:minimalifreverseatomic}, implies that \(\cG{r}(\cN')\) is minimal.
\end{proof} 

Note that Theorem~\ref{theorem:DoubleReversal} can be inferred from Figure~\ref{Figure:diagramAutomata} by following the path starting at \(\cN\), labeled with \(R-\cG{r}-R-\cG{r}\) and ending in \(\cF{r}(\lang{\cN})\).

\subsection{Generalization of the Double-reversal Method}
Brzozowski and Tamm~\cite{brzozowski2014theory} generalized the double-reversal algorithm by defining a necessary and sufficient condition on an NFA which guarantees that the determinized automaton is minimal.
They introduced the notion of \emph{atomic} NFA and showed that \(\cN^D\) is minimal if{}f \(\cN^R\) is atomic.
We shall show that this result is equivalent to Theorem~\ref{theorem:minimalifreverseatomic} due to the left-right duality between the language-based equivalences (Lemma~\ref{lemma:AutomataPartitionleftRightReverse}).

\begin{definition}[Atom~\cite{brzozowski2014theory}]\label{def:atom}
Let \(L\) be a regular language \(L\).
Let \(\{K_i \mid 0 \leq i \leq n-1\}\) be the set of left quotients of \(L\).
An \emph{atom} is any non-empty intersection of the form \(\widetilde{K_0} \cap \widetilde{K_1} \cap \ldots \cap \widetilde{K_{n-1}}\), where each \(\widetilde{K_i}\) is either \(K_i\) or \(K_i^c\).
\end{definition}

This notion of atom coincides with that of equivalence class for the left language-based congruence \(\rlL\).
This was first noticed by Iván~\cite{Ivan16}.

\begin{restatable}{lemma}{atoms}\label{eq:atoms}
Let \(L\) be a regular language.  
Then for every \(u \in \Sigma^*\),
\[P_{\rlL}(u) = \bigcap\limits_{\substack{u \in w^{-1}L \\w \in \Sigma^* }} w^{-1}L \; \cap \bigcap\limits_{\substack{u \notin w^{-1}L \\w \in \Sigma^* }} (w^{-1}L)^c \enspace .\]
\end{restatable}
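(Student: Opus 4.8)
The plan is to unfold both sides into a single pointwise condition ranging over words $w \in \Sigma^*$ and observe that the two conditions coincide. Write $A(u)$ for the intersection on the right-hand side of the claimed equality. First I would characterize membership in $A(u)$: a word $v$ lies in $A(u)$ exactly when, for every $w \in \Sigma^*$, the word $v$ sits on the same side as $u$ of the quotient $w^{-1}L$; that is,
\[
v \in A(u) \quad\Lra\quad \forall w \in \Sigma^*,\ \bigl(u \in w^{-1}L \Lra v \in w^{-1}L\bigr).
\]
This is immediate from the shape of $A(u)$ as the intersection of those $w^{-1}L$ that contain $u$ with the complements of those that do not: for the forward implication one reads off, for each fixed $w$, the relevant conjunct; for the backward implication one checks that $v$ satisfies every conjunct. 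In particular $u \in A(u)$, so $A(u)$ is nonempty.

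The key step is the transpose identity $u \in w^{-1}L \Lra w \in Lu^{-1}$, which holds because both sides are equivalent to $wu \in L$ by the definitions of the left and right quotients. Substituting this identity (for both $u$ and $v$) into the characterization above rewrites the membership condition as $\forall w \in \Sigma^*,\ (w \in Lu^{-1} \Lra w \in Lv^{-1})$, which is precisely the set equality $Lu^{-1} = Lv^{-1}$. By Definition~\ref{def:language}, $Lu^{-1} = Lv^{-1}$ is exactly $u \rlL v$, i.e.\ $v \in P_{\rlL}(u)$. Chaining these equivalences yields $v \in A(u) \Lra v \in P_{\rlL}(u)$ for every $v \in \Sigma^*$, hence $A(u) = P_{\rlL}(u)$, as required.

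The only point that needs care, rather than genuine difficulty, is keeping the two quotient directions apart: the atoms of Definition~\ref{def:atom} are built from the \emph{left} quotients $w^{-1}L$, whereas $\rlL$ is defined through the \emph{right} quotients $Lu^{-1}$. The transpose identity $u \in w^{-1}L \Lra w \in Lu^{-1}$ is exactly the bridge between these two viewpoints, and once it is in place the argument is a direct unfolding of definitions with no combinatorial content. (Incidentally, since $L$ is regular there are only finitely many distinct left quotients, so the infinite intersection collapses to the finite intersection of Definition~\ref{def:atom}; this is not needed for the equality itself, only for identifying $P_{\rlL}(u)$ with an atom.)
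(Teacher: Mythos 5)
Your proof is correct and takes essentially the same route as the paper's: both arguments hinge on the transpose identity \(u \in w^{-1}L \Lra wu \in L \Lra w \in Lu^{-1}\) to translate the pointwise condition ``\(v\) agrees with \(u\) on every left quotient \(w^{-1}L\)'' into the set equality \(Lu^{-1} = Lv^{-1}\), i.e.\ \(v \in P_{\rlL}(u)\). The only difference is presentational: you chain biconditionals to get both containments at once, whereas the paper proves the two inclusions separately.
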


\begin{definition}[Atomic NFA~\cite{brzozowski2014theory}]\label{def:atomicNFA}
An NFA \(\cN = (Q,\Sigma,\delta,I,F)\) is \emph{atomic} if{}f for every state \(q \in Q\), the right language \(W_{q,F}^{\cN}\) is a union of atoms of \(\lang{\cN}\).
\end{definition}

It follows from Lemma~\ref{eq:atoms} that the set of atoms of a language \(L\) corresponds to the partition \(P_{\rlL}\).
Therefore, a set \(S \subseteq \Sigma^*\) is a union of atoms if{}f \(P_{\rlL}(S)=S\).
This property, together\linebreak with Definition~\ref{def:atomicNFA}, shows that an NFA \(\cN = (Q,\Sigma,\delta,I,F)\) with \(L = \lang{\cN}\) is \emph{atomic} if{}f 
\begin{equation}\label{eq:atomicNFAClosed}
\forall q \in Q, \; P_{\rlL}(W_{q,F}^{\cN}) = W_{q,F}^{\cN} \enspace .
\end{equation}

We are now in condition to give an alternative proof of the generalization of Brzozowski and Tamm~\cite{brzozowski2014theory} relying on Theorem~\ref{theorem:minimalifreverseatomic}.

\begin{lemma}\label{lemma:atomiccoatomic}
Let \(\cN=(Q,\Sigma,\delta,I,F)\) be an NFA with \(L=\lang{\cN}\).
Then \(\cN^R\) is atomic if{}f \(\cG{r}(\cN)\) is the minimal DFA for \(L\).
\end{lemma}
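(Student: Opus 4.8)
The plan is to prove both directions at once by reducing each side of the biconditional to a closure condition on the partition $P_{\rrL}$ and then bridging the two conditions with the reversal duality between $\rrL$ and $\rl_{L^R}$. On the right-hand side, Theorem~\ref{theorem:minimalifreverseatomic} already tells us that $\cG{r}(\cN)$ is the minimal DFA for $L$ if and only if $P_{\rrL}(W_{I,q}^{\cN}) = W_{I,q}^{\cN}$ for every $q \in Q$. On the left-hand side, Equation~\eqref{eq:atomicNFAClosed} characterizes atomicity of an automaton through its right languages: an NFA is atomic exactly when each of its right languages is closed under the partition induced by the left language-based congruence of its own language. So the whole statement reduces to showing these two closure conditions are equivalent.

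First I would apply Equation~\eqref{eq:atomicNFAClosed} to the automaton $\cN^R$. Since $\lang{\cN^R} = L^R$, this says that $\cN^R$ is atomic if and only if $P_{\rl_{L^R}}(W_{q,I}^{\cN^R}) = W_{q,I}^{\cN^R}$ for every $q \in Q$, where $W_{q,I}^{\cN^R}$ is the right language of $q$ in $\cN^R$ (whose final set is $I$). The next step is the observation that reversing the automaton swaps left and right languages up to word reversal, namely $W_{q,I}^{\cN^R} = (W_{I,q}^{\cN})^R$; this is immediate from the definition of $\cN^R$, since a $w$-labelled path in $\cN^R$ from $q$ to an initial state of $\cN$ is exactly a $w^R$-labelled path in $\cN$ from that initial state to $q$.

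Then I would record the reversal duality $u \rrL v \Lra u^R \rl_{L^R} v^R$ — the instance of the hypothesis of Lemma~\ref{lemma:AutomataPartitionleftRightReverse} relating $\rrL$ and $\rl_{L^R}$ — which follows from the quotient identity $L^R (u^R)^{-1} = (u^{-1}L)^R$. Using it at the level of equivalence classes, one pushes the reversal through the partition operator to obtain, for every $S \subseteq \Sigma^*$, the identity $P_{\rl_{L^R}}(S^R) = (P_{\rrL}(S))^R$. Taking $S = W_{I,q}^{\cN}$ and using $S^R = W_{q,I}^{\cN^R}$, the atomicity condition $P_{\rl_{L^R}}(W_{q,I}^{\cN^R}) = W_{q,I}^{\cN^R}$ becomes $(P_{\rrL}(W_{I,q}^{\cN}))^R = (W_{I,q}^{\cN})^R$, which, since reversal is a bijection on $\wp(\Sigma^*)$, is equivalent to $P_{\rrL}(W_{I,q}^{\cN}) = W_{I,q}^{\cN}$. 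This is exactly the condition of Theorem~\ref{theorem:minimalifreverseatomic}, completing the chain of equivalences.

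The main obstacle I anticipate is the careful bookkeeping in the step $P_{\rl_{L^R}}(S^R) = (P_{\rrL}(S))^R$: one must use that every word is the reverse of some word, so that an arbitrary element $w$ of $P_{\rl_{L^R}}(S^R)$ can be written as $v^R$ and handled via the duality, and that the union defining $P_{\rrL}(S)$ commutes with reversal. None of this is deep, but it is where a left/right slip would break the argument; everything else is a direct substitution into the two characterizations already established.
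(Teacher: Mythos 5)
Your proposal is correct and follows essentially the same route as the paper's proof: both reduce atomicity of \(\cN^R\) via Equation~\eqref{eq:atomicNFAClosed} applied to \(L^R\), use the identity \(W_{q,I}^{\cN^R} = (W_{I,q}^{\cN})^R\) together with the left-right duality \(u \rrL v \Lra u^R \sim^{\ell}_{L^R} v^R\) to push reversal through the partition operator, and conclude with Theorem~\ref{theorem:minimalifreverseatomic}. The only (cosmetic) difference is that you isolate the set-level identity \(P_{\sim^{\ell}_{L^R}}(S^R) = (P_{\rrL}(S))^R\) as an explicit intermediate step, whereas the paper carries out the same manipulation inline in a chain of equivalences.
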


\begin{proof}
Let \(\cN^R=(Q,\Sigma,\delta_r,F,I)\) and \(L^R = \lang{\cN^R}\).
Then,
\begin{align*}
\forall q \in Q, \; P_{\sim^{\ell}_{L^R}}(W_{q,I}^{\cN^R}) = W_{q,I}^{\cN^R} &\iff \quad  \text{[By \(A = B \Lra A^R = B^R\)]} \\
\forall q \in Q, \; \left(P_{\sim^{\ell}_{L^R}}(W_{q,I}^{\cN^R})\right)^R = \left(W_{q,I}^{\cN^R}\right)^R &\iff \quad  \text{[By \(u\rlL v \Lra u^R \sim^r_{L^R} v^R\)]} \\
\forall q \in Q, \; P_{\rrL}\left(\left(W_{q,I}^{\cN^R}\right)^R\right) = \left(W_{q,I}^{\cN^R}\right)^R &\iff \quad  \text{[By \(\left(W^{\cN^R}_{q,I}\right)^R = W_{I,q}^{\cN}\)]} \\
\forall q \in Q, \; P_{\rrL}(W_{I,q}^{\cN}) = W_{I,q}^{\cN} \enspace. 
\end{align*}
It follows from Theorem~\ref{theorem:minimalifreverseatomic} that \(\cG{r}(\cN)\) is minimal.
\end{proof}

We conclude this section by collecting all the conditions described so far that guarantee that determinizing an automaton yields the minimal DFA.
\begin{corollary}\label{corol:detProducesMinimal}
Let \(\cN=(Q,\Sigma,\delta,I,F)\) be an NFA with \(L=\lang{\cN}\).
The following are equivalent:
\begin{alphaenumerate}%
\item \(\cG{r}(\cN)\) is minimal.
\item \(\mathord{\rrN} = \mathord{\rrL}\).
\item \(\forall u,v \in \Sigma^*, \; W_{\post_{u}^{\cN}(I),F}^{\cN} = W_{\post_{v}^{\cN}(I),F}^{\cN} \Lra \post_{u}^{\cN}(I) = \post_{v}^{\cN}(I)\).
\item \(\forall q \in Q,\; P_{\rrL}(W_{I,q}^{\cN}) = W_{I,q}^{\cN} \). \label{lemma:atomic-minimal}
\item \(\cN^R\) is atomic.
\end{alphaenumerate}
\end{corollary}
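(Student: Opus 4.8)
The plan is to recognize that the corollary merely repackages facts already established, so rather than arguing all pairwise implications directly I would organize the five statements around condition (a) as a hub and prove four biconditionals, each by invoking a single earlier result; transitivity then yields the full equivalence. Concretely, I would establish (a)~$\Leftrightarrow$~(b), (b)~$\Leftrightarrow$~(c), (a)~$\Leftrightarrow$~(d) and (a)~$\Leftrightarrow$~(e).

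Three of these are immediate citations and require no work. The equivalence (b)~$\Leftrightarrow$~(c) is exactly the statement of Lemma~\ref{lemma:languageEqualPost}, which characterizes $\mathord{\rrL}=\mathord{\rrN}$ by the condition appearing in (c). The equivalence (a)~$\Leftrightarrow$~(d) is precisely Theorem~\ref{theorem:minimalifreverseatomic}. The equivalence (a)~$\Leftrightarrow$~(e) is precisely Lemma~\ref{lemma:atomiccoatomic}, which asserts that $\cN^R$ is atomic if{}f $\cG{r}(\cN)$ is the minimal DFA for $L$. So for these three it suffices to point to the corresponding result.

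The only step demanding a short self-contained argument is (a)~$\Leftrightarrow$~(b). Here I would unfold the construction: since $\cG{r}(\cN)=\cH^{r}(\rrN,L)$, its states are the classes of $\rrN$, and by Remark~\ref{remark:deterministic} it is a complete DFA for $L$ with no unreachable states. The right language of the state $P_{\rrN}(u)$ is $u^{-1}L$, so two states share a right language if{}f $u^{-1}L=v^{-1}L$, i.e.\ if{}f $u\rrL v$, whereas they coincide as states if{}f $u\rrN v$. Hence $\cG{r}(\cN)$ is minimal if{}f the classes of $\rrN$ and $\rrL$ agree, that is, if{}f $\mathord{\rrN}=\mathord{\rrL}$; the forward direction may alternatively appeal to Theorem~\ref{theoremF}~(\ref{theorem:minimalDFAlanguage}), which identifies $\cF{r}(L)$ with the minimal DFA.

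I do not anticipate any genuine obstacle, since every hard fact is already available; the step requiring the most care is (a)~$\Leftrightarrow$~(b), where one must keep straight that $P_{\rrN}\preceqq P_{\rrL}$ always holds by Lemma~\ref{lemma:Varricchio} (as $\rrN$ is a right congruence precisely representing $L$), so that agreement of right languages is exactly the condition missing for minimality. The remaining subtlety is organizational rather than mathematical: because both Lemma~\ref{lemma:atomiccoatomic} and Theorem~\ref{theorem:minimalifreverseatomic} already speak about minimality of $\cG{r}(\cN)$, routing all citations through condition (a) is what keeps the argument non-circular and transparent.
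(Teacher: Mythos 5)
Your proposal is correct and takes essentially the same route as the paper, which states the corollary without a separate proof precisely because it collects already-established facts: Lemma~\ref{lemma:languageEqualPost} gives (b)\(\Leftrightarrow\)(c), Theorem~\ref{theorem:minimalifreverseatomic} gives (a)\(\Leftrightarrow\)(d), Lemma~\ref{lemma:atomiccoatomic} gives (a)\(\Leftrightarrow\)(e), and (a)\(\Leftrightarrow\)(b) is established inside the proof of Theorem~\ref{theorem:minimalifreverseatomic} (via Lemma~\ref{lemma:Varricchio} and the fact that \(P_{\rrN} \preceqq P_{\rrL}\)). Your self-contained argument for (a)\(\Leftrightarrow\)(b), identifying the right language of the state \(P_{\rrN}(u)\) of \(\cG{r}(\cN)\) with \(u^{-1}L\), is sound and merely makes explicit what the paper leaves implicit there.
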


\subsection{Moore's Algorithm}

Given a DFA \(\cD\), Moore~\cite{moore1956} builds the minimal DFA for the language \(L=\lang{\cD}\) by removing unreachable states from \(\cD\) and then performing a stepwise refinement of an initial partition of the set of reachable states of \(\cD\).
Since we are interested in the refinement step, in what follows we assume that all DFAs have no unreachable states. 
In this section, we will describe Moore's state-partition \(\cQ^{\cD}\) and the right-language-based partition \(P_{\rrL}\) as greatest fixpoint computations and show that there exists an isomorphism between the two at each step of the fixpoint computation.
In fact, this isomorphism shows that Moore's DFA \(M\) satisfies \(P_{\rrL}(W_{I,q}^{M}) = W_{I,q}^{M}\) for every state \(q\).
Thus, by Theorem~\ref{theorem:minimalifreverseatomic}, \(M\) is isomorphic to \(\cF{r}(\lang{\cD})\).

First, we give Moore's algorithm which computes the state-partition that is later used to define Moore's  DFA.

\RemoveAlgoNumber
\begin{algorithm}
\SetAlgorithmName{Moore's Algorithm}{}
\SetSideCommentRight
\caption{Algorithm for constructing Moore's partition.}\label{alg:Moore}

\KwData{DFA \(\cD=\tuple{Q,\Sigma,\delta,I,F}\) with \(L = \lang{\cD}\).}
\KwResult{\(\cQ^{\cD} \in Part(Q)\).}

\(\cQ^{\cD} := \{F,F^c\}\),
\(\cQ' := \varnothing\)\;
\While{\(\cQ^{\cD}\neq \cQ'\)} {
	\(\cQ' := \cQ^{\cD}\)\;
	\ForAll{\(a\in \Sigma\)}{
		\(\cQ_a := \bigcurlywedge_{p \in \cQ^{\cD}} \{\pre_a^{\cD}(p), (\pre_a^{\cD}(p))^c\}\)\;
	}
	\(\cQ^{\cD} := \cQ^{\cD} \rc \bigcurlywedge_{a \in \Sigma} \cQ_a\)\;
}
\Return \(\cQ^{\cD}\)\;
\end{algorithm}
\vspace{-5pt}
\begin{definition}[Moore's DFA]
\label{def:MooreDFA}
Let \(\cD = (Q,\Sigma,\delta,I,F)\) be a DFA, and let \(\cQ^{\cD}\) be the partition of \(Q\) built by using Moore's algorithm.
\emph{Moore's DFA} for \(\lang{\cD}\) is \(M = (Q^M,\Sigma,\delta^M,I^M,F^M)\) where \(Q^M = \cQ^{\cD}\), \(I^M = \{\cQ^{\cD}(q) \mid q \in I\}\), \(F^M = \{\cQ^{\cD}(q) \mid q \in F\}\) and, for each \( S,S' \in Q^M\) and \(a \in \Sigma\), we have that \(\delta^M(S,a) = S'\) if{}f \(\exists q \in S, q' \in S'\) with \(\delta(q,a) = q'\).
\end{definition}

Next, we describe Moore's state-partition \(\cQ^{\cD}\) and the right-language-based partition \(P_{\rrL}\) as greatest fixpoint computations and show that there exists an isomorphism between the two at each step of the fixpoint computation.

\begin{definition}[Moore's state-partition]
\label{def:state-partition}
Let \(\cD=(Q,\Sigma,\delta,I,F)\) be a DFA.
Define  \emph{Moore's state-partition w.r.t.} \(\cD\), denoted by \(\cQ^{\cD}\), as follows. 
\[\cQ^{\cD} \ud \gfp(\lambda X. \bigcurlywedge\limits_{\substack{a \in \Sigma, S \in X}} \{\pre_a(S), (\pre_a(S))^c\} \rc \{F, F^c\}) \enspace .\]
\end{definition}

On the other hand, by Theorem~\ref{theoremF}~(\ref{theorem:minimalDFAlanguage}), each state of the minimal DFA for \(L\) corresponds to an equivalence class of \(\rrL\).
These equivalence classes can be defined in terms of non-empty intersections of complemented or uncomplemented right quotients of \(L\).

\begin{restatable}{lemmaR}{lemmacoatoms}\label{lemma:coatoms}
Let \(L\) be a regular language.
Then, for every \(u \in \Sigma^*\),
\[P_{\rrL}(u) = \bigcap\limits_{\substack{u \in Lw^{-1}\\w \in \Sigma^* }} Lw^{-1} \; \cap \bigcap\limits_{\substack{u \notin Lw^{-1}\\w \in \Sigma^* }} (Lw^{-1})^c\enspace .\]
\end{restatable}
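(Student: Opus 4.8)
The plan is to show that the right-hand side of the identity, which I will abbreviate by $R(u)$, is exactly the set of words lying in precisely the same right quotients of $L$ as $u$ does, and then to identify that set with the equivalence class $P_{\rrL}(u)$. This mirrors the treatment of Lemma~\ref{eq:atoms} for the left-language-based congruence.

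First I would record the elementary quotient identity: for all $x,w \in \Sigma^*$ we have $xw \in L \iff x \in Lw^{-1}$, which is immediate from the definition $Lw^{-1} = \{x \mid xw \in L\}$. Using this together with $u^{-1}L = \{x \mid ux \in L\}$, I would rewrite the defining condition of $\rrL$ as a membership condition on right quotients: the statement $u \rrL v$, i.e.\ $u^{-1}L = v^{-1}L$, holds iff for every $x \in \Sigma^*$ we have $ux \in L \iff vx \in L$, iff for every $w \in \Sigma^*$ we have $u \in Lw^{-1} \iff v \in Lw^{-1}$ (renaming $x$ to $w$). In words, $u$ and $v$ are $\rrL$-equivalent precisely when they belong to exactly the same right quotients $Lw^{-1}$.

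Then I would unfold the definition of $R(u)$. A word $v$ lies in $R(u)$ iff $v \in Lw^{-1}$ for every $w$ with $u \in Lw^{-1}$, and $v \in (Lw^{-1})^c$ for every $w$ with $u \notin Lw^{-1}$. Taken together these two conditions say exactly that, for every $w \in \Sigma^*$, $v \in Lw^{-1} \iff u \in Lw^{-1}$. By the characterization from the previous step this is equivalent to $u \rrL v$, that is, $v \in P_{\rrL}(u)$. Hence $R(u) = P_{\rrL}(u)$, as required.

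The argument is essentially bookkeeping, so there is no deep obstacle; the only point that requires care is the positive/negative split defining $R(u)$, namely noticing that the second (complemented) intersection is what forces $v$ to avoid every right quotient that $u$ avoids, so that membership is pinned down in both directions rather than yielding only the inclusion $P_{\rrL}(u) \subseteq R(u)$. I would also note that this statement is the exact left-right dual of Lemma~\ref{eq:atoms}: since $(Lw^{-1})^R = (w^R)^{-1}L^R$ and $u \rrL v \Lra u^R \sim^{\ell}_{L^R} v^R$, one could alternatively derive it by applying Lemma~\ref{eq:atoms} to $L^R$ and reversing throughout, but the direct computation above is shorter and self-contained.
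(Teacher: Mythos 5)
Your proof is correct and takes essentially the same route as the paper's: both hinge on the chain \(u \in Lw^{-1} \Leftrightarrow uw \in L \Leftrightarrow w \in u^{-1}L\), which turns membership in the displayed intersection into the statement that \(v\) lies in exactly the same right quotients as \(u\), i.e., \(u^{-1}L = v^{-1}L\). The only cosmetic difference is that you establish both directions at once via a single biconditional, while the paper splits the argument into the two inclusions \(P_{\rrL}(u) \subseteq L_u\) and \(L_u \subseteq P_{\rrL}(u)\).
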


It follows from Lemma~\ref{lemma:coatoms} that \(P_{\rrL} = \bigcurlywedge_{w \in \Sigma^*}\{Lw^{-1},(Lw^{-1})^c\}\), for every regular language \(L\) .
Thus, \(P_{\rrL}\) can also be obtained as a greatest fixpoint computation as follows.

\begin{restatable}{lemmaR}{lemmafixpointcoatoms}\label{lemma:fixpointcoatoms}
Let \(L\) be a regular language.
Then 
\begin{equation}\label{eq:fixpointcoatoms}
P_{\rrL} = \gfp(\lambda X. {\bigcurlywedge\limits_{\substack{a \in \Sigma, B \in X}}} \{Ba^{-1}, (Ba^{-1})^c\} \rc \{L, L^c\}) \enspace .
\end{equation}
\end{restatable}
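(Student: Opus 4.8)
The plan is to exploit the explicit description of $P_{\rrL}$ already extracted from Lemma~\ref{lemma:coatoms}, namely $P_{\rrL} = \bigcurlywedge_{w \in \Sigma^*}\{Lw^{-1},(Lw^{-1})^c\}$, and to pin $P_{\rrL}$ down as the \emph{greatest} fixpoint of the operator $\Phi(X) \ud \bigcurlywedge_{a \in \Sigma,\, B \in X}\{Ba^{-1},(Ba^{-1})^c\} \rc \{L,L^c\}$. It suffices to establish (i) that $P_{\rrL}$ is a fixpoint, $\Phi(P_{\rrL}) = P_{\rrL}$, and (ii) that every fixpoint $X = \Phi(X)$ is finer than $P_{\rrL}$, i.e. $X \preceq P_{\rrL}$; since $(Part(\Sigma^*),\preceq)$ is a complete lattice in which \emph{greatest} means \emph{coarsest}, (i) and (ii) together identify $P_{\rrL}$ with $\gfp(\Phi)$. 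Before either step I would record three elementary facts about the single-letter right quotient $(\cdot)a^{-1}$: it commutes with complement, $(A^c)a^{-1} = (Aa^{-1})^c$; it distributes over arbitrary unions, $(\bigcup_i A_i)a^{-1} = \bigcup_i A_i a^{-1}$; and it satisfies $(Lw^{-1})a^{-1} = L(aw)^{-1}$, all immediate from $Aa^{-1} = \{x \mid xa \in A\}$.

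The key reformulation is that, for any partition $X$, the partition $\bigcurlywedge_{B \in X}\{Ba^{-1},(Ba^{-1})^c\}$ places $x$ and $y$ in the same block exactly when $xa$ and $ya$ lie in the same block of $X$, because $x \in Ba^{-1} \Lra xa \in B$. Applying this to $X = P_{\rrL}$ and using $x \mathrel{\rrL} y \Lra (\forall w,\ xw \in L \Lra yw \in L)$, I obtain $\bigcurlywedge_{B \in P_{\rrL}}\{Ba^{-1},(Ba^{-1})^c\} = \bigcurlywedge_{w \in \Sigma^*}\{L(aw)^{-1},(L(aw)^{-1})^c\}$. Taking the coarsest common refinement over all $a \in \Sigma$ makes the index $aw$ range over \emph{all} nonempty words, and the trailing $\rc\{L,L^c\}$ contributes the block structure of the empty word, $Lε^{-1} = L$; hence $\Phi(P_{\rrL}) = \bigcurlywedge_{u \in \Sigma^*}\{Lu^{-1},(Lu^{-1})^c\} = P_{\rrL}$, which is (i).

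For (ii), let $X = \Phi(X)$. By construction $\Phi(X) \preceq \{L,L^c\}$ and $\Phi(X) \preceq \{Ba^{-1},(Ba^{-1})^c\}$ for every block $B \in X$ and every $a$, so the same holds for $X$. I would then prove by induction on $|w|$ that each right quotient $Lw^{-1}$ is a union of blocks of $X$: the base case $w = ε$ is $X \preceq \{L,L^c\}$, and for the step, writing $u = aw$ with $Lw^{-1} = \bigcup_i B_i$ a union of blocks of $X$, distributivity and $(Lw^{-1})a^{-1} = L(aw)^{-1}$ give $Lu^{-1} = \bigcup_i B_i a^{-1}$, each $B_i a^{-1}$ being a union of blocks of $X$ since $X \preceq \{B_i a^{-1},(B_i a^{-1})^c\}$. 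Thus $X \preceq \{Lw^{-1},(Lw^{-1})^c\}$ for every $w$, and since $P_{\rrL}$ is the greatest lower bound $\bigcurlywedge_w\{Lw^{-1},(Lw^{-1})^c\}$ of these partitions, $X \preceq P_{\rrL}$.

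The main obstacle is the bookkeeping in step (i): recognizing $\bigcurlywedge_{B \in X}\{Ba^{-1},(Ba^{-1})^c\}$ as the pullback of $X$ along right concatenation with $a$, and then checking that sweeping $a$ over $\Sigma$ together with $w$ over $\Sigma^*$ covers precisely the nonempty words while $\rc\{L,L^c\}$ supplies $ε$, so that no word is missed or double-counted. The Boolean commutation facts and monotonicity of $\Phi$ are routine; monotonicity is in fact only needed to guarantee that $\gfp(\Phi)$ is well defined, and since (i)--(ii) exhibit the greatest fixpoint directly it plays no essential role. One should also note the harmless degeneracy that $\{Ba^{-1},(Ba^{-1})^c\}$ collapses to the trivial partition $\{\Sigma^*\}$ when $Ba^{-1} \in \{\emptyset,\Sigma^*\}$, which does not affect any of the refinement computations.
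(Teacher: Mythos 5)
Your proof is correct, but it takes a genuinely different route from the paper's. The paper proves the lemma by unrolling the Kleene iteration of the operator from the top of the partition lattice: it shows by induction on \(n\) that the iterates satisfy \(X^{n+1} = \bigcurlywedge_{w \in \Sigma^{\leq n}}\{Lw^{-1},(Lw^{-1})^c\}\), using the identities \((La^{-1})^c = L^c a^{-1}\) and \(Ba^{-1}\cap \widetilde{B}a^{-1} = (B\cap\widetilde{B})a^{-1}\), and then passes to the limit to recover \(\bigcurlywedge_{w \in \Sigma^{*}}\{Lw^{-1},(Lw^{-1})^c\}\), which equals \(P_{\rrL}\) by Lemma~\ref{lemma:coatoms}. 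You instead characterize the greatest fixpoint order-theoretically: (i) \(P_{\rrL}\) is a fixpoint --- your pullback reformulation of \(\bigcurlywedge_{B\in X}\{Ba^{-1},(Ba^{-1})^c\}\) plus the right-congruence property of \(\rrL\) amount to the observation that \(x \rrL y\) if{}f \((x\in L \Lra y \in L)\) and \(xa \rrL ya\) for all \(a\) --- and (ii) every fixpoint is finer than \(P_{\rrL}\), by induction on \(|w|\) showing each \(Lw^{-1}\) is a union of blocks of the fixpoint; both steps are sound, and together they even establish existence of the gfp without invoking monotonicity. What your approach buys is a shorter, cleaner argument free of iteration bookkeeping. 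What the paper's approach buys is an explicit description of every intermediate iterate \(P_{\rrL}^{(n)}\), and that is not incidental: Theorem~\ref{theorem:postisomorphism} and Corollary~\ref{cor:stepwise-invariant} rely on exactly this stepwise description to build the isomorphism with the iterates of Moore's algorithm, so in the paper the inductive computation does double duty. One small caveat: the paper implicitly reads \(\gfp\) as the stabilizing limit of the iteration sequence; to reconcile your order-theoretic gfp with that reading one still needs monotonicity of the operator and stabilization of the descending chain (which holds because a regular \(L\) has finitely many quotients), both routine but worth stating.
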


The following result shows that, given a DFA \(\cD\) with \(L=\lang{\cD}\), there exists a partition isomorphism between  \(\cQ^{\cD}\) and \(P_{\rrL}\) at each step of the fixpoint computations given in  Definition~\ref{def:state-partition} and Lemma~\ref{lemma:fixpointcoatoms} respectively.

\begin{theorem}\label{theorem:postisomorphism}
Let \(\cD=(Q,\Sigma,\delta,I,F)\) be a DFA with \(L = \lang{\cD}\) and let \(\varphi:  \wp(Q) \ra \wp(\Sigma^*)\) be a function defined by \(\varphi(S) \ud W_{I, S}^{\cD}\).
Let \(\cQ^{\cD(n)}\) and \(P_{\rrL}^{(n)}\) be the \(n\)-th step of the fixpoint computation of  \(\cQ^{\cD}\) (Definition~\ref{def:state-partition}) and \(P_{\rrL}\) (Lemma~\ref{lemma:fixpointcoatoms}), respectively.
Then, \(\varphi\) is an isomorphism between \(\cQ^{\cD(n)}\) and \(P_{\rrL}^{(n)}\) for each \(n \geq 0\).
\end{theorem}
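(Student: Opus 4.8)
The plan is to argue by induction on $n$ that $\varphi$ carries the $n$-th Moore partition $\cQ^{\cD(n)}$ onto the $n$-th language partition $P_{\rrL}^{(n)}$, sending each block $S$ to the block $\varphi(S) = W_{I,S}^{\cD}$ bijectively. The whole argument rests on three elementary properties of $\varphi$ that I would isolate first, after which the induction reduces to checking that $\varphi$ intertwines the two refinement operators from Definition~\ref{def:state-partition} and Lemma~\ref{lemma:fixpointcoatoms}.

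The structural fact underlying everything is that, since $\cD$ is a complete DFA with no unreachable states, writing $I = \{q_0\}$, every word $w$ satisfies $\hat{\delta}(q_0,w) \in S$ for exactly one state, so the family $\{W_{I,q}^{\cD} \mid q \in Q\}$ is a finite partition of $\Sigma^*$ into nonempty blocks and $\varphi(S) = \bigcup_{q \in S} W_{I,q}^{\cD}$. From this I obtain at once that $\varphi$ is a Boolean homomorphism, namely $\varphi(S \cup T) = \varphi(S) \cup \varphi(T)$, $\varphi(S \cap T) = \varphi(S) \cap \varphi(T)$ by disjointness of the base blocks, and $\varphi(S^c) = (\varphi(S))^c$ by completeness and determinism, and that $\varphi$ is injective by the absence of unreachable states. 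Consequently $\varphi$ maps every partition of $Q$ to a partition of $\Sigma^*$ bijectively on blocks and preserves the coarsest common refinement, $\varphi(P_1 \rc P_2) = \varphi(P_1) \rc \varphi(P_2)$, since a block $B_1 \cap B_2$ is empty exactly when $\varphi(B_1) \cap \varphi(B_2) = \varphi(B_1 \cap B_2)$ is. The remaining ingredient is the transition-to-quotient correspondence $\varphi(\pre_a(S)) = W_{I,S}^{\cD} a^{-1} = \varphi(S) a^{-1}$, which I would verify directly from $\pre_a(S) = \{q \mid \delta(q,a) \in S\}$ and determinism of $\cD$.

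With these identities in hand, the base case is immediate: taking the Kleene iteration from the top, $\varphi(\{Q\}) = \{\Sigma^*\}$, and the first refinement gives $\varphi(\{F,F^c\}) = \{W_{I,F}^{\cD}, W_{I,F^c}^{\cD}\} = \{L, L^c\}$ using $\varphi(F) = L$ and complement-preservation. For the inductive step, assuming $\varphi$ is an isomorphism at step $n$ so that blocks correspond as $S \leftrightarrow \varphi(S) = B$, I would apply $\varphi$ to the output of Moore's operator: the quotient correspondence turns each factor $\{\pre_a(S), (\pre_a(S))^c\}$ into $\{B a^{-1}, (B a^{-1})^c\}$, and preservation of $\rc$ turns the whole meet into $\bigcurlywedge_{a, B}\{B a^{-1}, (B a^{-1})^c\} \rc \{L, L^c\}$, which is exactly $P_{\rrL}^{(n+1)}$; since $\varphi$ is a block-bijection, this yields the isomorphism at step $n+1$. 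I expect the main obstacle to be the careful bookkeeping that $\varphi$ genuinely is a Boolean homomorphism preserving $\rc$ — in particular that complementation is preserved, which truly uses both completeness and determinism of $\cD$ — and that degenerate factors, where $\pre_a(S)$ or its complement is empty, are handled consistently on both sides so that no spurious or missing blocks appear. Everything else reduces to the three intertwining identities and a routine induction.
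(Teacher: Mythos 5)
Your proposal is correct and follows essentially the same route as the paper's proof: an induction on the fixpoint iterates with base case \(\varphi(\{F,F^c\}) = \{L,L^c\}\), driven by the facts that \(\varphi\) commutes with complementation, with intersections/meets, and with the \(\pre_a\)-to-right-quotient correspondence \(\varphi(\pre_a(S)) = \varphi(S)a^{-1}\), plus injectivity of \(\varphi\) from determinism, completeness and reachability. The only difference is organizational: you package these facts abstractly as ``\(\varphi\) is an injective Boolean homomorphism preserving \(\rc\)'', whereas the paper verifies the corresponding identities (for \(\varphi(\pre_a(S)^c)\) and \(\varphi(\pre_a(S_i)\cap\pre_b(S_j))\)) by direct word-level computation inside the inductive step.
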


\begin{proof}
In order to show that \(\varphi\) is a partition isomorphism, it suffices to prove that \(\varphi\) is a bijective mapping between the partitions.
We first show that \(\varphi(\cQ^{\cD(n)}) = P_{\rrL}^{(n)}\), for every \(n \geq 0\).
Thus, the mapping \(\varphi\) is surjective.
Secondly, we show that \(\varphi\) is an injective mapping from  \(\cQ^{\cD(n)}\) to \(P_{\rrL}^{(n)}\).
Therefore, we conclude that \(\varphi\) is a bijection.

To show that \(\varphi(\cQ^{\cD(n)}) = P_{\rrL}^{(n)}\), for each \(n \geq 0\), we proceed by induction. 

\begin{itemize}
\item \emph{Base case:}
By definition, \(\cQ^{\cD(0)} = \{F,F^c\}\) and \(P_{\rrL}^{(0)} = \{L, L^c\}\).
Since \(\cD\) is deterministic (and complete), it follows that \(\varphi(F) = W_{I,F}^{\cD} = L\) and \(\varphi(F^c) = W_{I,F^c}^{\cD} = L^c\).

\item \emph{Inductive step:}
Before proceeding with the inductive step, we show that the following equations hold for each \(a,b \in \Sigma\) and \(S, S_i, S_j \in \cQ^{\cD(n)}\) with \(n \geq 0\):
\begin{align}
\varphi(\pre_a(S)^c) =  ((W_{I,S}^{\cD})a^{-1})^c & \label{eq:varphicomplement} \\
\varphi(\pre_a(S_i) \cap \pre_b(S_j)) = (W_{I,S_i}^{\cD})a^{-1} \cap (W_{I,S_j}^{\cD})b^{-1}& \label{eq:varphicap} \enspace .
\end{align}

For each \(S \in \cQ^{\cD(n)} \) and \(a \in \Sigma\) we have that:
\begin{align*}
\varphi(\pre_a(S)^c) & = \quad \text{[By definition of \(\varphi\)]} \\
W_{I,\pre_a(S)^c}^{\cD} & = \quad \text{[\(I = \{q_0\}\) and def. of \(W_{I,\pre_a(S)^c}^{\cD}\)]} \\
\{w \in \Sigma^* \mid \exists q \in \pre_a(S)^c,\; q = \hat{\delta}(q_0, w)\} & = \quad \text{[\(\cD\) is deterministic and complete]} \\
\{w \in \Sigma^* \mid \exists q \in \pre_a(S), \; q = \hat{\delta}(q_0, w)\}^c & = \quad 
\text{[By definition of \(\pre_a(S)\)]} \\
\{w \in \Sigma^* \mid \exists q \in S, \; q = \hat{\delta}(q_0, wa)\}^c & = \quad 
\text{[By definition of \((W_{I,S}^{\cD})a^{-1}\)]} \\
((W_{I, S}^{\cD})a^{-1})^c \enspace .
\end{align*}

Therefore Equation~\eqref{eq:varphicomplement} holds at each step of the fixpoint computation.
Consider now Equation~\eqref{eq:varphicap}. 
Let \(S_i, S_j \in \cQ^{\cD(n)}\).
Then,
\begin{align*}
\varphi(\pre_{a}(S_i) \cap \pre_b(S_j)) & = \quad \text{[By Def. \(\varphi\)]} \\
W_{I,(\pre_{a}(S_i) \cap \pre_b(S_j))}^{\cD} & = \quad \text{[\(I = \{q_0\}\) and def. \(W_{I,S}\)]} \\
\{w \in \Sigma^* \mid \exists q \in \pre_{a}(S_i) \cap \pre_b(S_j), q = \hat{\delta}(q_0,w)\} & = \quad \text{[By Def. of \(\cap \)]}\\ 
\{w \in \Sigma^* \mid \exists q \in \pre_{a}(S_i), \; q \in \pre_b(S_j), q = \hat{\delta}(q_0,w)\} & = \quad \text{[\(\cD\) is deterministic]}\\
W_{I, \pre_a(S_i)}^{\cD} \cap W_{I, \pre_b(S_j)}^{\cD} & = \quad \text{[By Def. of \((W_{I,S}^{\cD})a^{-1}\)]} \\
(W_{I, S_i}^{\cD})a^{-1} \cap (W_{I, S_j})b^{-1} & \enspace .
\end{align*}

Therefore Equation~\eqref{eq:varphicap} holds at each step of the fixpoint computation.

Let us assume that \(\varphi\left(\cQ^{\cD(n)}\right) = P_{\rrL}^{(n)}\) for every \(n \leq k\) with \(k > 0\).
Then,
\begin{adjustwidth}{-0.6cm}{}
\begin{align*}
\varphi\bigl(\cQ^{\cD(k{+}1)}\bigr) & = \text{[By Def.~\ref{def:state-partition} with \(X = \cQ^{\cD(k)}\)]} \\
\varphi\bigl(\bigcurlywedge\limits_{\substack{a \in \Sigma, S \in X}} \{\pre_a(S), \pre_a(S)^c\} \rc \{F, F^c\}\bigr) &= \text{[By Eqs.~\eqref{eq:varphicomplement},~\eqref{eq:varphicap} and def. of \(\bigcurlywedge\)]} \\
\bigcurlywedge\limits_{\substack{a \in \Sigma \\ \varphi(S) \in \varphi(X)}}\hspace{-5pt} \{(W_{I, S}^{\cD})a^{-1}, {((W_{I, S}^{\cD})a^{-1})}^c\} \rc \{L, L^c\} &= \text{[By induction hypothesis, \(\varphi(X) = P_{\rrL}^{(k)}\)]}\\[-6pt]
\bigcurlywedge\limits_{\substack{a \in \Sigma, B \in X'}}\hspace{-5pt} \{Ba^{-1}, {(Ba^{-1})}^c\} \rc \{L, L^c\} &= \text{[By Lemma~\ref{lemma:fixpointcoatoms} with \(X' = P_{\rrL}^{(k)}\)]}  \\[-6pt]
P_{\rrL}^{(k{+}1)} \enspace .
\end{align*}
\end{adjustwidth}
\end{itemize}

Finally, since \(\cD\) is a DFA then, for each \(S_i, S_j \in \cQ^{\cD(n)}\)(\(n \geq 0\)) with \(S_i \neq S_j\) we have that \(W_{I, S_i}^{\cD} \neq W_{I, S_j}^{\cD}\), i.e., \(\varphi (S_i) \neq \varphi(S_j)\).
Therefore, \(\varphi\) is an injective mapping.
\end{proof}

\begin{corollary}
\label{cor:stepwise-invariant}
Let \(\cD\) be a DFA with \(L = \lang{\cD}\).
Let \(\cQ^{\cD(n)}\) and \(P_{\rrL}^{(n)}\) be the \(n\)-th step of the fixpoint computation of \(\cQ^{\cD}\) and \(P_{\rrL}\) respectively.
Then, for each \(n \geq 0\),
\[P_{\rrL}^{(n)}(W_{I,S}^{\cD}) = W_{I,S}^{\cD}\enspace , \text{ for each \(S \in \cQ^{\cD(n)}\)} \enspace .\]
\end{corollary}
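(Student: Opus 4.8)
The plan is to read this off directly from Theorem~\ref{theorem:postisomorphism}, which does all the heavy lifting. That theorem asserts that the map $\varphi(S) = W_{I,S}^{\cD}$ is a partition isomorphism between $\cQ^{\cD(n)}$ and $P_{\rrL}^{(n)}$ at every stage $n \geq 0$. The single observation I would extract from the word ``isomorphism'' is that $\varphi$ sends each \emph{block} of $\cQ^{\cD(n)}$ to a \emph{block} of $P_{\rrL}^{(n)}$; that is, for every $S \in \cQ^{\cD(n)}$, the left language $W_{I,S}^{\cD}$ is \emph{exactly one} block of the partition $P_{\rrL}^{(n)}$, not merely a union of several.

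With that in hand, I would unwind the definition of precise representation. Recall that $P_{\rrL}^{(n)}(T) = \bigcup_{u \in T} P_{\rrL}^{(n)}(u)$, where $P_{\rrL}^{(n)}(u)$ denotes the block of $P_{\rrL}^{(n)}$ containing $u$. Fixing $S \in \cQ^{\cD(n)}$ and setting $T = W_{I,S}^{\cD}$, since $T$ is itself a block of $P_{\rrL}^{(n)}$, every word $u \in T$ lies in that same block, so $P_{\rrL}^{(n)}(u) = T$ for each such $u$. Taking the union over all $u \in T$ therefore collapses to $P_{\rrL}^{(n)}(T) = T$, which is precisely the desired equality $P_{\rrL}^{(n)}(W_{I,S}^{\cD}) = W_{I,S}^{\cD}$.

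I expect no real obstacle here: the content of the corollary is entirely contained in the preceding theorem, and the remaining argument is just the unfolding of notation. The only point worth stating carefully is the distinction between ``$W_{I,S}^{\cD}$ is a single block'' (which is what a partition isomorphism guarantees) and the weaker ``$W_{I,S}^{\cD}$ is a union of blocks'' (which would already suffice for the equation but is not quite what we read off directly). Making the stronger statement explicit is what renders the final union trivial, and it is the one line of the argument I would take care to phrase correctly.
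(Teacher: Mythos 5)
Your proposal is correct and matches the paper's intent exactly: the paper offers no separate proof for this corollary, treating it as an immediate consequence of Theorem~\ref{theorem:postisomorphism}, whose proof establishes precisely the fact you isolate, namely that $\varphi(\cQ^{\cD(n)}) = P_{\rrL}^{(n)}$ with $\varphi$ bijective, so each $W_{I,S}^{\cD}$ is a single block of $P_{\rrL}^{(n)}$ and is therefore precisely represented by it. Your care in distinguishing ``is a block'' from ``is a union of blocks'' is exactly the right point to make explicit.
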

It follows that Moore's DFA \(M\), whose set of states corresponds to the state-partition at the end of the execution of Moore's algorithm, satisfies that \(\forall q \in Q^M,\;  P_{\rrL}(W_{I,q}^{M}) = W_{I,q}^{M}\) with \(L = \lang{M}\).
By Theorem~\ref{theorem:minimalifreverseatomic}, we have that \(\cG{r}(M)( = M\), since \(M\) is a DFA) is minimal.

\begin{restatable}{theoremR}{theoremMoore}
\label{thm:Moore}
Let \(\cD \) be a DFA and \(M\) be Moore's DFA for \(\lang{\cD}\) as in Definition~\ref{def:MooreDFA}.
Then, \(M\) is isomorphic to \(\cF{r}(\lang{\cD})\).
\end{restatable}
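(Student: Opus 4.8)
The plan is to deduce the statement from Theorem~\ref{theorem:minimalifreverseatomic}, using the stepwise isomorphism of Theorem~\ref{theorem:postisomorphism} to transfer the required condition from $\cQ^{\cD}$ to $M$. First I would pass to the limit of the two fixpoint computations. Both sequences $\{\cQ^{\cD(n)}\}_{n}$ and $\{P_{\rrL}^{(n)}\}_{n}$ are refining chains in finite partition lattices (recall $L$ is regular, so $\rrL$ has finite index), hence each stabilises after finitely many steps at its greatest fixpoint, namely $\cQ^{\cD}$ and $P_{\rrL}$. Since $\varphi(S) = W_{I,S}^{\cD}$ is a partition isomorphism between $\cQ^{\cD(n)}$ and $P_{\rrL}^{(n)}$ at every stage (Theorem~\ref{theorem:postisomorphism}), the two chains reach their fixpoints at the same index, and $\varphi$ is a partition isomorphism between $\cQ^{\cD}$ and $P_{\rrL}$. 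Passing Corollary~\ref{cor:stepwise-invariant} to the fixpoint then gives $P_{\rrL}(W_{I,S}^{\cD}) = W_{I,S}^{\cD}$ for every block $S \in \cQ^{\cD}$.

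Next I would translate this into the hypothesis of Theorem~\ref{theorem:minimalifreverseatomic} for $M$. The states of $M$ are precisely the blocks $S \in \cQ^{\cD}$, and since $M$ is the quotient of $\cD$ by $\cQ^{\cD}$, the left language of a state $q = S$ in $M$ equals the pooled left language of the merged states, i.e. $W_{I,q}^{M} = W_{I,S}^{\cD}$. Combined with the previous step this yields $P_{\rrL}(W_{I,q}^{M}) = W_{I,q}^{M}$ for every $q \in Q^M$, which is exactly the condition required. Because Moore's partition is a transition congruence (this is what the algorithm's stabilisation guarantees), $M$ is a bona fide DFA with no unreachable states and $\lang{M} = \lang{\cD} = L$, so Theorem~\ref{theorem:minimalifreverseatomic} applies and $\cG{r}(M)$ is the minimal DFA for $L$. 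As $M$ is already deterministic, $\cG{r}(M)$ is isomorphic to $M^{D}$, hence to $M$, by Theorem~\ref{theoremF}~(\ref{lemma:rightNDet}); thus $M$ is itself minimal. Since $\cF{r}(L)$ is isomorphic to the minimal DFA for $L$ (Theorem~\ref{theoremF}~(\ref{theorem:minimalDFAlanguage})) and the minimal DFA is unique up to isomorphism, I conclude $M \cong \cF{r}(L)$.

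The main obstacle I expect is the bookkeeping of the identity $W_{I,q}^{M} = W_{I,S}^{\cD}$ relating the left language of a quotient state to the union of left languages of the block it represents, together with the auxiliary facts that $M$ is well-defined as a DFA and language-preserving. These are routine quotient-automaton arguments, but they are exactly where the determinism of $\cD$ and the congruence property of $\cQ^{\cD}$ get used; once in place, the appeal to Theorem~\ref{theorem:minimalifreverseatomic} and to uniqueness of the minimal DFA is immediate. Alternatively, one could bypass minimality entirely and verify directly that the fixpoint isomorphism $\varphi$ sends the initial state of $M$ to $P_{\rrL}(\varepsilon)$ (since $\varphi(\cQ^{\cD}(q_0)) \ni \varepsilon$), sends final blocks to final blocks, and commutes with the transition functions, thereby exhibiting $\varphi$ as an automaton isomorphism between $M$ and $\cF{r}(L)$.
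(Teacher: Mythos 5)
Your proof is correct, but it follows a genuinely different route from the paper's own deferred proof. You reduce the statement to Theorem~\ref{theorem:minimalifreverseatomic}: both fixpoint iterations stabilize (at the same index, since $\varphi$ intertwines them), Corollary~\ref{cor:stepwise-invariant} then gives $P_{\rrL}(W_{I,S}^{\cD})=W_{I,S}^{\cD}$ for every block $S\in\cQ^{\cD}$, the identity $W_{I,S}^{M}=W_{I,S}^{\cD}$ transfers this invariant to the states of $M$, and Theorem~\ref{theorem:minimalifreverseatomic} yields that $\cG{r}(M)$ is the minimal DFA; since $M$ is a complete DFA with no unreachable states, $\cG{r}(M)$ is isomorphic to $M$ by Theorem~\ref{theoremF}~(\ref{lemma:rightNDet}), so $M$ itself is minimal, and uniqueness of the minimal DFA together with Theorem~\ref{theoremF}~(\ref{theorem:minimalDFAlanguage}) gives $M\cong\cF{r}(\lang{\cD})$. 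This is essentially the informal argument the paper sketches in the body right after Corollary~\ref{cor:stepwise-invariant}, but its actual appendix proof is different: it takes the map $\psi(S)=W_{I,S}^{M}$, which by Theorem~\ref{theorem:postisomorphism} (read at the fixpoint) is a partition isomorphism between $\cQ^{\cD}$ and $P_{\rrL}$, and verifies by hand that $\psi$ sends the initial state to $P_{\rrL}(\varepsilon)$, final states to final states, and commutes with the transition functions, hence is an automaton isomorphism between $M$ and $\cF{r}(\lang{\cD})$ --- precisely the alternative you mention in your closing sentence. As for what each approach buys: your route is shorter given the surrounding machinery, but it leans on minimality, on uniqueness of the minimal DFA, and on the fact that $\cG{r}(M)\cong M$ for a DFA $M$; the paper's route invokes none of these, staying entirely at the level of an explicit isomorphism between the two automata. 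In both arguments the load-bearing (and easily overlooked) fact is the one you rightly flag: at the fixpoint, $\cQ^{\cD}$ is a transition congruence of $\cD$, which is what makes $M$ well defined, gives $W_{I,S}^{M}=W_{I,S}^{\cD}$, and ensures $\lang{M}=\lang{\cD}$; the paper's proof asserts this identity without comment.
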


Finally, recall that Hopcroft~\cite{hopcroft1971} defined a DFA minimization algorithm which offers better performance than Moore's.
The ideas used by Hopcroft can be adapted to our framework to devise a new algorithm from computing \(P_{\rrL}\).
However, by doing so, we could not derive a better explanation than the one provided by Berstel et al.~\cite{berstel2010minimization}.

\section{Related Work and Conclusions}\label{sec:relatedWork}

Brzozowski and Tamm~\cite{brzozowski2014theory} showed that every regular language defines a unique NFA, which they call \emph{átomaton}.
The átomaton is built upon the minimal DFA \(\cN^{DM}\) for the language, defining its states as non-empty intersections of complemented or uncomplemented right languages of \(\cN^{DM}\), i.e., the atoms of the language.
They also observed that the atoms correspond to intersections of complemented or uncomplemented left quotients of the language.
Then they proved that the átomaton is isomorphic to the reverse automaton of the minimal deterministic DFA for the reverse language. 

Intuitively, the construction of the átomaton based on the right languages of the minimal DFA corresponds to \(\cG{\ell}(\cN^{DM})\), while its construction based on left quotients of the language corresponds to \(\cF{\ell}(\lang{\cN})\).

\begin{corollary}
\label{cor:atomaton}
Let \(\cN^{DM}\) be the minimal DFA for a regular language \(L\). Then, 
\begin{alphaenumerate}
\item \(\cG{\ell}(\cN^{DM})\) is isomorphic to the átomaton of \(L\).\label{cor:atomatonG}
\item \(\cF{\ell}(L)\) is isomorphic to the átomaton of \(L\).\label{cor:atomatonF}
\end{alphaenumerate}
\end{corollary}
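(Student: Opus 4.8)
The plan is to reduce both parts to a single target: the átomaton of $L$ is isomorphic to $(\cF{r}(L^R))^R$, the reverse of the minimal DFA for the reverse language $L^R$. This is exactly Brzozowski and Tamm's characterization recalled above, once we note that $\cF{r}(L^R)$ is the minimal DFA for $L^R$ by Theorem~\ref{theoremF}~(\ref{theorem:minimalDFAlanguage}). With this object fixed, each part becomes a short chain of the isomorphisms already established in Theorem~\ref{theoremF}, and no genuinely new calculation is required.

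For part~(\ref{cor:atomatonF}) I would argue directly: Theorem~\ref{theoremF}~(\ref{lemma:FlisomorphicRfrR}) gives $\cF{\ell}(L) \cong (\cF{r}(L^R))^R$, and the right-hand side is precisely the reverse of the minimal DFA for $L^R$, hence isomorphic to the átomaton of $L$. To make the correspondence transparent at the level of states, I would also invoke Lemma~\ref{eq:atoms}, which identifies the blocks of $P_{\rlL}$—that is, the states of $\cF{\ell}(L) = \cH^{\ell}(\rlL,L)$—with the atoms of $L$; thus the two automata share the same state set under this identification, and only the transition/initial/final structure need be matched, which the reverse characterization handles for free.

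For part~(\ref{cor:atomatonG}) the additional work is to tie $\cG{\ell}(\cN^{DM})$ to the same object. By Theorem~\ref{theoremF}~(\ref{lemma:AlRequalArNR}) we have $\cG{\ell}(\cN^{DM}) \cong (\cG{r}((\cN^{DM})^R))^R$, so it suffices to show that determinizing $(\cN^{DM})^R$ yields the minimal DFA for $L^R$, i.e.\ $\cG{r}((\cN^{DM})^R) \cong \cF{r}(L^R)$. Here I would use that $(\cN^{DM})^R$ is by construction a co-deterministic automaton recognizing $\lang{\cN^{DM}}^R = L^R$; every co-DFA satisfies the right-hand side of Equation~\eqref{eq:RightSetsDisjoint}, so by Lemma~\ref{lemma:languageEqualPost} and Theorem~\ref{theorem:minimalifreverseatomic} its determinization is minimal. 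Chaining these yields $\cG{\ell}(\cN^{DM}) \cong (\cF{r}(L^R))^R \cong$ átomaton, and as a byproduct $\cG{\ell}(\cN^{DM}) \cong \cF{\ell}(L)$, which recovers part~(\ref{cor:atomatonF}).

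The main obstacle I anticipate is not any single step but the careful bookkeeping of the reverse operations and the exact matching with Brzozowski and Tamm's definition: one must check that their atoms (non-empty intersections of complemented or uncomplemented left quotients) coincide with the blocks of $P_{\rlL}$—supplied by Lemma~\ref{eq:atoms}—and that their reverse-of-the-minimal-DFA characterization is applied to $L^R$ rather than $L$, so that the orientations of the two reverse operations cancel correctly. Once these orientations are pinned down, both isomorphisms follow formally from Theorem~\ref{theoremF} without further computation.
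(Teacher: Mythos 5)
Your proof is correct, but it takes a genuinely different route from the paper's. The paper establishes this corollary by direct definitional matching inside the congruence framework: Lemma~\ref{eq:atoms} identifies the atoms of \(L\) with the blocks of \(P_{\rlL}\), so the states, transitions (\(aP_{\rlL}(v) \subseteq P_{\rlL}(u)\)), initial states (blocks of words of \(L\)) and final state (the block of \(\varepsilon\)) of \(\cF{\ell}(L) = \cH^{\ell}(\rlL,L)\) from Definition~\ref{def:left-const} coincide literally with Brzozowski and Tamm's definition of the átomaton, and part~(\ref{cor:atomatonG}) is the analogous matching for \(\rl_{\cN^{DM}}\), whose blocks are the non-empty intersections of complemented or uncomplemented right languages of the minimal DFA. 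Your argument instead chains Theorem~\ref{theoremF}~(\ref{lemma:FlisomorphicRfrR}), (\ref{lemma:AlRequalArNR}) and Lemma~\ref{lemma:languageEqualPost} with Brzozowski and Tamm's theorem that the átomaton is isomorphic to the reverse of the minimal DFA for \(L^R\), used as a black box; this is logically sound (that theorem has an independent proof in the cited paper), and it buys brevity, since no unfolding of the átomaton's definition is needed and part~(\ref{cor:atomatonF}) falls out as a byproduct of part~(\ref{cor:atomatonG}). What it costs is self-containedness and fit with the paper's purpose: this corollary is precisely what lets the paper \emph{re-derive} Brzozowski and Tamm's structural results (their Theorem~3, e.g.\ \(\cA \cong \cD^{RDR}\), which is exactly the reverse-of-the-minimal-DFA-of-\(L^R\) characterization you invoke) from Figure~\ref{fig:DiagramResults}; proving the corollary from that characterization would make the paper's subsequent derivation circular, whereas the paper's definitional proof keeps the congruence framework independent of the results it aims to explain. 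In your write-up, Lemma~\ref{eq:atoms} is mentioned only as a transparency aid, but in the paper's approach it is the load-bearing step.
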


In the same paper, they also defined the notion of \emph{partial átomaton} which is built upon an NFA \(\cN\).
Each state of the partial atomaton is a non-empty intersection of complemented or uncomplemented right languages of \(\cN\), i.e., union of atoms of the language.
Intuitively, the construction of the partial átomaton corresponds to \(\cG{\ell}(\cN)\).

\begin{corollary}
\label{cor:partial-atomaton}
Let \(\cN\) be an NFA. Then, \(\cG{\ell}(\cN)\) is isomorphic to the partial átomaton of \(\cN\).\break
\end{corollary}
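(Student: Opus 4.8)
The plan is to exhibit an isomorphism whose action on states is forced by a characterization of the equivalence classes of \(\rlN\), exactly mirroring the argument already used for the átomaton in Corollary~\ref{cor:atomaton}. Recall that \(\cG{\ell}(\cN) = \cH^{\ell}(\rlN, L)\), so by Definition~\ref{def:left-const} its states are the blocks \(\{P_{\rlN}(u) \mid u \in \Sigma^*\}\). The crux is to recognize these blocks as the states of the partial átomaton.

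First I would prove the analogue of Lemma~\ref{eq:atoms} for the automata-based congruence: for every \(u \in \Sigma^*\),
\[
P_{\rlN}(u) = \bigcap_{\substack{q \in Q\\ u \in W_{q,F}^{\cN}}} W_{q,F}^{\cN} \;\cap \bigcap_{\substack{q \in Q\\ u \notin W_{q,F}^{\cN}}} (W_{q,F}^{\cN})^c \enspace .
\]
This is immediate from Definition~\ref{def:automataEquiv}, since \(\pre_u^{\cN}(F) = \{q \in Q \mid u \in W_{q,F}^{\cN}\}\): two words are \(\rlN\)-equivalent iff they belong to exactly the same right languages \(W_{q,F}^{\cN}\) of \(\cN\). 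Consequently every block \(P_{\rlN}(u)\) is a nonempty intersection of complemented or uncomplemented right languages of \(\cN\) — precisely a partial atom in the sense of Brzozowski and Tamm — and conversely every such nonempty intersection is realised as some \(P_{\rlN}(u)\). This yields a bijection between the state set of \(\cG{\ell}(\cN)\) and the states of the partial átomaton of \(\cN\).

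Second, I would check that this bijection respects the remaining automaton data. For the final state, note that \(\varepsilon \in W_{q,F}^{\cN}\) iff \(q \in F\), so \(P_{\rlN}(\varepsilon)\) is exactly the partial atom containing \(\varepsilon\), matching the single final state \(\{P_{\rlN}(\varepsilon)\}\) of \(\cH^{\ell}(\rlN, L)\). For the initial states, since \(\pre_u^{\cN}(F)\) is constant on each block and \(u \in L\) iff \(\pre_u^{\cN}(F) \cap I \neq \emptyset\), each block lies entirely inside or entirely outside \(L\); hence \(\{P_{\rlN}(u) \mid u \in L\}\) is exactly the family of partial atoms contained in \(L\), which are the initial states of the partial átomaton. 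For the transitions, \(\cH^{\ell}\) places an \(a\)-edge from \(P_{\rlN}(u)\) to \(P_{\rlN}(v)\) iff \(a P_{\rlN}(v) \subseteq P_{\rlN}(u)\), and I would verify that this coincides with the adjacency rule of the partial átomaton transcribed onto the same blocks.

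The state bijection is routine; the main obstacle is purely bookkeeping, namely faithfully transcribing Brzozowski and Tamm's definition of the partial átomaton~\cite{brzozowski2014theory} (its initial and final states and its transition relation) into the vocabulary of blocks of \(\rlN\) so that it literally agrees with Definition~\ref{def:left-const}. As an independent cross-check — and a possible shortcut for the transition and initial/final matching — I would combine the duality \(\cG{\ell}(\cN) \cong (\cG{r}(\cN^R))^R\) of Theorem~\ref{theoremF}~(\ref{lemma:AlRequalArNR}) with \(\cG{r}(\cN^R) \cong (\cN^R)^D\) from Theorem~\ref{theoremF}~(\ref{lemma:rightNDet}) and compare against Brzozowski and Tamm's own description of the partial átomaton as the reverse of a determinised reverse automaton, paralleling the treatment of the átomaton in Corollary~\ref{cor:atomaton}.
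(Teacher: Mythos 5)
Your proposal is correct and takes essentially the same route as the paper: the paper justifies this corollary precisely by identifying the states of the partial átomaton (non-empty intersections of complemented or uncomplemented right languages \(W_{q,F}^{\cN}\)) with the blocks of \(P_{\rlN}\), which is exactly your key characterization of \(P_{\rlN}(u)\) via \(\pre_u^{\cN}(F)\), after which the initial/final states and transitions of Definition~\ref{def:left-const} line up with Brzozowski and Tamm's construction. The paper in fact states the corollary with only this intuitive identification and no written proof, so your account (including the optional cross-check via \(\cG{\ell}(\cN) \cong (\cG{r}(\cN^R))^R \cong ((\cN^R)^D)^R\)) is, if anything, more explicit than the original.
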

\vspace{-10pt}

Finally, they also presented a number of results~\cite[Theorem 3]{brzozowski2014theory} related to the átomaton \(\cA\) of a minimal DFA \(\cD\) with \(L = \lang{\cD}\):
\begin{enumerate}
\item \(\cA\) is isomorphic to \(\cD^{RDR}\).\label{enum:ADRDR}
\item \(\cA^R\) is the minimal DFA for \(L^R\) 
\item \(\cA^D\) is the minimal DFA for \(L\).
\item \(\cA\) is isomorphic to \(\cN^{RDMR}\) for every NFA \(\cN\) accepting \(L\).\label{enum:ANRDMR}
\end{enumerate}

All these relations can be inferred from Figure~\ref{fig:DiagramResults} which connects all the automata constructions described in this paper together with the constructions introduced by Brzozowski and Tamm.
For instance, property~\ref{enum:ADRDR} corresponds to the path starting at \(\cN^{DM}\) (the minimal DFA for \(\lang{\cN}\)), labeled with \(R - \cG{r} - R \), and ending in the átomaton of \(\lang{\cN}\).
On the other hand, property~\ref{enum:ANRDMR} corresponds to the path starting at \(\cN\), labeled with \(R - \cF{r} - R\) and ending in the átomaton of \(\lang{\cN}\).
Finally, the path starting at \(\cN\), labeled with \(R - \cG{r} - R \) and ending in the partial átomaton of \(\cN\) shows that the later is isomorphic to \(\cN^{RDR}\).

\begin{figure}[!ht]
\vspace{-7pt}
\centering
\begin{tikzcd}[column sep=huge, row sep=normal]
\cN \ar[r,"\cG{\ell};\; C.\ref{cor:partial-atomaton}", pos=0.6] \ar[d,"R",leftrightarrow] \ar[rr,"\cF{r}; \; T.\ref{theoremF}(\ref{theorem:minimalDFAlanguage})", bend left=25, near end] \ar[rrr,"\cF{\ell}; \;T.\ref{theoremF}(\ref{lemma:FlisomorphicRfrR})", bend left=25]& \begin{tabular}{c} Partial átomaton \\ of \(\cN\)\end{tabular} \ar[r,"\cG{r}; \; T.\ref{theoremF}(\ref{lemma:rightNDet})"] \ar[d,"R",leftrightarrow] & \cN^{DM} \ar[d,"R",leftrightarrow] \ar[r,"\cG{\ell};\;C.\ref{cor:atomaton}(\ref{cor:atomatonG})"] & \begin{tabular}{c} Átomaton \\ of \(\lang{\cN}\)\end{tabular} \ar[d,"R",leftrightarrow] \ar[l,"\cG{r}; \; T.\ref{theoremF}(\ref{lemma:LS+RS=RN})"',bend right=25, pos=0.5] \\
\cN^{R} \ar[r,"\cG{r}; \;T.\ref{theoremF}(\ref{lemma:rightNDet})"] \ar[rrr,"\cF{r}; \; T.\ref{theoremF}(\ref{theorem:minimalDFAlanguage})", bend right=25] \ar[rr,"\cF{\ell}; \;C.\ref{cor:atomaton}(\ref{cor:atomatonF})", bend right=24, pos=0.4] & \cN^{RD} \ar[r,"\cG{\ell}; \; T.\ref{theoremF}(\ref{lemma:AlRequalArNR})"] & \begin{tabular}{c} Átomaton \\ of \(\lang{\cN^{R}}\)\end{tabular} \ar[r,"\cG{r}; \; T.\ref{theoremF}(\ref{lemma:rightNDet})"] & \cN^{RDM} \ar[l,"\cG{\ell};\;C.\ref{cor:atomaton}(\ref{cor:atomatonG})"',bend right=25, pos=0.6]
\end{tikzcd}
\vspace{-15pt}
\caption{Extension of the diagram of Figure~\ref{Figure:diagramAutomata} including the átomaton and the partial átomaton.
Recall that \(\cN^{DM}\) is the minimal DFA for \(\lang{\cN}\).
The results referenced in the labels are those justifying the output of the operation.}\label{fig:DiagramResults}
\vspace{-7pt}
\end{figure}

In conclusion, we establish a connection between well-known independent minimization methods through Theorem~\ref{theorem:minimalifreverseatomic}.
Given a DFA, the left languages of its states form a partition on words, \(P\), and thus, each left language is identified by a state.
Intuitively, Moore's algorithm merges states to enforce the condition of Theorem~\ref{theorem:minimalifreverseatomic}, which results in merging blocks of \(P\) that belong to the same Nerode's equivalence class.
Note that Hopcroft's partition refinement method~\cite{hopcroft1971} achieves the same goal at the end of its execution though, stepwise, the partition computed may differ from Moore's.
On the other hand, any co-deterministic NFA satisfies the right-hand side of Equation~\eqref{eq:RightSetsDisjoint} hence, by Lemma~\ref{lemma:minimalifreverseatomic}, satisfies the condition of Theorem~\ref{theorem:minimalifreverseatomic}.
Therefore, the double-reversal method, which essentially determinizes a co-determinized NFA, yields the minimal DFA.
Finally, the left-right duality (Lemma~\ref{lemma:AutomataPartitionleftRightReverse}) of the language-based equivalences shows that the condition of Theorem~\ref{theorem:minimalifreverseatomic} is equivalent to that of Brzozowski and Tamm~\cite{brzozowski2014theory}.

Some of these connections have already been studied in order to offer a better understanding of Brzozowski's double-reversal method~\cite{Adamek2012,bonchi_algebra-coalgebra_2014,Champarnaud2002,Garcia2013}.
In particular, Adámek et al.~\cite{Adamek2012} and Bonchi et al.~\cite{bonchi_algebra-coalgebra_2014} offer an alternative view of minimization and determinization methods in a uniform way from a category-theoretical perspective.
In contrast, our work revisits these well-known minimization techniques relying on simple language-theoretical notions.

\appendix

\section{Deferred Proofs}
\CongruencePbwComplete*
\begin{proof}\hfill
\begin{enumerate}
\item \(\rr\) is a right congruence if{}f \(P_{\rr}(v)u \subseteq P_{\rr}(vu)\), for all \(u,v \in \Sigma^*\).

To simplify the notation, we denote \(P_{\rr}\), the partition induced by \(\rr\), simply by \(P\).\break
(\(\Ra\)).
Let \(x \in P(v)u\), i.e., \(x = \tilde{v}u\) with \(P(\tilde{v}) = P(v)\) (hence \(v \rr \tilde{v}\)). 
Since \(\rr\) is a right congruence and \(v \rr \tilde{v}\) then \(vu \rr \tilde{v}u\).
Therefore \(x \in P(vu)\).
\\(\(\La\)).
By hypothesis, for each \(u,v \in \Sigma^*\) and \(\tilde{v} \in P(v)\), \(\tilde{v}u \in P(vu)\).
Therefore, \(v \rr \tilde{v} \Ra \tilde{v}u \rr vu\).

\item\(\rl\) is a left congruence if{}f \(uP_{\rl}(v) \subseteq P_{\rl}(uv)\), for all \(u,v \in \Sigma^*\).

To simplify the notation, we denote \(P_{\rl}\), the partition induced by \(\rl\) simply by \(P\).\break
(\(\Ra\)).
Let \(x \in uP(v)\), i.e., \(x = u\tilde{v}\) with \(P(\tilde{v}) = P(v)\) (hence \(v \rl \tilde{v}\)). Since \(\rl\) is a left congruence and \(v \rl \tilde{v}\) then \(uv \rl u\tilde{v}\).
Therefore \(x \in P(uv)\).
\\(\(\La\)).
By hypothesis, for each \(u,v \in \Sigma^*\) and \(\tilde{v} \in P(v)\), \(u\tilde{v} \in P(uv)\), for all \(u \in \Sigma^*\).
Therefore \(v \rl \tilde{v} \Ra u\tilde{v} \rl uv\).\qedhere
\end{enumerate}
\end{proof}

\HrGeneratesL*

\begin{proof}
To simplify the notation, we denote \(P_{\rr}\), the partition induced by \(\rr\), simply by \(P\).
Let \(\cH = \cH^{r}(\rr, L) = (Q, \Sigma, \delta, I, F)\).
First, we prove that
\begin{equation}\label{eq:right-langs}
W_{I,P(u)}^{\cH} =   P(u), \quad \text{for each } u\in\Sigma^* \enspace .
\end{equation}

(\(\subseteq\)). We show that, for all \(w \in \Sigma^*\), \(w \in W_{I,P(u)}^{\cH} \Ra w \in P(u)\).
The proof goes by induction on length of \(w\).
\begin{itemize}
\item \emph{Base case:}
Let \(w = \varepsilon\) and \(\varepsilon \in W_{I,P(u)}^{\cH}\).
Note that the only initial state of \(\cH\) is \(P(\varepsilon)\).
Then, \(P(u) = \delta (P(\varepsilon),\varepsilon)\) and thus, \(P(u) = P(\varepsilon)\).
Hence, \(\varepsilon \in P(u)\).

Let \(w = a\) with \(a \in \Sigma\) and \(a \in W_{I,P(u)}^{\cH}\) .
Then, \(P(u) = \delta(P(\varepsilon), a)\).
By Definition~\ref{def:right-const}, \(P(\varepsilon)a \subseteq P(u)\).
Therefore, \(a \in P(u)\).

\item\emph{Inductive step:}
Now we assume by hypothesis of induction that, if \(|w| = n \;(n>1)\) then \(w \in W_{I,P(u)}^{\cH} \Ra w \in P(u)\).
Let \(|w| = n + 1\) and \(w \in W_{I,P(u)}^{\cH}\).
Assume w.l.o.g. that \(w = xa\) with \(x \in \Sigma^*\) and \(a \in \Sigma\).
Then, there exists a state \(q \in Q\) such that \(x \in W_{I,q}^{\cH}\) and \(P(u) = \delta(q, a)\).
Since \(x\) satisfies the induction hypothesis, we have that \(x \in q\), i.e., \(q\) denotes the state \(P(x)\).
On the other hand, by Definition~\ref{def:right-const}, we have that \(P(x)a \subseteq P(u)\).
Therefore, \(xa \in P(u)\).
\end{itemize}
(\(\supseteq\)). We show that, for all \(w \in \Sigma^*\), \(w \in P(u) \Ra w \in W_{I,P(u)}^{\cH}\).
Again, the proof goes by induction on length of \(w\).
\begin{itemize}
\item \emph{Base case:}
Let \(w = \varepsilon\) and \(\varepsilon \in P(u)\).
Then, \(P(u) = P(\varepsilon)\).
By Definition~\ref{def:right-const}, \(P(\varepsilon)\) is the initial state of \(\cH\).
Then, \( \varepsilon \in W_{I,P(\varepsilon)}^{\cH}\).

Let \(w = a\) with \(a \in \Sigma\) and \(a \in P(u)\).
Then \(P(u) = P(a)\).
Since \(P\) is a partition induced by a right congruence, by Lemma \ref{lemma:CongruencePbwComplete}, we have that \(P(\varepsilon)a \subseteq P(a)\).
Therefore, by Definition~\ref{def:right-const}, \(P(a) = \delta (P(\varepsilon), a)\).
Since \(P(\varepsilon)\) is the initial state of \(\cH\), we have that \(a \in W_{I,P(a)}^{\cH} \), i.e., \(w \in W_{I,P(u)}^{\cH}\).

\item\emph{Inductive step:}
Now we assume by hypothesis of induction that, if \(|w| = n\; (n>1)\) then \(w \in P(u) \Ra w \in W_{I,P(u)}^{\cH}\).
Let \(|w| = n + 1\) and \(w \in P(u)\).
Assume w.l.o.g. that \(w = xa\) with \(x \in \Sigma^*\) and \(a \in \Sigma\).
Then \(P(xa) = P(u)\).
Since \(P\) is a partition induced by a right congruence, by Lemma \ref{lemma:CongruencePbwComplete}, we have that \(P(x)a \subseteq P(xa)\).
Since \(x\in P(x)\), by induction hypothesis, \(x \in  W_{I,P(x)}^{\cH}\).
On the other hand, by Definition~\ref{def:right-const}, \(P(xa) = \delta(P(x), a) \).
Hence \(xa \in W_{I,P(xa)}^{\cH}\), i.e., \(w \in W_{I,P(u)}^{\cH}\).
\end{itemize}

We conclude this proof by showing that \(\lang{\cH} = L\).
\begin{align*}
\lang{\cH} &= \quad \text{[By definition of \(\lang{\cH}\)]} \\
\bigcup\limits_{q \in F} W_{I,q}^{\cH} \quad
& =  \quad \text{[By Definition~\ref{def:right-const}]}\\
\bigcup\limits_{\substack{P(w) \in Q \\ w \in L }} W_{I,P(w)}^{\cH} & = \quad  \text{[By Equation \eqref{eq:right-langs}]} \\
\bigcup\limits_{w \in L} P(w) & = \quad  \text{[By hypothesis, \(P(L) = L\)]} \\
L \enspace . \tag*{\qedhere}\\
\end{align*}
\end{proof}

\HlgeneratesL*

\begin{proof}
To simplify the notation, we denote \(P_{\rl}\), the partition induced by \(\rl\), simply by \(P\).
Let \(\cH = \cH^{\ell}(\rl, L) = (Q, \Sigma, \delta, I, F)\).
First, we prove that
\begin{equation}\label{eq:left-langs}
W_{P(u), F}^{\cH} =   P(u), \quad \text{for each } u\in\Sigma^* \enspace .
\end{equation}

(\(\subseteq\)). We show that, for all \(w \in \Sigma^*\), \(w \in W_{P(u), F}^{\cH} \Ra w \in P(u)\).
The proof goes by induction on length of \(w\).

\begin{itemize}
\item \emph{Base case:}
Let \(w = \varepsilon\) and \(\varepsilon \in W_{P(u), F}^{\cH}\).
Note that the only final state of \(\cH\) is \(P(\varepsilon)\).
Then, \(P(\varepsilon) \in \delta (P(u),\varepsilon)\) and thus, \(P(u) = P(\varepsilon)\).
Hence, \(\varepsilon \in P(u)\).

Let \(w = a\) with \(a \in \Sigma\) and \(a \in W_{P(u), F}^{\cH}\) .
Then, \(P(\varepsilon) \in \delta(P(u), a)\).
By Definition~\ref{def:left-const}, \(a P(\varepsilon) \subseteq P(u)\).
Therefore, \(a \in P(u)\).

\item\emph{Inductive step:}
Now we assume by hypothesis of induction that, if \(|w| = n\; (n>1)\) then \(w \in W_{P(u), F}^{\cH} \Ra w \in P(u)\).
Let \(|w| = n + 1\) and \(w \in W_{P(u), F}^{\cH}\).
Assume w.l.o.g. that \(w = ax\) with \(a \in \Sigma\) and \(x \in \Sigma^*\).
Then, there exists a state \(q \in Q\) such that \(x \in W_{q, F}^{\cH}\) and \(q \in \delta(P(u), a)\).
Since \(x\) satisfies the induction hypothesis, we have that \(x \in q\), i.e., \(q\) denotes the state \(P(x)\).
On the other hand, by Definition~\ref{def:left-const}, we have that \(aP(x) \subseteq P(u)\).
Therefore, \(ax \in P(u)\).
\end{itemize}

(\(\supseteq\)). We show that, for all \(w \in \Sigma^*\), \(w \in P(u) \Ra w \in W_{P(u), F}^{\cH}\).
Again, the proof goes by induction on length of \(w\).

\begin{itemize}
\item \emph{Base case:}
Let \(w = \varepsilon\) and \(\varepsilon \in P(u)\).
Then, \(P(u) = P(\varepsilon)\).
By Definition~\ref{def:right-const}, \(P(\varepsilon)\) is the final state of \(\cH\).
Then, \( \varepsilon \in W_{P(u), F}^{\cH}\).

Let \(w = a\) with \(a \in \Sigma\) and \(a \in P(u)\).
Then \(P(u) = P(a)\).
Since \(P\) is a partition induced by a left congruence, by Lemma \ref{lemma:CongruencePbwComplete}, we have that \(aP(\varepsilon) \subseteq P(a)\).
Therefore, by Definition~\ref{def:left-const}, \(P(\varepsilon) \in \delta (P(a), a)\).
Since \(P(\varepsilon)\) is the final state of \(\cH\), we have that \(a \in W_{P(a), F}^{\cH} \), i.e., \(w \in W_{P(u), F}^{\cH} \)~.

\item\emph{Inductive step:}
Now we assume by hypothesis of induction that, if \(|w| = n \; (n>1)\) then \(w \in P(u) \Ra w \in W_{P(u), F}^{\cH}\).
Let \(|w| = n + 1\) and \(w \in P(u)\).
Assume w.l.o.g. that \(w = ax\) with \(a \in \Sigma\) and \(x \in \Sigma^*\).
Then \(P(ax) = P(u)\).
Since \(P\) is a partition induced by a left congruence, by Lemma \ref{lemma:CongruencePbwComplete}, we have that \(aP(x) \subseteq P(ax)\).
Since \(x \in P(x)\), by induction hypothesis, \(x \in W_{P(x), F}^{\cH}\).
On the other hand, by Definition~\ref{def:left-const}, \(P(x) \in \delta(P(ax), a) \).
Hence \(ax \in W_{P(ax), F}^{\cH}\), i.e., \(w \in W_{P(u), F}^{\cH}\).
\end{itemize}

We conclude this proof by showing that \(\lang{\cH} = L\).
\begin{align*}
\lang{\cH} &= \quad \text{[By definition of \(\lang{\cH}\)]} \\
\bigcup\limits_{q \in I} W_{q, F}^{\cH} & = \quad \text{[By Definition~\ref{def:right-const}]}\\
\bigcup\limits_{\substack{P(w) \in Q \\ w \in L }} W_{P(w), F}^{\cH} & = \quad  \text{[By Equation \eqref{eq:left-langs}]} \\
\bigcup\limits_{w \in L} P(w) & = \quad  \text{[By hypothesis, \(P(L) = L\)]}\\
 L \enspace . \tag*{\qedhere}\\
\end{align*}
\end{proof}

\AutomataPartitionleftRightReverse*

\begin{proof}
Let \(\cH^{r}(\rr, L) = (Q, \Sigma, \delta, I, F)\) and \((\cH^{\ell}(\rl, L^R))^R = (\widetilde{Q}, \Sigma, \widetilde{\delta}, \widetilde{I}, \widetilde{F})\).
We will show that \(\cH^{r}(\rr, L)\) is isomorphic to \((\cH^{\ell}(\rl, L^R))^R\).

Let \(\varphi: Q \rightarrow \widetilde{Q}\) be a mapping assigning to each state \(P_{\rr}(u) \in Q\) with \(u \in \Sigma^*\), the state \(P_{\rl}(u^R) \in \widetilde{Q}\).
We show that \(\varphi\) is an NFA isomorphism between \(\cH^{r}(\rr, L)\) and \((\cH^{\ell}(\rl, L^R))^R\).

The initial state \(P_{\rr}(\varepsilon)\) of \(\cH^{r}(\rr, L)\) is mapped to \(P_{\rl}(\varepsilon)\) which is the final state of \(\cH^{\ell}(\rl, L^R)\), i.e., the initial state of \((\cH^{\ell}(\rl, L^R))^R\).

Each final state \(P_{\rr}(u)\) of \(\cH^{r}(\rr, L)\) with \(u \in L\) is mapped to \(P_{\rl}(u^R)\), where \(u^R \in L^R\).
Therefore, \(P_{\rl}(u^R)\) is an initial state of \(\cH^{\ell}(\rl, L^R)\), i.e., a final state of \((\cH^{\ell}(\rl, L^R))^R\).

Now, note that, by Definition~\ref{def:left-const}, \(\cH^{\ell}(\rl, L^R)\) is a co-DFA, therefore \((\cH^{\ell}(\rl, L^R))^R\) is a DFA.
Let us show that \(q' = \delta(q, a)\) if and only if \(\varphi(q') = \widetilde{\delta}(\varphi(q),a)\), for all \(q, q' \in Q\) and \(a \in \Sigma\).
Assume that \(q = P_{\rr}(u)\) for some \(u \in \Sigma^*\), and \(q' = \delta(q, a)\) with \(a \in \Sigma\).
By Definition~\ref{def:right-const}, we have that \(q' = P_{\rr}(ua)\).
Then, \(\varphi(q) = P_{\rl}(u^R)\) and \(\varphi(q') = P_{\rl}(au^R)\).
Since \(\rl\) is a left congruence, using Lemma \ref{lemma:CongruencePbwComplete} we have that \(aP_{\rl}(u^R) \subseteq P_{\rl}(au^R)\).
Then, there is a transition in \(\cH^{\ell}(\rl, L^R)\) from state \(\varphi(q') = P_{\rl}(au^R)\) to state \(\varphi(q) = P_{\rl}(u^R)\) reading \(a\).
Hence, there exists the reverse transition in \((\cH^{\ell}(\rl, L^R))^R\), i.e., \(\varphi(q') = \widetilde{\delta}(\varphi(q),a)\).

Assume now that  \(\widetilde{q} = P_{\rl}(u^R)\) for some \(u \in \Sigma^*\), and \(\widetilde{q'} = \widetilde{\delta}(\widetilde{q}, a)\) with \(a \in \Sigma\).
By Definition~\ref{def:left-const}, we have that \(\widetilde{q'} = P_{\rl}(au^R)\).
Consider a state \(q \in Q\) such that \(\varphi(q) = \widetilde{q}\), then \(q\) is of the form \(P_{\rr}(u)\).
Likewise, consider a state \(q' \in Q\) such that \(\varphi(q') = \widetilde{q'}\), then \(q'\) is of the form \(P_{\rr}(ua)\).
Since \(P_{\rr}\) is a partition induced by a right congruence, using Lemma \ref{lemma:CongruencePbwComplete}, we have that \(P_{\rr}(u)a \subseteq P_{\rr}(ua)\) and thus, \(q' = \delta (q,a)\).
\end{proof}

\languageEqualPost*
\begin{proof}
For each \(u,v \in \Sigma^*\),
\begin{align*}
u \rrL v \Lra u \rrN v & \iff \quad\text{[By~\eqref{eq:Rlanguage} and~\eqref{eq:RState}]} \\
u^{-1}L = v^{-1}L \Lra \post_u^{\cN}(I) = \post_v^{\cN}(I) & \iff \quad \text{[Definition of quotient of \(L\)]} \\ 
W_{\post_{u}^{\cN}(I),F}^{\cN} = W_{\post_{v}^{\cN}(I),F}^{\cN} \Lra \post_u^{\cN}(I) = \post_v^{\cN}(I) & \tag*{\qedhere} \enspace .
\end{align*}
\end{proof}

\theoremF*

\begin{proof}\hfill
\begin{enumerate}[(a)]
\item \(\lang{\cF{r}(L)} = \lang{\cF{\ell}(L)} = L = \lang{\cG{r}(\cN)} = \lang{\cG{\ell}(\cN)}\).%

By Definition \ref{def:FG}, \(\cF{r}(L) = \cH^{r}(\rrL, L)\) and \(\cG{r}(\cN) = \cH^{r}(\rrN, L)\).
By Lemma \ref{lemma: HrGeneratesL}, \(\lang{\cH^{r}(\rrL, L)} = L = \lang{\cH^{r}(\rrN, L)}\).
Therefore, \(\lang{\cF{r}(L)} = \cG{r}(\cN) = L\).
The proof of \(\lang{\cF{\ell}(L)} = L = \lang{\cG{\ell}(\cN)}\) goes similarly using Lemma~\ref{lemma:HlgeneratesL}.

\item \(\cF{r}(L)\) is isomorphic to the minimal deterministic automaton for \(L\).%

Let \(P\) be the partition induced by \(\rrL\).
Recall that the automaton \(\cF{r}(L) = (Q, \Sigma, \delta, I, F)\)  is a complete DFA (see Remark \ref{remark:deterministic}).
Recall also that the \emph{quotient DFA} of \(L\), defined as \(\cD = (\widetilde{Q}, \Sigma, \eta, \widetilde{q}_0, \widetilde{F})\) where \(\widetilde{Q} = \{u^{-1}L \mid u \in \Sigma^*\}\), \(\eta(u^{-1}L, a) = a^{-1}(u^{-1}L) \) for each \(a \in \Sigma\), \(\widetilde{q}_0 = \varepsilon^{-1}L = L\) and \(\widetilde{F} = \{u^{-1}L \mid \varepsilon \in u^{-1}L\}\), is the minimal DFA for \(L\).
We will show that \(\cF{r}(L)\) is isomorphic to \(\cD\).

Let \(\varphi: \widetilde{Q} \rightarrow Q\) be the mapping assigning to each state \(\widetilde{q}_i \in \widetilde{Q}\) of the form \(u^{-1}L\), the state \(P(u) \in Q\), with \(u \in \Sigma^*\).
Note that, in particular, if \(\widetilde{q_i} \in \widetilde{Q}\) is the empty set, then \(\varphi\) maps \(\widetilde{q_i}\) to the block in \(P\) that contains all the words that are not prefixes of \(L\). 
We show that \(\varphi\) is a DFA isomorphism between \(\cD\) and \(\cF{r}(L)\).

The initial state \(\widetilde{q}_0 = \varepsilon^{-1}L\) of \(\cD\) is mapped to the state \(P(\varepsilon)\) which, by definition, is the unique initial state of \(\cF{r}(L)\).
Each final state \(u^{-1}L \in \widetilde{F}\) is mapped to the state \(P(u)\) with \(u \in L\) which, by definition, is a final state of \(\cF{r}(L)\).

We now show that \(\widetilde{q}_j = \eta(\widetilde{q}_i, a)\) if and only if \(\varphi(\widetilde{q}_j) = \delta(\varphi(\widetilde{q}_i),a)\), for all \(\widetilde{q}_i, \widetilde{q}_j \in \widetilde{Q}, a \in \Sigma\).
Assume that \(\widetilde{q}_i = u^{-1}L\) for some \(u \in \Sigma^*\) and \(\widetilde{q}_j = \eta(\widetilde{q}_i, a)\) where \(\widetilde{q}_j = a^{-1}(u^{-1}L)\) and \(a \in \Sigma\).
Note that \(a^{-1}(u^{-1}L) = \{x \in \Sigma^* \mid uax \in L\}\).
Then, \(\varphi(\widetilde{q}_i) = P(u)\) and \(\varphi(\widetilde{q}_j) = P(ua)\).
Since \(P\) is a partition induced by a right congruence, using Lemma~\ref{lemma:CongruencePbwComplete}, we have that \(P(u)a \subseteq P(ua)\).
Therefore, \(\varphi(\widetilde{q}_j)= \delta(\varphi(\widetilde{q}_i),a)\).

Assume now that \(P(ua) = \delta(P(u), a)\)  for some \(u \in \Sigma^*\) and \(a \in \Sigma\).
Consider \(\widetilde{q}_i \in \widetilde{Q}\) such that \(\varphi(\widetilde{q}_i) = P(u)\), then \(\widetilde{q}_i = u^{-1}L\).
Likewise,  consider \(\widetilde{q}_j \in \widetilde{Q}\) such that \(\varphi(\widetilde{q}_j)= P(ua)\), then \(\widetilde{q}_j=(ua)^{-1}L = a^{-1}(u^{-1})L\).
Therefore, \(\widetilde{q}_j = \eta (\widetilde{q}_i, a)\).

\item \(\cG{r}(\cN)\) is isomorphic to \(\cN^{D}\).

Recall that, given \(\cN = (Q, \Sigma, \delta, I, F)\), \(\cN^D\) denotes the DFA that results from applying the subset construction to \(\cN\) and removing all states that are not reachable.
Thus \(\cN^D\) possibly contains empty states but no state is unreachable.
Let \(\cN^D = (Q_{d}, \Sigma, \delta_d, \{I\}, F_d)\) and let \(\cG{r}(\cN) = (\widetilde{Q}, \Sigma, \widetilde{\delta}, \widetilde{I}, \widetilde{F})\).
Let \(P\) be the partition induced by \(\rrN\) and let \(\varphi: \widetilde{Q} \rightarrow Q_{d}\) be the mapping assigning to each state \(P(u) \in \widetilde{Q}\), the set \(\post_u^{\cN}(I) \in Q_{d}\) with \(u \in \Sigma^*\).
Note that if \(u \in \Sigma^*\) is not a prefix of \(\lang{\cN}\), then \(\varphi\) maps \(P(u)\) to \(\post_{u}^{\cN}(I) = \emptyset\).
We show that \(\varphi\) is a DFA isomorphism between \(\cG{r}(\cN)\) and \(\cN^{D}\).

The initial state of \(\cG{r}(\cN)\), \(P(\varepsilon)\), is mapped to \(\post_\varepsilon^{\cN}(I)= \{I\}\).
Therefore, \(\varphi\) maps the initial state of \(\cG{r}(\cN)\) to the initial state of \(\cN^{D}\).
Each final state of \(\cG{r}(\cN)\), \(P(u)\) with \(u \in L\), is mapped to \(\post_u^{\cN}(I)\).
Since \(\post_u^{\cN}(I) \cap F \neq \emptyset\), \(\post_u^{\cN}(I) \in \widetilde{F}\).

Now note that, by Remark~\ref{remark:deterministic}, \(\cG{r}(\cN)\) is a complete DFA, and by construction, so is \(\cN^D\).
Let us show that \(\widetilde{q'} = \widetilde{\delta}(\widetilde{q},a)\) if{}f \(\varphi(\widetilde{q'}) = \delta_d(\varphi(\widetilde{q}),a)\), for all \(\widetilde{q}, \widetilde{q'} \in \widetilde{Q}\) and \(a \in \Sigma\).
Assume that \(\widetilde{q} = P(u)\), for some \(u \in \Sigma^*\), and \(\widetilde{q'} = \widetilde{\delta}(\widetilde{q},a)\), with \(a \in \Sigma\).
By Definition~\ref{def:right-const}, we have that \(\widetilde{q'} = P(ua)\).
Then, \(\varphi(\widetilde{q}) = \post_u^{\cN}(I)\) and \(\varphi(\widetilde{q'}) = \post_{ua}^{\cN}(I) = \post_a^{\cN}(\post_u^{\cN}(I))\).
Therefore, \(\varphi(\widetilde{q'})  = \delta_d(\varphi(\widetilde{q'}), a)\).

Assume now that \(\delta_d(\post_u^{\cN}(I),a) =  \post_{ua}^{\cN}(I)\).
Consider \(\widetilde{q} \in \widetilde{Q}\) such that \(\varphi(\widetilde{q}) = \post_u^{\cN}(I)\), then \(\widetilde{q} = P(u)\).
Likewise, consider \(\widetilde{q'} \in \widetilde{Q}\) such that \(\varphi(\widetilde{q'}) = \post_{ua}^{\cN}(I)\), then \(\widetilde{q'} = P(ua)\).
Since \(P\) is a partition induced by a right congruence, using Lemma~\ref{lemma:CongruencePbwComplete}, we have that \(P(u)a \subseteq P(ua)\).
Therefore, \(\widetilde{q'} = \widetilde{\delta}(\widetilde{q},a)\).

\item \(\cF{\ell}(L)\) is isomorphic to \((\cF{r}(L^R))^R\)%

Observe that, for each \(u \in \Sigma^*\):
\begin{multline}\label{eq:um1L} 
(u^{-1}L)^R = \{x^R \in \Sigma^* \mid ux \in L\} = \{x^R \in \Sigma^* \mid x^R u^R \in L^R\} =\\ \{x' \in \Sigma^* \mid x' u^R \in L^R\} = L^R(u^R)^{-1} \enspace .
\end{multline}
Therefore,
\begin{align*}
u \rlL v & \Lra \quad \text{[By Definition ~\eqref{eq:Llanguage}]} \\
u^{-1}L = v^{-1}L & \Lra \quad \text{[\(x = y \Lra x^R = y^R\)]}\\
(u^{-1}L)^R = (v^{-1}L)^R & \Lra \quad\text{[By Equation ~\eqref{eq:um1L}]} \\
L^R(u^R)^{-1} = L^R(v^R)^{-1} & \Lra\quad\text{[By Definition ~\eqref{eq:Rlanguage}]} \\
u^R \rr_{L^R} v^R \enspace . & 
\end{align*}
Finally, it follows from Lemma~\ref{lemma:AutomataPartitionleftRightReverse} that \(\cF{\ell}(L)\) is isomorphic to \((\cF{r}(L^R))^R\).

\item \(\cG{\ell}(\cN)\) is isomorphic to \((\cG{r}(\cN^R))^R\).

For each \(u,v \in \Sigma^*\):
\begin{align*}
u \sim_{\cN^R}^{\ell} v & \Lra \quad\text{[By Defintion~\ref{def:automataEquiv}]}\\
\pre_u^{\cN^R}(F) = \pre_v^{\cN^R}(F) & \Lra\quad \text{[\(q \in \pre^{\cN^R}_{x}(F)\) if{}f \(q \in \post^{\cN}_{x^R}(I) \)]}\\
\post_{u^R}^{\cN}(I) = \post_{v^R}^{\cN}(I) & \Lra \quad\text{[By Definition~\ref{def:automataEquiv}]}\\
u^R \rlN v^R \enspace .
\end{align*}
It follows from Lemma~\ref{lemma:AutomataPartitionleftRightReverse} that \(\cG{\ell}(\cN)\) is isomorphic to \(\cG{r}(\cN^R))^R\).

\item \(\cG{r}(\cG{\ell}(\cN))\) is isomorphic to \(\cF{r}(L)\).

By Theorem~\ref{theoremF}~(\ref{lemma:language-F}), \(\cG{\ell}(\cN)\) is a co-deterministic automaton generating the language \(\lang{\cN}\).
Since \(\cG{\ell}(\cN)\) is co-deterministic, it satisfies Equation~\eqref{eq:RightSetsDisjoint} from Theorem~\ref{lemma:languageEqualPost}.
Therefore, \(\cG{r}(\cG{\ell}(\cN))\) is isomorphic to \(\cF{r}(\lang{\cG{\ell}(\cN)})=\cF{r}(\lang{\cN})\).
\qedhere
\end{enumerate}
\end{proof}

\atoms*
\begin{proof}
For each \(u \in \Sigma^*\), define \(L_u = \bigcap\limits_{\substack{u \in w^{-1}L \\w \in \Sigma^* }} w^{-1}L \bigcap\limits_{\substack{u \notin w^{-1}L \\w \in \Sigma^* }} (w^{-1}L)^c\).
First, we show that \(P_{\rlL}(u) \subseteq L_u\), for each \(u \in \Sigma^*\).
Let \(v \in P_{\rlL}(u)\), i.e., \(Lu^{-1} = Lv^{-1}\).
Then, for each \(w \in \Sigma^*\), \(u \in w^{-1}L \Lra wu \in L \Lra w \in Lu^{-1} \Lra w \in Lv^{-1} \Lra v \in w^{-1}L\).
Therefore, \(\forall v \in P_{\rlL}(u), \; v \in L_u\) and thus, \(P_{\rlL}(u) \subseteq L_u\).

Next, we show that \(L_u \subseteq P_{\rlL}(u)\).
Let \(v \in L_u\).
Then,  \(\forall w \in \Sigma^*\), \(u \in w^{-1}L \Lra v \in w^{-1}L\).
It follows that \(w \in Lu^{-1} \Lra w \in Lv^{-1}\) and, therefore, \(v \in P_{\rlL}(u)\).
\end{proof}

\lemmacoatoms*
\begin{proof}
For each \(u \in \Sigma^*\), define \(L_u = \bigcap\limits_{\substack{u \in Lw^{-1} \\w \in \Sigma^* }} Lw^{-1} \bigcap\limits_{\substack{u \notin Lw^{-1} \\w \in \Sigma^* }} (Lw^{-1})^c\).
First, we show that \(P_{\rrL}(u) \subseteq L_u\), for each \(u \in \Sigma^*\).
Let \(v \in P_{\rrL}(u)\), i.e., \(u^{-1}L = v^{-1}L\).
Then, for each \(w \in \Sigma^*\), \(u \in Lw^{-1} \Lra uw \in L \Lra w \in u^{-1}L \Lra w \in v^{-1}L \Lra v \in Lw^{-1}\).
Therefore, \(\forall v \in P_{\rrL}(u), \; v \in L_u\) and thus, \(P_{\rrL}(u) \subseteq L_u\).

Next, we show that \(L_u \subseteq P_{\rrL}(u)\).
Let \(v \in L_u\).
Then,  \(\forall w \in \Sigma^*\), \(u \in Lw^{-1} \Lra v \in Lw^{-1}\).
It follows that \(w \in u^{-1}L \Lra w \in v^{-1}L\) and, therefore, \(v \in P_{\rrL}(u)\).
\end{proof}

\lemmafixpointcoatoms*
\begin{proof}
Let \(\Sigma^{\leq n}\) (resp.  \(\Sigma^{n}\)) denote the set of words with length up to \(n\) (resp. exactly \(n\)), i.e., \(\Sigma^{\leq n }\ud \{w \in \Sigma^* \mid |w| \leq n\}\) (resp. \(\Sigma^{ n }\ud \{w \in \Sigma^* \mid |w| = n\}\)).
Let us denote \(X^{n}\), the \(n\)-th iteration of the greatest fixpoint computation of Equation~\eqref{eq:fixpointcoatoms}.
 We will prove by induction on \(n\) that the following equation holds for each \(n \geq 0\):
\begin{equation}\label{eq:fixpoint-induction}
X^{n{+}1} = {\bigcurlywedge\limits_{\substack{a \in \Sigma, B \in X^{n}}}} \{Ba^{-1}, (Ba^{-1})^c\} \rc \{L, L^c\} = \bigcurlywedge_{w \in \Sigma^{\leq n}}\{Lw^{-1},(Lw^{-1})^c\} \enspace .
\end{equation}
\pagebreak
\begin{itemize}
\item \emph{Base case:}
Let \(n = 0\). It is easy to see that the Equation~\eqref{eq:fixpoint-induction} holds since \(\{L, L^c\} = \{L\varepsilon^{-1},(L\varepsilon^{-1})^c\}\).
Now, let \(n =1\). Then,
\begin{align*}
{\bigcurlywedge\limits_{\substack{a \in \Sigma, B \in X^{1}}}} \{Ba^{-1}, (Ba^{-1})^c\} \rc \{L, L^c\} &=\quad[X^{1} = \{L, L^c\}]\\
\bigcurlywedge_{a \in \Sigma}\left(\{La^{-1}, (La^{-1})^c\} \rc \{(L^c)a^{-1}, ((L^c)a^{-1})^c\}\right) \rc \{L, L^c\} &=\quad[(La^{-1})^c = L^ca^{-1}] \\
\bigcurlywedge_{a \in \Sigma}\{La^{-1}, (La^{-1})^c\} \rc \{L, L^c\} &=\quad[\Sigma^{\leq 1} = \{\varepsilon\} \cup \Sigma] \\
\bigcurlywedge_{a \in \Sigma, w \in \Sigma^{\leq 1}}\{Lw^{-1},(Lw^{-1})^c\} \enspace . &
\end{align*}
\item \emph{Inductive Step:}
Let us assume that Equation~\eqref{eq:fixpoint-induction} holds for each \(n \leq k\).
We will prove that it holds for \(n = k+1\).
Note that, using the inductive hypothesis twice, we have that: 
\begin{align}
X^{k+1} = \bigcurlywedge_{w \in \Sigma^{\leq k}}\{Lw^{-1},(Lw^{-1})^c\} &= \nonumber\\
\bigcurlywedge_{w \in \Sigma^{\leq k-1}}\{Lw^{-1},(Lw^{-1})^c\} \curlywedge \bigcurlywedge_{a \in \Sigma, w \in \Sigma^{k-1}} \{Lw^{-1}a^{-1},(Lw^{-1}a^{-1})^c\} &= \nonumber\\
X^{k} \curlywedge \bigcurlywedge_{w \in \Sigma^{k}} \{Lw^{-1},(Lw^{-1})^c\} \label{eq:Xk+1}\enspace .
\end{align}
Using Equation~\eqref{eq:Xk+1}, the identities \((La^{-1})^c = L^ca^{-1}\) and \(Ba^{-1}\cap \widetilde{B}a^{-1} = (B \cap \widetilde{B})a^{-1}\) and the induction hypothesis, it follows that:
\begin{adjustwidth}{-0.5cm}{}
\begin{align*}
X^{k+2} =
{\bigcurlywedge_{a \in \Sigma, B \in X^{k+1}}} \{Ba^{-1}, (Ba^{-1})^c\} \rc \{L, L^c\} &=\\
{\bigcurlywedge_{a \in \Sigma, B \in X^{k}}} \{Ba^{-1}, (Ba^{-1})^c\} \rc {\bigcurlywedge_{a \in \Sigma, B \in \bigcurlywedge_{w \in \Sigma^{k}} \{Lw^{-1},(Lw^{-1})^c\}}} \{Ba^{-1}, (Ba^{-1})^c\}\rc \{L, L^c\} &= \\
\bigcurlywedge_{w \in \Sigma^{\leq k}}\{Lw^{-1},(Lw^{-1})^c\} \rc \bigcurlywedge_{w \in \Sigma^{k+1}}\{Lw^{-1},(Lw^{-1})^c\} \rc \{L, L^c\} &=\\
\bigcurlywedge_{w \in \Sigma^{\leq k+1}}\{Lw^{-1},(Lw^{-1})^c\} \enspace . &
\end{align*}
\end{adjustwidth}
We conclude that \(P_{\rrL} = \gfp(\lambda X. {\bigcurlywedge\limits_{\substack{a \in \Sigma, B \in X}}} \{Ba^{-1}, (Ba^{-1})^c\} \rc \{L, L^c\})\).\qedhere
\end{itemize}
\end{proof}

\theoremMoore*
\begin{proof}
Let \(\cD = (Q', \Sigma, \delta', I', F')\).
Recall that Moore's minimal DFA is defined as \(M = (Q, \Sigma, \delta, I, F)\) where the set of states corresponds to Moore's state-partition w.r.t. \(\cD\), i.e., \(Q = \cQ^{\cD}\); \(I = \{\cQ^{\cD}(q) \mid q \in I'\}\); \(F = \{\cQ^{\cD}(q) \mid q \in F'\}\) and \(S' = \delta(S,a)\) if{}f \(\exists q \in S, q' \in S': q' = \delta'(q,a)\), for each \(S, S' \in Q\) and \(a \in \Sigma\).
Let \(\cF{r}(\lang{\cD}) = (\widetilde{Q}, \Sigma, \widetilde{\delta}, \widetilde{I}, \widetilde{F})\) be described as in Definition \ref{def:FG}.
Finally, let \(L\) denote \(\lang{\cD}\), for simplicity.
By Theorem~\ref{theorem:postisomorphism}, the mapping \(\varphi:  \wp(Q') \ra \wp(\Sigma^*)\) defined as \(\varphi(S) = W_{I',S}^{\cD}\), for each \(S \in \cQ^{\cD}\), is a partition isomorphism between \(\cQ^{\cD}\) and \(P_{\rrL}\).
Note that, by construction of \(M\), \(W_{I,S}^{M} = W_{I',S}^{\cD}\), for each \(S \in \cQ^{\cD}\).
Thus, the mapping \(\psi: Q \ra \widetilde{Q}\) defined as \(\psi(S) = W_{I,S}^{M} \), for each \(S \in Q\), is also a partition isomorphism between \(\cQ^{\cD}\) and \(P_{\rrL}\).
In fact, we will show that \(\psi\) is a DFA morphism between \(M\) and \(\cF{r}(L)\).

The initial state \(I\) of \(M\) is mapped to \(\psi(I) = W_{I, I}^{M} = P(\varepsilon)\), since \(\varepsilon \in W_{I, I}^{M} \).
Therefore, \(\psi\) maps the initial state of \(M\) with the initial state of \(\cF{r}(L)\).
Note that each final state \(S\) in \(F\) is such that \(S \subseteq F' \).
Therefore, \(\psi(S) = W_{I,S}^{M} = P(u)\) with \(u \in L\), i.e., \(\psi\) maps each final state of \(M\) to a final state of \(\cF{r}(L)\).

We also have to show that \(S' = \delta(S,a)\) if{}f \(\psi(S') = \widetilde{\delta}(\psi(S),a)\), for all \(S,S' \in Q\) and \(a \in \Sigma\).
Assume that \(S' = \delta(S,a)\), for some \(S,S' \in Q\) and \(a \in \Sigma\).
Therefore, there exists \(q,q' \in Q'\) such that \(q \in S, q' \in S'\) and \(q' = \delta'(q,a)\).
Then, \(\psi(S) = W_{I,S}^{M}\) and \(\psi(S') = W_{I,S'}^{M}\) and there exists \(u \in W_{I,S}(M)\) such that \(ua \in W_{I,S'}^{M}\) (recall that \(M\) is a DFA and therefore complete).
Then, \(\psi(S) = P(u)\) and \(\psi(S') = P(ua)\).
Since \(P\) is a partition induced by a right congruence then, using Lemma~\ref{lemma:CongruencePbwComplete}, \(P(u)a \subseteq P(ua)\).
Therefore, \(\psi(S') = \widetilde{\delta}(\psi(S),a)\).
Assume now that, \(P(ua) = \widetilde{\delta}(P(u),a)\) for some \(u\in \Sigma^*\) and \(a\in \Sigma\).
Consider \(S \in Q\) such that \(\psi(S) = P(u)\), then \(u\) belongs to the left language of \(S\), i.e., \(u \in W_{I,S}^{M}\).
Likewise, consider \(S' \in Q\) such that \(\psi(S') = P(ua)\), then \(ua \in W_{I, S'}^{M}\).
Therefore, there exists \(q,q' \in Q'\) such that \(q \in S, q' \in S'\) and \(q' = \delta'(q,a)\).
Thus, \(S' = \delta(S,a)\).
\end{proof}
Finally we prove the next two results related to Definitions \ref{def:language} and \ref{def:automataEquiv} in Section \ref{sec:congruences}.

\begin{lemma}
\label{lemma:lang-cong}
Let \(L \subseteq \Sigma^*\) be a regular language.
Then, the following holds:
\begin{romanenumerate}%
\item \(\rrL\) is a right congruence; 
\item \(\rlL\) is a left congruence; and
\item \(P_{\rrL}(L) = L = P_{\rlL}(L)\).
\end{romanenumerate}
\end{lemma}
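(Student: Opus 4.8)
The plan is to prove each of the three items directly from the definitions of the quotient operations, reducing items~(i) and~(ii) to two elementary identities relating the quotient of a concatenation to an iterated quotient, and reducing item~(iii) to the observation that membership in \(L\) is detected by the presence of \(\varepsilon\) in the relevant quotient.

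For item~(i), I would fix \(u \rrL v\), so that \(u^{-1}L = v^{-1}L\) by~\eqref{eq:Rlanguage}, and show \(ua \rrL va\) for every \(a \in \Sigma\). The only computation needed is the identity \((ua)^{-1}L = a^{-1}(u^{-1}L)\), which follows immediately from \(x \in (ua)^{-1}L \iff uax \in L \iff ax \in u^{-1}L \iff x \in a^{-1}(u^{-1}L)\). Applying the same identity to \(v\) and using \(u^{-1}L = v^{-1}L\) then gives \((ua)^{-1}L = a^{-1}(u^{-1}L) = a^{-1}(v^{-1}L) = (va)^{-1}L\), i.e. \(ua \rrL va\). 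Item~(ii) is the mirror image: for \(u \rlL v\) (so \(Lu^{-1} = Lv^{-1}\) by~\eqref{eq:Llanguage}) and \(a \in \Sigma\), I would use the dual identity \(L(au)^{-1} = (Lu^{-1})a^{-1}\), which holds because \(x \in L(au)^{-1} \iff x(au) \in L \iff (xa)u \in L \iff xa \in Lu^{-1} \iff x \in (Lu^{-1})a^{-1}\), and then conclude \(au \rlL av\) exactly as before.

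For item~(iii), I would establish \(P_{\rrL}(L) = L\) by proving \(P_{\rrL}(u) \subseteq L\) for each \(u \in L\); the reverse inclusion \(L \subseteq P_{\rrL}(L)\) is immediate from reflexivity. The key remark is that \(u \in L \iff \varepsilon \in u^{-1}L\). Hence if \(u \in L\) and \(v \rrL u\), then \(\varepsilon \in u^{-1}L = v^{-1}L\), so \(v \in L\), which gives \(P_{\rrL}(u) \subseteq L\) and therefore \(P_{\rrL}(L) = L\). The left case is symmetric, using \(u \in L \iff \varepsilon \in Lu^{-1}\) together with \(Lu^{-1} = Lv^{-1}\) to conclude \(v \in L\), so that \(P_{\rlL}(L) = L\).

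None of the steps is a genuine obstacle; the only point requiring care is keeping track of the direction of the quotients, since \((ua)^{-1}L\) equals \(a^{-1}(u^{-1}L)\) rather than \(u^{-1}(a^{-1}L)\), and dually for the left quotient. As an alternative that avoids duplicating the argument, one could prove only the right-hand statements and then obtain the left-hand ones through the reversal correspondence \(u \rlL v \iff u^R \sim^{r}_{L^R} v^R\), under which left congruences for \(L\) become right congruences for \(L^R\); I would nonetheless keep the direct computation for the sake of a self-contained proof.
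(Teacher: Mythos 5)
Your proposal is correct and follows essentially the same route as the paper's proof: items~(i) and~(ii) via the quotient identities \((ua)^{-1}L = a^{-1}(u^{-1}L)\) and \(L(au)^{-1} = (Lu^{-1})a^{-1}\) (the paper states them for an arbitrary word \(x \in \Sigma^*\) in place of \(a\), which is immaterial), and item~(iii) via the observation that membership in \(L\) is equivalent to \(\varepsilon\) belonging to the corresponding quotient, plus reflexivity for the easy inclusion. No gaps; the reversal-duality shortcut you mention at the end is also sound but, as you note, unnecessary.
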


\begin{proof}
Let us prove that \(\rrL\) is a right congruence.
Assume \(u \rrL v\), i.e., \(u^{-1}L = v^{-1}L\).
Given \(x \in \Sigma^*\), we have that,
\[(ux)^{-1}L = x^{-1}(u^{-1}L) = x^{-1}(v^{-1}L) = (vx)^{-1}L \enspace .\]
Therefore, \(ux \rrL vx\).

Now, let us prove that \(\rlL\) is a left congruence.
Assume \(u \rlL v\), i.e., \(Lu^{-1} = Lv^{-1}\).
Given \(x \in \Sigma^*\), we have that,
\[L(xu)^{-1} = (Lu^{-1})x^{-1} =  (Lv^{-1})x^{-1} = L(xv)^{-1} \enspace .\]
Therefore, \(xu \rrL xv\).

Finally, let \(P_{\rrL}\) be the finite partition induced by \(\rrL\).
We show that \(P_{\rrL}(L) = L\).
First note that \(L \subseteq P_{\rrL}(L)\) by the reflexivity of the equivalence relation \(\rrL\).
On the other hand, we prove that for every \(u \in \Sigma^*\), if \(u \in P_{\rrL}(L)\) then \(u \in L\).
By hypothesis, there exists \(v \in L\) such that \(u \rrL v\), i.e., \(u^{-1}L = v^{-1}L\).
Since \(v \in L\) then \(\varepsilon \in v^{-1}L\).
Therefore, \(\varepsilon \in u^{-1}L\) and we conclude that \(u \in L\).

The proof of \(P_{\rlL}(L) = L\) goes similarly.
\end{proof}

\begin{lemma}
\label{lemma:automata-cong}
Let \(\cN\) be an NFA.
Then, the following holds:
\begin{romanenumerate}%
\item \(\rrN\) is a right congruence; 
\item \(\rlN\) is a left congruence; and
\item \(P_{\rrN}(\lang{\cN}) = \lang{\cN} = P_{\rlN}(\lang{\cN})\).
\end{romanenumerate}
\end{lemma}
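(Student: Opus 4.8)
The plan is to mirror the proof of Lemma~\ref{lemma:lang-cong}, replacing the quotient-manipulation identities used there by the compositional behaviour of the operators $\post$ and $\pre$ under the extended transition function. The single fact driving the whole argument is that, for every $u \in \Sigma^*$ and $a \in \Sigma$,
\[
\post_{ua}^{\cN}(I) = \post_a^{\cN}(\post_u^{\cN}(I)) \quad\text{and}\quad \pre_{au}^{\cN}(F) = \pre_a^{\cN}(\pre_u^{\cN}(F)) \enspace,
\]
both of which follow directly from $\hat\delta(q,xy) = \hat\delta(\hat\delta(q,x),y)$ by unfolding the definitions of $\post$ and $\pre$. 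Note the asymmetry in where the fresh symbol is concatenated: on the \emph{right} for $\post$ (producing $ua$) and on the \emph{left} for $\pre$ (producing $au$). This is exactly what makes $\rrN$ a right congruence and $\rlN$ a left one.

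For (i), I would assume $u \rrN v$, that is $\post_u^{\cN}(I) = \post_v^{\cN}(I)$, and apply the $\post$-identity to both sides to obtain $\post_{ua}^{\cN}(I) = \post_a^{\cN}(\post_u^{\cN}(I)) = \post_a^{\cN}(\post_v^{\cN}(I)) = \post_{va}^{\cN}(I)$, whence $ua \rrN va$. For (ii) the argument is dual: from $\pre_u^{\cN}(F) = \pre_v^{\cN}(F)$ the $\pre$-identity yields $\pre_{au}^{\cN}(F) = \pre_a^{\cN}(\pre_u^{\cN}(F)) = \pre_a^{\cN}(\pre_v^{\cN}(F)) = \pre_{av}^{\cN}(F)$, i.e. $au \rlN av$.

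For (iii), I would first record the membership characterisations $w \in \lang{\cN} \Lra \post_w^{\cN}(I) \cap F \neq \emptyset \Lra \pre_w^{\cN}(F) \cap I \neq \emptyset$, both of which are immediate from $\lang{\cN} = W_{I,F}^{\cN}$ and the definitions of $\post$ and $\pre$. The inclusion $\lang{\cN} \subseteq P_{\rrN}(\lang{\cN})$ (and likewise for $\rlN$) is just reflexivity of the equivalence. For the reverse inclusion, take $u \in P_{\rrN}(\lang{\cN})$, so there is $v \in \lang{\cN}$ with $\post_u^{\cN}(I) = \post_v^{\cN}(I)$; since $\post_v^{\cN}(I) \cap F \neq \emptyset$ we get $\post_u^{\cN}(I) \cap F \neq \emptyset$, i.e. $u \in \lang{\cN}$. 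The same reasoning, using the $\pre$-and-$I$ characterisation in place of the $\post$-and-$F$ one, settles $P_{\rlN}(\lang{\cN}) = \lang{\cN}$.

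I do not expect any genuine obstacle: once the two compositional identities are established, every step is a routine substitution. The only point requiring care is the left/right bookkeeping flagged above, so that (ii) really delivers a \emph{left} congruence and (iii) tests membership through the \emph{initial} states $I$ (rather than $F$) when working with $\pre$; getting these directions straight is the whole content of the lemma.
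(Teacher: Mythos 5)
Your proof is correct and takes essentially the same route as the paper's: parts (i) and (ii) follow from the composition identities for \(\post\) and \(\pre\), and part (iii) from reflexivity plus the membership characterizations through \(F\) (for \(\rrN\)) and \(I\) (for \(\rlN\)). One point in your favour: your identity \(\pre_{au}^{\cN}(F) = \pre_a^{\cN}(\pre_u^{\cN}(F))\) has the composition in the correct order, whereas the paper's proof writes \(\pre^{\cN}_{xu}(F) = \pre^{\cN}_{u}(\pre^{\cN}_{x}(F))\) --- a slip, since that expression equals \(\pre^{\cN}_{ux}(F)\) --- so your write-up is in effect the corrected form of the same argument.
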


\begin{proof}
Let us prove that \(\rrN\) is a right congruence.
Assume \(u \rrN v\), i.e., \(\post^{\cN}_{u}(I) = \post^{\cN}_{v}(I)\).
Given \(x \in \Sigma^*\), we have that,
\[\post^{\cN}_{ux}(I) = \post^{\cN}_{x}(\post^{\cN}_{u}(I)) = \post^{\cN}_{x}(\post^{\cN}_{v}(I)) = \post^{\cN}_{vx}(I) \enspace .\]
Therefore, \(ux \rrN vx\).

Now, let us prove that \(\rlN\) is a left congruence.
Assume \(u \rlN v\), i.e., \(\pre^{\cN}_{u}(F) = \pre^{\cN}_{v}(F)\).
Given \(x \in \Sigma^*\), we have that,
\[\pre^{\cN}_{xu}(F) = \pre^{\cN}_{u}(\pre^{\cN}_{x}(F)) =  \pre^{\cN}_{v}(\pre^{\cN}_{x}(F)) = \pre^{\cN}_{xv}(F) \enspace .\]
Therefore, \(xu \rrN xv\).

Finally, \(P_{\rrN}\), the finite partition induced by \(\rrN\).
We show that \(P_{\rrN}(\lang{\cN}) = \lang{\cN}\).
First note that \(L \subseteq P_{\rrN}(\lang{\cN})\) by the reflexivity of the equivalence relation \(\rrN\).
On the other hand, we prove that for every \(u \in \Sigma^*\), if \(u \in P_{\rrN}(\lang{\cN})\) then \(u \in \lang{\cN}\).
By hypothesis, there exists \(v \in \lang{\cN}\) such that \(u \rrN v\), i.e., \(\post^{\cN}_{u}(I) = \post^{\cN}_{v}(I)\).
Since \(v \in \lang{\cN}\) then \(\post^{\cN}_{v} \cap~ F \neq \emptyset\).
Therefore, \(\post^{\cN}_{u} \cap~ F \neq \emptyset\) and we conclude that \(u \in L\).

The proof of \(P _{\rlN}(L) = L\) goes similarly.
\end{proof}

\end{document}